\theoremstyle{plain}
\newtheorem{theorem}{Theorem}[section]
\newtheorem{lemma}[theorem]{Lemma}
\newtheorem{proposition}[theorem]{Proposition}
\newtheorem{corollary}[theorem]{Corollary}
\theoremstyle{definition}
\newtheorem{definition}[theorem]{Definition}
\newtheorem{remark}[theorem]{Remark}
\DeclareMathOperator*\poly{poly}
\newcommand\defeq{\ensuremath{\stackrel{\rm def}{=}}} 
\DeclareMathOperator*{\argmin}{arg\,min}
\newcommand{\ind}[1]{^{(#1)}}
\newcommand\floor[1]{\lfloor#1\rfloor}
\newcommand\sar{^*}
\newcommand{\E}{\ensuremath{\mathbb{E}}} 
\newcommand{\ceil}[1]{\lceil {#1}\rceil}
\newcommand\citet{\cite}
\newcommand\citep{\cite}
\newcommand\SC{\textnormal{SC}}
\newcommand\MSSC{\textnormal{MSSC}}
\newcommand\WMSSC{\textnormal{WMSSC}}
\newcommand\bfp{\textbf{p}}
\newcommand\Kaariainen{K\"a\"ari\"ainen}
\newcommand\OPT{\textnormal{OPT}}
\newcommand\DT{\textnormal{DT}}
\begin{document}
\title{A Tight Analysis of Greedy Yields Subexponential Time Approximation for Uniform Decision Tree}
\author{Ray Li\thanks{Department of Computer Science, Stanford University.  Research supported by the National Science Foundation Graduate Research Fellowship Program under Grant No. DGE - 1656518. Email: \texttt{rayyli@cs.stanford.edu}} ,
Percy Liang\thanks{Department of Computer Science, Stanford University. Email: \texttt{pliang@cs.stanford.edu}} ,
Stephen Mussmann\thanks{Department of Computer Science, Stanford University. Research supported by the National Science Foundation Graduate Research Fellowship Program under Grant No. DGE - 1656518. Email: \texttt{mussmann@stanford.edu}}}
\date{\today}
\maketitle

\begin{abstract}
\textsc{Decision~Tree} is a classic formulation of active learning: given $n$ hypotheses with nonnegative weights summing to 1 and a set of tests that each partition the hypotheses, output a decision tree using the provided tests that uniquely identifies each hypothesis and has minimum (weighted) average depth. Previous works showed that the greedy algorithm achieves a $O(\log n)$ approximation ratio for this problem and it is NP-hard beat a $O(\log n)$ approximation, settling the complexity of the problem. 

However, for \textsc{Uniform~Decision~Tree}, i.e.\ \textsc{Decision~Tree} with uniform weights, the story is more subtle. The greedy algorithm's $O(\log n)$ approximation ratio was the best known, but the largest approximation ratio known to be NP-hard is $4-\varepsilon$. We prove that the greedy algorithm gives a $O(\frac{\log n}{\log C_{\OPT}})$ approximation for \textsc{Uniform~Decision~Tree}, where $C_{\OPT}$ is the cost of the optimal tree and show this is best possible for the greedy algorithm. As a corollary, we resolve a conjecture of Kosaraju, Przytycka, and Borgstrom \cite{kosaraju1999optimal}. Our results also hold for instances of \textsc{Decision~Tree} whose weights are not too far from uniform. Leveraging this result, for all $\alpha\in(0,1)$, we exhibit a $\frac{9.01}{\alpha}$ approximation algorithm to \textsc{Uniform~Decision~Tree} running in subexponential time $2^{\tilde O(n^\alpha)}$. As a corollary, achieving any super-constant approximation ratio on \textsc{Uniform~Decision~Tree} is not NP-hard, assuming the Exponential Time Hypothesis. This work therefore adds approximating \textsc{Uniform~Decision~Tree} to a small list of natural problems that have subexponential time algorithms but no known polynomial time algorithms. Like the analysis of the greedy algorithm, our analysis of the subexponential time algorithm gives similar approximation guarantees even for slightly nonuniform weights.
A key technical contribution of our work is showing a connection between greedy algorithms for \textsc{Uniform~Decision~Tree} and for \textsc{Min~Sum~Set~Cover}. 
\end{abstract}

\section{Introduction}

In \textsc{Decision~Tree} (also known as \textsc{Split Tree}), one is given $n$ hypotheses with nonnegative weights $p_1,\dots,p_n$ summing to 1 and a set of $m$ $K$-ary tests that each partition the hypotheses, and must output a decision tree using the provided tests that uniquely identifies each hypothesis and has minimum (weighted) average depth.\footnote{We require such a decision tree always exists in a valid \textsc{Decision~Tree instance}.}
\textsc{Decision~Tree} is a classic problem that arises naturally in active learning \citep{dasgupta2004analysis,nowak2011geometry,guillory2009average} and hypothesis identification \citep{moret1982decision}. 
Active learning with a well-specified and finite hypothesis class with noiseless tests is precisely \textsc{Decision~Tree} where the tests are data points and the answers are their labels.
\textsc{Decision~Tree} was first proved to be NP-hard by Hyafil and Rivest~\cite{HyafilR76}.
Since then, a large number works have provided algorithms for this question
\citep{GareyG74, Loveland85,kosaraju1999optimal,dasgupta2004analysis,chakaravarthy2007decision,chakaravarthy2009approximating,guillory2009average,gupta2010approximation,cicalese2010greedy, AdlerH12}.

A natural algorithm for \textsc{Decision~Tree} is the greedy algorithm, which creates a decision tree by iteratively choosing the test that most evenly splits the set of remaining hypotheses.
For binary tests ($K=2$), there is a natural notion of ``most even split,'' but for $K>2$, there are multiple possible definitions (see discussion in Section~\ref{sec:2}).
It is well known that the greedy algorithm achieves an $O(\log n)$ approximation ratio for \textsc{Decision~Tree} assuming all weights are at least $\frac{1}{\poly(n)}$. 
It was first shown for binary tests and uniform weights ($p_h = 1/n$ for all $h$) \citep{kosaraju1999optimal,AdlerH12}, then $K$-ary tests \citep{chakaravarthy2009approximating}, and finally, general non-uniform weights \citep{guillory2009average}. Furthermore, it is NP-hard to achieve a $o(\log n)$ approximation ratio for \textsc{Decision~Tree} \citep{chakaravarthy2007decision}, settling the complexity of approximating \textsc{Decision~Tree}.

However, there are still gaps in our knowledge.
For \textsc{Uniform~Decision~Tree}, i.e.\ \textsc{Decision~Tree} with uniform weights, the $O(\log n)$ approximation given by the greedy algorithm was previously the best known approximation achievable in polynomial time.
Chakaravarthy et al. \citep{chakaravarthy2007decision} proved that it is NP-hard to give a $(4-\varepsilon)$ approximation, giving the best known hardness of approximation result, and they asked whether the gap between the best approximation and hardness results could be improved.
Previously, it was not even known whether the greedy algorithm could beat the $O(\log n)$ approximation ratio in previous analyses: the best lower bound on the greedy algorithm's approximation ratio is $\Omega(\frac{\log n}{\log \log n})$ \citep{kosaraju1999optimal,dasgupta2004analysis}. 
In the setting where the optimal solution to \textsc{Uniform~Decision~Tree} has cost $O(\log n)$, Kosaraju et al. \cite{kosaraju1999optimal} showed that the greedy algorithm indeed gives an $O(\frac{\log n}{\log\log n})$ approximation,
and they conjectured that the greedy algorithm gives an $O(\frac{\log n}{\log\log n})$ approximation in general.

For an extended discussion of related works, see Section~\ref{sec:rel}.
\subsection{Our contributions} 
We summarize the main contributions of our work below. The approximation guarantees of our algorithms are captured in Figure~\ref{fig:results}.
\begin{itemize}
\item 
\textbf{Greedy algorithm.}
We give a new analysis of the greedy algorithm, showing that it gives an $O(\frac{\log n}{\log C_{\OPT}} + \log\frac{p_{\max}}{p_{\min}})$ approximation for \textsc{Decision~Tree}, where $C_{\OPT}$ is the cost of the optimal tree, $p_{\max}=\max_h p_h$, and $p_{\min}=\min_hp_h$.
This implies an $O(\frac{\log n}{\log C_{\OPT}})$ approximation for instances of \textsc{Uniform~Decision~Tree} and of \textsc{Decision~Tree} whose weights are close to uniform.
As $C_{\OPT}\ge\log_K n$ always, this proves the conjecture of Kosaraju et al. \cite{kosaraju1999optimal}.

\item
\textbf{Subexponential time algorithm.}
Leveraging the above greedy analysis, for $\alpha<1$, we give a subexponential\footnote{Throughout this work, subexponential means $2^{n^\alpha}$ for some absolute $\alpha\in(0,1)$. We make a distinction when referring to $2^{n^{o(1)}}$ runtimes.} $2^{\tilde O(n^{\alpha})}$-time $\frac{9.01}{\alpha}$ approximation algorithm for \textsc{Uniform~Decision~Tree}. 
Assuming the Exponential Time Hypothesis (ETH) \citep{ImpagliazzoP01, ImpagliazzoPZ01}\footnote{ETH states that there are no $2^{n^{o(1)}}$ time algorithms for 3SAT.}, this algorithm implies that any superconstant approximation of \textsc{Uniform~Decision~Tree} is \emph{not} NP-hard.
Our work adds approximating \textsc{Uniform~Decision~Tree} to a small list of natural problems whose time complexity is known to be subexponential (and, for some approximation ratios, $2^{n^{o(1)}}$) but not known to be polynomial.
Examples of such problems include \textsc{Factoring} \citep{lenstra1993number}, \textsc{Unique~Games} \citep{Khot02a,AroraBS15}, \textsc{Graph~Isomorphism} \citep{Babai16}, and approximating \textsc{Nash Equilibrium} \citep{LiptonMM03,Rubinstein18}, with the later two having $2^{n^{o(1)}}$-time algorithms.
Like in our analysis of the greedy algorithm, our subexponential time algorithm gives a similar approximation guarantee even for slightly nonuniform weights, in particular when $\frac{p_{\max}}{p_{\min}}\le 2^{O(1/\alpha)}$.

\item \textbf{Approximation ratio tightness.}
We prove that the $O(\frac{\log n}{\log C_{\OPT}})$ approximation ratio for the greedy algorithm is tight for \textsc{Uniform~Decision~Tree}.
We also prove that the $O(\log \frac{p_{\max}}{p_{\min}})$ term in the approximation ratio for the greedy algorithm is necessary, in the sense that no algorithm can give a $o(\log \frac{p_{\max}}{p_{\min}})$ approximation for \textsc{Decision~Tree} when $\frac{p_{\max}}{p_{\min}} = n^r$ for some $r\in(0,1)$ unless P=NP.

\item
\textbf{Repeatable, noisy tests.} \Kaariainen \cite{kaariainen2006active} provides a method to convert a solution for \textsc{Decision Tree} into a solution for a variant of \textsc{Decision~Tree} that handles noisy, repeatable tests. An immediate corollary of our result for the greedy algorithm is that the cost of a solution for the noisy problem derived from the greedy algorithm is at most $C_{\OPT}\cdot O(\log n \log \log n)$. 
Previously, this cost was bounded by $C_{\OPT}\cdot O(\log^2 n \log \log n)$. 

\end{itemize}

\subsection{Techniques}
Our work gives a new analysis of the greedy algorithm for \textsc{Decision~Tree}.
A key technical contribution of this work is to leverage upper bounds of \textsc{Min~Sum~Set~Cover} and \textsc{Set~Cover} for \textsc{(Uniform)~Decision~Tree}.
Previously, only connections in the reverse direction (i.e. lower bounds) were known between these problems: NP-hardness of attaining a $4-\varepsilon$-approximation for \textsc{Uniform~Decision~Tree} was proved by reduction from \textsc{Min~Sum~Set~Cover}, and NP-hardness of attaining a $o(\log n)$ approximation for \textsc{Decision~Tree} was proved by reduction from \textsc{Set~Cover} \cite{chakaravarthy2007decision}.

At a high level, our analysis goes as follows.
By a simple double counting argument, we can compute the cost of a tree by summing the ``weights'' of the tree's interior vertices, rather than summing the depths of the hypotheses.
However, rather than accounting for all the interior vertices at once, we separately analyze the vertices with ``imbalanced'' splits and those with ``balanced'' splits.
Carefully choosing the definition of balanced and imbalanced is a key idea of the proof: previous analyses of the greedy algorithm \cite{kosaraju1999optimal, chakaravarthy2007decision, guillory2009average, AdlerH12} either make no distinction between interior vertices or use a different distinction.
A global entropy argument accounts for the vertices with balanced splits.
For the vertices with imbalanced splits, we use the fact that the greedy algorithm gives a constant factor approximation for \textsc{Min~Sum~Set~Cover} \citep{FeigeLT04}.
For \textsc{Uniform~Decision~Tree}, putting the two bounds together gives the desired approximation result.
For the general \textsc{Decision~Tree} problem, we additionally prove and use a generalization of a result on the greedy algorithm's performance for \textsc{Set~Cover} \citep{lovasz1975ratio, johnson1974approximation,chvatal1979greedy, Stein74}.

For the subexponential time algorithm, we leverage our new result that the greedy algorithm gives an $O(\frac{\log n}{\log C_{\OPT}})$ approximation.
We first run the greedy algorithm. If the greedy algorithm returns a tree with cost at least $n^{3\alpha/4}$, we return the greedy tree knowing we have an $O(1/\alpha)$ approximation. Otherwise, we find by brute force the ``optimal tree up to depth $n^{\alpha}$'' in time $2^{\tilde O(n^{\alpha})}$, then recurse.

\begin{figure}
  \begin{center}
    \includegraphics[width=0.7\textwidth]{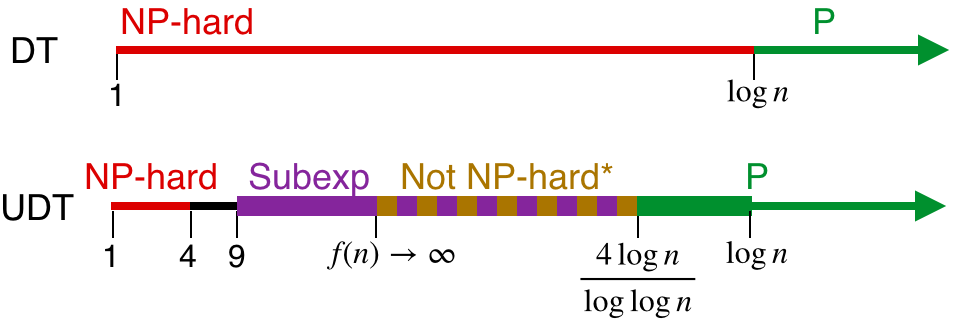}
  \end{center}
  \caption{Approximation ratio differences between \textsc{Decision~Tree} (DT) and \textsc{Uniform~Decision~Tree} (UDT). The thicker lines represent the contributions of this work. The yellow approximation region is not NP-hard assuming the Exponential Time Hypothesis (ETH).}
  \label{fig:results}
\end{figure}

\subsection{Organization of paper}

In Section~\ref{sec:2}, we formally introduce notation used throughout the paper.
In Section~\ref{sec:3}, we state our results.
In Section~\ref{sec:main-sketch}, we sketch a proof of Theorem~\ref{thm:alg-main}, that the greedy algorithm gives an $O(\frac{\log n}{\log C_{\OPT}} + \log \frac{p_{\max}}{p_{\min}})$ approximation on \textsc{Decision~Tree}.
Since the proof of Theorem~\ref{thm:alg-main} is involved, we prove the special case of Theorem~\ref{thm:alg-main} for \textsc{Uniform~Decision~Tree} with binary tests in Section~\ref{sec:main-proof}, and give the full proof in Appendix~\ref{app:Z}.
In Section~\ref{sec:subexp}, we state the subexponential time approximation algorithm and give a sketch of the analysis, and we give a formal analysis in Section~\ref{app:C}.
In Section~\ref{sec:rel}, we describe some related work.
In Section~\ref{sec:conclusion}, we conclude with some open problems.

We leave some details to the appendices.
A lemma on the greedy algorithm's performance in a generalization of \textsc{Set~Cover} that is used in the proof Theorem~\ref{thm:alg-main} is proved in Appendix~\ref{app:sc}.
In Appendix~\ref{app:tight}, we prove Propositions~\ref{thm:avg-lower} and \ref{thm:lower_bound_poly_ratio}, which show two ways that Theorem~\ref{thm:alg-main} is tight.
In Appendix~\ref{app:H}, we demonstrate a rounding trick that allows us to assume $p_{\min}\ge \frac{1}{\poly(n)}$ without changing the difficulty of approximating \textsc{Decision~Tree}.


\section{Preliminaries}
\label{sec:2}

For a positive integer $a$, let $[a]=\{1,\dots,a\}$.
All logs are base 2 when the base is not specified.
The \textsc{Decision~Tree} problem is as follows: given a set of hypotheses $[n]$ with probabilities $p_1,\dots,p_n$ summing to 1, and $m$ distinct $K$-ary tests, output a \emph{decision tree} $\mathcal{T}$ with hypotheses as leaves, such that the weighted average of the depth of the leaves is minimal. 
Formally, a $K$-ary test is a map $\tau:[n]\to [K]$.
We refer to $K$ as the \emph{branching factor} of the test $\tau$, and the elements of $[K]$ as the possible \emph{answers} to the tests.
We think of a test $\tau$ as defining a $K$-way partition of $[n]$.
A \emph{decision tree} $\mathcal{T}$ is a rooted tree such that each interior vertex $v$ has the index $j_v \in [m]$ of some test, and the edge to the $i$-th child of $v$ is labeled with $i \in [K]$.
We say that a hypothesis $h\in[n]$ is \emph{consistent} with a vertex $v$ if, in the root-to-$v$ path, the edge following any vertex $u$ has label $\tau_{j_u}(h)$.
We let $L(v)$ denote the set of hypotheses $h$ that are consistent with $v$.
We say a decision tree $\mathcal{T}$ is \emph{complete} if, for all $h\in[n]$, there exists a (unique) leaf $v\in\mathcal{T}$ such that $L(v)=\{h\}$, and for a complete decision tree $\mathcal{T}$, let $d_{\mathcal{T}}(h)$ denote the depth of this vertex $v$.
The \emph{cost} of a complete decision tree $\mathcal{T}$ is defined to be the average depth of the leaves, weighted by $p_h$, i.e.\
\begin{align}
  C(\mathcal{T}) \ &\defeq \  \sum_{h\in [n]}^{} p_h d_\mathcal{T}(h).
\label{}
\end{align}
We set $\mathcal{T}_\OPT$ to be a complete decision tree that minimizes $C(\mathcal{T}_{\OPT})$ (in general, there may be more than one optimal decision tree), and abbreviate $C_{\OPT}\defeq C(\mathcal{T}_\OPT)$.

This paper is concerned with the \emph{greedy algorithm} for \textsc{Decision~Tree}.
We call a decision tree \emph{greedy} if the test $\tau_{j_v}$ of each interior vertex $v$ minimizes the (weighted) number of hypotheses of the largest partition in $\tau_{j_v}$'s partitioning of $L(v)$.
Formally, a decision tree $\mathcal{T}$ is \emph{greedy} if, for all interior vertices $v\in\mathcal{T}$, we have
\begin{align}
  j_v \in \argmin_{j} \left( \max_{k\in[K]}p(L(v)\cap\tau_j^{-1}(k)) \right),
\label{}
\end{align}
where $p(S) = \sum_{h\in S}^{} p_h$ for $S\subseteq[n]$.
Given a \textsc{Uniform~Decision~Tree} instance, we let $\mathcal{T}_{G}$ be a complete, greedy decision tree, choosing one arbitrarily if there is more than one.
For brevity, we write $C_G \defeq C(\mathcal{T}_G)$.

We remark that, when $K>2$, our notion of a ``greedy'' algorithm for \textsc{Decision~Tree} is not the only one.
As mentioned in the previous paragraph, our definition of greedy chooses, at each vertex in the decision tree, the test that minimizes the (weighted) number of candidate hypotheses, assuming a worst-case answer to the test. Our definition corresponds to the definition by \cite{chakaravarthy2009approximating}, but other choices include maximizing the (weighted) number of pairs of hypotheses that are distinguished \citep{chakaravarthy2007decision, guillory2009average} and maximizing the mutual information between the test and the remaining hypotheses \citep{zheng2005efficient}. 
For binary tests, $K=2$, these definitions are all equivalent.

Define $\DT(R)$ as \textsc{Decision~Tree} with the guarantee that $\frac{p_\text{max}}{p_\text{min}} \leq R$. 
In this notation, $\DT(1)$ is \textsc{Uniform~Decision~Tree}.

\section{Our results}
\label{sec:3}

\subsection{Greedy algorithm}

The main driver of this paper is Theorem~\ref{thm:alg-main}, which relates the cost of the greedy algorithm to the optimal cost for \textsc{Decision~Tree}.

\begin{theorem}
  \label{thm:alg-main}
  For any instance of \textsc{Decision~Tree} on $n$ hypotheses, we have
  \begin{align}
    \label{eq:main}
    C_G \le \left(\frac{12\cdot\log(\frac{1}{p_\text{min}})}{\log C_{\OPT}} + \ln\left( \frac{p_\text{max}}{p_\text{min}} \right) \right)\cdot C_{\OPT}.
  \end{align}
\end{theorem}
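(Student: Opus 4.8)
The plan is a double-counting argument combined with a carefully chosen partition of the interior vertices of $\mathcal{T}_G$ into ``balanced'' and ``imbalanced'' ones, handled respectively by a global entropy bound and a \textsc{Min~Sum~Set~Cover} bound. First, for any complete decision tree $\mathcal{T}$ the depth $d_\mathcal{T}(h)$ is exactly the number of interior vertices $v$ with $h\in L(v)$, so $C(\mathcal{T})=\sum_h p_h d_\mathcal{T}(h)=\sum_{v\text{ interior}}p(L(v))$; in particular $C_G=\sum_{v\in\mathcal{T}_G}p(L(v))$ and $C_{\OPT}=\sum_{v\in\mathcal{T}_{\OPT}}p(L(v))$. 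Set $\varepsilon\defeq 1/C_{\OPT}$ and call an interior vertex $v$ of $\mathcal{T}_G$ \emph{balanced} if the greedy test at $v$ partitions $L(v)$ into parts each of weight at most $(1-\varepsilon)\,p(L(v))$, and \emph{imbalanced} otherwise. Since the greedy test minimizes the weight of the heaviest part, ``$v$ imbalanced'' means that \emph{every} test leaves more than a $(1-\varepsilon)$-fraction of the weight of $L(v)$ in a single (``majority'') part. Write $C_G=C_G^{\mathrm{bal}}+C_G^{\mathrm{imb}}$ for the two partial sums of $p(L(v))$.

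For the balanced vertices I would use a global entropy argument. With $H(\cdot)$ the Shannon entropy and $\Phi(v)\defeq p(L(v))\log\frac{1}{p(L(v))}$, a one-line computation gives $\sum_{u\text{ child of }v}\Phi(u)=\Phi(v)+p(L(v))\,H(q_v)$, where $q_v$ is the vector of normalized part-weights of $v$'s split. Telescoping over $\mathcal{T}_G$ (the root contributes $\Phi=0$, the leaves contribute $\sum_h p_h\log\frac1{p_h}$) yields $\sum_{v}p(L(v))\,H(q_v)=\sum_h p_h\log\frac1{p_h}\le\log\frac{1}{p_{\min}}$. For a balanced $v$ every part has weight at most $1-\varepsilon$, so $H(q_v)\ge h_2(\varepsilon)\ge\varepsilon\log\frac1\varepsilon=\frac{\log C_{\OPT}}{C_{\OPT}}$, with $h_2$ the binary entropy. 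Hence $C_G^{\mathrm{bal}}\le\frac{C_{\OPT}}{\log C_{\OPT}}\sum_{v}p(L(v))H(q_v)\le\frac{\log(1/p_{\min})}{\log C_{\OPT}}\,C_{\OPT}$, already inside the claimed bound (leaving the factor $12$ entirely to spare); this step is precisely where the calibration of the threshold $1-\varepsilon$ matters.

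For the imbalanced vertices it remains to show that $C_G^{\mathrm{imb}}=\sum_h p_h\cdot\#\{\text{imbalanced }v\text{ on the greedy path to }h\}$ is dominated by the claimed bound; note $p_{\min}\le 1/n\le 1/C_{\OPT}$ forces $\log(1/p_{\min})/\log C_{\OPT}\ge 1$, so it suffices to bound $C_G^{\mathrm{imb}}$ by $O(1)\cdot C_{\OPT}$ plus $O(\ln\frac{p_{\max}}{p_{\min}})\cdot C_{\OPT}$. Split the count according to whether $h$ falls into the majority part of $v$ or into a minority part. In the minority case $h$'s consistent set loses more than a $(1-\varepsilon)$-fraction of its weight, i.e.\ drops by a factor $>1/\varepsilon=C_{\OPT}$, so $h$ meets at most $\log_{C_{\OPT}}(1/p_h)\le\log(1/p_{\min})/\log C_{\OPT}$ such vertices, contributing at most $\log(1/p_{\min})/\log C_{\OPT}$ to $C_G^{\mathrm{imb}}$ in total. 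In the majority case, the majority part is the unique heaviest part, so the greedy rule (minimize the heaviest part) removes from the current majority the heaviest available ``minority bundle'' --- exactly the greedy rule for a \textsc{Min~Sum~Set~Cover} instance whose ground set is the hypotheses consistent with a maximal ``imbalanced-majority region'' and whose covering sets are the minority bundles. Using that greedy is a constant-factor ($4$-)approximation for (weighted) \textsc{Min~Sum~Set~Cover}~\cite{FeigeLT04}, and bounding the \textsc{Min~Sum~Set~Cover} optimum of each such region by the cost of $\mathcal{T}_{\OPT}$ restricted to it, the majority contributions sum over the disjoint regions to $O(1)\cdot C_{\OPT}$; for nonuniform weights the per-region accounting uses a generalization of the classical greedy \textsc{Set~Cover} bound~\cite{lovasz1975ratio,johnson1974approximation,chvatal1979greedy,Stein74}, which is where the additive $\ln(p_{\max}/p_{\min})\cdot C_{\OPT}$ term appears. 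Combining the three estimates proves the theorem.

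I expect the \textsc{Min~Sum~Set~Cover} step to be the main obstacle. One must make precise the correspondence between greedy \textsc{Decision~Tree}'s imbalanced-majority moves and greedy \textsc{Min~Sum~Set~Cover} even though the ``minority bundle'' produced by a test depends adaptively on the current consistent set (and greedy may prefer to keep $h$ in the majority over a test that would isolate $h$ faster but leaves a larger heaviest part); one must bound the \textsc{Min~Sum~Set~Cover} optimum of each region by $O(1)\cdot C_{\OPT}$ rather than something larger; and in the nonuniform case one must push all of the weight nonuniformity cleanly into the $\ln(p_{\max}/p_{\min})$ term via the generalized \textsc{Set~Cover} estimate. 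By contrast, the double-counting identity and the balanced-vertex entropy bound are short once the threshold $\varepsilon=1/C_{\OPT}$ has been chosen.
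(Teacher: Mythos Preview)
Your overall architecture matches the paper's: the double-counting identity, the balanced/imbalanced split with threshold governed by $1/C_{\OPT}$, the entropy bound for balanced vertices, and the \textsc{Min~Sum~Set~Cover} reduction for imbalanced ones are all exactly right, and your entropy computation and ``minority case'' count are fine. The gap is in the majority case. You define a single family of ``maximal imbalanced-majority regions'' and assert that the \textsc{MSSC} optima of these regions sum to $O(1)\cdot C_{\OPT}$ because the regions are ``disjoint.'' The regions are disjoint \emph{as vertex sets in $\mathcal{T}_G$}, but the relevant objects for the summation are their \emph{universes} $S^{(P)}=L(P_1)$, and these overlap badly: a hypothesis $h$ lies in the universe of every region whose top $P_1$ is an ancestor of $h$, and along $h$'s root-to-leaf path the region tops are separated not only by minority jumps (of which there are at most $\log(1/p_{\min})/\log C_{\OPT}$, as you note) but also by \emph{balanced} vertices, whose number is not controlled. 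So $\sum_P\MSSC^{(P)}(\sigma_{\OPT})$ cannot be bounded by $O(1)\cdot C_{\OPT}$ from $\sum_P\sum_{h\in S^{(P)}}p_h\,d_{\mathcal{T}_{\OPT}}(h)$, and the majority contribution is not $O(1)\cdot C_{\OPT}$ as you claim.

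The paper's fix, which is the key idea you are missing, is to \emph{stratify} the imbalanced vertices into levels using \emph{absolute} thresholds rather than your ratio threshold: $v$ is level-$s$ imbalanced if $p^-(v)\le\delta^s$ and $p(v)>2\delta^s$, with $\delta=1/C_{\OPT}$. For each fixed $s$, the level-$s$ chains have pairwise disjoint universes (each leaf has at most one maximal level-$s$ imbalanced ancestor), so $\sum_{P\in\mathcal{P}_s}\MSSC^{(P)}(\sigma_{\OPT})\le C_{\OPT}$; summing over the $s_{\max}=\log(1/p_{\min})/\log C_{\OPT}$ levels gives the factor you need. The level structure does double duty: it also guarantees that along a level-$s$ chain the majority \emph{answer} of every test $j$ is the same with respect to $L(P_1)$ as with respect to $L(P_\ell)$ (because $p(I_{j,S}^-\cap L(P_\ell))\le\delta^s<p(L(P_\ell))/2$), which is exactly what makes the greedy-tree-to-greedy-\textsc{MSSC} correspondence go through. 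With your purely relative threshold $(1-\varepsilon)$, the majority answer of a test can flip partway down a long chain, and the correspondence you sketch breaks. In short, your proposal has all the right ingredients except the level decomposition, and without it the \textsc{MSSC} step does not close; this is precisely the place where the $\log(1/p_{\min})/\log C_{\OPT}$ factor enters the imbalanced bound, not the $O(1)$ you were hoping for.
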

Our theorem holds for any branching factor $K$, and when $C_G$ is the cost of an arbitrary tree produced by the greedy algorithm above.
As $C_{\OPT}\ge \log_K n$ always, our result implies that the greedy algorithm always gives an $O(\frac{\log n}{\log\log n})$ approximation for \textsc{Uniform~Decision~Tree} when the branching factor is a constant, resolving the conjecture of \cite{kosaraju1999optimal}.
Additionally, if $C_{\OPT}$ is $\Omega(n^\alpha)$ for constant $\alpha$ and the weights are uniform, then the greedy algorithm obtains a constant $O(1/\alpha)$ approximation.
We use this fact crucially in designing our subexponential time approximation algorithms.

For the simpler case when $K=2$ and the weights $p_h$ are uniform, we give a sketch of the proof in Section~\ref{sec:main-sketch} and a full proof in Section~\ref{sec:main-proof}.
This full result is sketched in Section~\ref{sec:main-sketch} and proven in full in Appendix~\ref{app:Z}.
For \textsc{Uniform~Decision~Tree}, the constant 12 can be improved to 6, and, when $n$ is sufficiently large, $4+\varepsilon$, so that greedy gives a $\frac{(4+\varepsilon)\log n}{\log C_{\OPT}}$ approximation (see Section~\ref{sec:main-sketch}).

Note that the terms $\log(\frac{1}{p_\text{min}})$ and $\ln(\frac{p_\text{max}}{p_\text{min}})$ in the approximation ratio can be arbitrarily large.
However, a rounding trick before running the greedy algorithm \citep{guillory2009average} allows us assume that all the weights are at least $\frac{1}{n^2}$, and hence $\log(\frac{1}{p_\text{min}})\le 2\log n$ and $\ln(\frac{p_\text{max}}{p_\text{min}})\le 2\ln n$ in \eqref{eq:main}.
The details are given in Appendix~\ref{app:H}.

\subsection{Subexponential time algorithm}
\label{sec:subexp_alg}

Using Theorem~\ref{thm:alg-main}, we give a subexponential time algorithm that achieves a constant factor approximation for the \textsc{Decision~Tree} problem when the weights are close to uniform.
\begin{theorem}
\label{thm:subexp-alg}
For any $\alpha\in(0,1)$ and $R\ge 1$, there exists an $(\frac{25}{\alpha}+\log R)$ approximation algorithm for $\DT(R)$ with runtime $2^{O( n^{\alpha} \log(Rn) \log m)}$.
For \textsc{Uniform~Decision~Tree}, for any $\varepsilon>0$, we can achieve a $\frac{9+\varepsilon}{\alpha}$ approximation in the same runtime.
\end{theorem}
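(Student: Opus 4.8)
The plan is to design a recursive algorithm that alternates between two moves: running the greedy algorithm as a "cheap test" for whether we are already in good shape, and brute-forcing an optimal partial tree of bounded depth when greedy's output is too expensive. Concretely, on a subinstance with hypothesis set $S$ (initially $S=[n]$), first run the greedy algorithm on $S$. If the greedy tree has cost at least $|S|^{3\alpha/4}$, then by Theorem~\ref{thm:alg-main} (in its uniform / near-uniform form, using $\log(1/p_{\min})\le 2\log|S|$ after the rounding trick of Appendix~\ref{app:H}, and $C_{\OPT}(S)\ge |S|^{3\alpha/4}$ up to lower-order corrections), greedy is an $O(1/\alpha)$ approximation on $S$, so we output it and stop. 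Otherwise $C_{\OPT}(S) < |S|^{3\alpha/4}$, which in particular tells us the optimal tree for $S$ is "shallow in expectation," and we compute by brute force the optimal decision tree for $S$ truncated to depth $d = \lceil n^{\alpha}\rceil$: enumerate all ways to build a tree of depth at most $d$ using the $m$ tests, of which there are $2^{O(n^{\alpha}\log m)}$ many (a depth-$d$ tree over $K$-ary tests has at most $K^{d}$ nodes, each labeled by one of $m$ tests, but one bounds the count more carefully by the number of distinct subsets reachable, giving the stated $2^{O(n^{\alpha}\log(Rn)\log m)}$). Take the truncation minimizing cost among all leaves that are already singletons, and then recurse on each leaf of this partial tree that still contains more than one hypothesis.

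The accounting is the crux. Each recursion step either terminates with an $O(1/\alpha)$-approximate greedy tree, or peels off a depth-$n^\alpha$ optimal prefix. I would charge the cost of the brute-forced prefix directly against $C_{\OPT}$ restricted to $S$, since the truncated optimal tree costs no more than the true optimal tree does on the hypotheses that get resolved within depth $n^\alpha$; the hypotheses pushed to the recursion contribute their remaining depth, which I bound by recursing. The key quantitative point is that every time we recurse rather than terminate, the size of each child subinstance drops substantially — because $C_{\OPT}(S) < |S|^{3\alpha/4}$ forces (by Markov on depths, or by a leaf-counting argument) that all but a tiny fraction of $S$ is resolved in the depth-$n^\alpha$ prefix, and more importantly, the subinstances that survive to the next level have $C_{\OPT}$ that has dropped by roughly a factor related to the $n^\alpha$ depth consumed, so after $O(1/\alpha)$ levels of recursion we must hit the greedy-terminates case. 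This gives a total approximation ratio that is the sum over the $O(1/\alpha)$ recursion levels of the per-level loss, which telescopes to $O(1/\alpha)$ (plus an additive $\log R$ from the near-uniform version of Theorem~\ref{thm:alg-main}); pushing the constant down to $9+\varepsilon$ in the uniform case uses the sharper $4+\varepsilon$ constant available for \textsc{Uniform~Decision~Tree} noted after Theorem~\ref{thm:alg-main}, together with a careful choice of the thresholds $3\alpha/4$ and the depth parameter.

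The runtime bound is comparatively routine: the brute-force search at each node costs $2^{O(n^{\alpha}\log(Rn)\log m)}$, the recursion tree has polynomially many nodes (each node of each brute-forced prefix spawns at most one recursive call, and depths are polynomially bounded), and the greedy invocations are polynomial time, so the product stays $2^{O(n^{\alpha}\log(Rn)\log m)}$. The main obstacle I anticipate is the charging argument for the surviving subinstances: I need the right invariant that simultaneously (i) lets me bound the approximation loss at each level by a constant, and (ii) guarantees that the recursion depth is $O(1/\alpha)$ rather than, say, $O(\log n)$. The natural candidate invariant is something like "at recursion level $t$, every live subinstance $S$ has $C_{\OPT}(S) \le n^{3\alpha/4} \cdot (\text{something})^{-t}$ or else greedy already terminated it," but making the exponents line up — so that the depth-$n^\alpha$ brute force genuinely exhausts a constant fraction of the optimal cost on each path, while the leftover weight that escapes to deeper levels is geometrically decreasing — is exactly where the constants $25/\alpha$ and $9/\alpha$ come from, and where I expect to spend the most care.
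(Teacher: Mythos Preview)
Your approach is essentially the paper's. A few calibrations: the paper sets the brute-force depth to $b = \lceil(12\log n + \log R)n^\alpha\rceil$ (the $\log n$ factor is why the runtime carries $\log(Rn)$) and the greedy-accept threshold to $n^{-\alpha/4}b$, both in terms of the \emph{original} $n$ rather than the current $|S|$; this keeps the bookkeeping uniform across recursion levels.

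The one place you waver --- the recursion-depth invariant --- the paper resolves via the Markov branch of your dichotomy, not the ``$C_{\OPT}$ drops'' branch. The invariant is simply $\sum_{H\in\mathcal{F}_i} p(H) \le n^{-i\alpha/4}$: if greedy is not returned on $H$ then $C_{\OPT}(H,b)\le C_G(H) < n^{-\alpha/4}b$, so by Markov on leaf depths at most an $n^{-\alpha/4}$ fraction of the weight in $H$ survives past depth $b$. Since every live $H$ has $p(H)\ge 1/n^2$ (else the algorithm returns greedy immediately), this caps the recursion at $8/\alpha$ levels. Each level's partial trees contribute at most $C_{\OPT}$ to the total cost by the disjointness lemma you sketch ($\sum_i p(H_i)C_{\OPT}(H_i)\le C_{\OPT}$ for disjoint $H_i$), and the greedy terminations contribute $(O(1/\alpha)+\log R)C_{\OPT}$ in aggregate by the same lemma. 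Your candidate invariant about $C_{\OPT}(S)$ decreasing geometrically across levels is neither needed nor obviously true; the weight-based invariant is both simpler and sufficient.
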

In Section~\ref{sec:subexp}, the subexponential time algorithm is stated and an analysis is sketched.
The analysis is given formally in Section~\ref{app:C}.
Importantly, this result implies that achieving a super-constant approximation ratio is not NP-hard, given the Exponential Time Hypothesis. 
As an informal proof, suppose for contradiction there was a polynomial reduction from 3-SAT to achieving a $f(n)$ approximation ratio for \textsc{Uniform~Decision~Tree} for some $f(n)\to\infty$ as $n\to\infty$. 
By Theorem~\ref{thm:subexp-alg}, there exists a $2^{n^{o(1)}}$-time algorithm to achieve a $f(n)$ approximation for \textsc{Uniform~Decision~Tree}, and thus a $2^{n^{o(1)}}$-time algorithm to solve 3-SAT, contradicting the Exponential Time Hypothesis.
This adds approximating \textsc{Uniform~Decision~Tree} to a list of interesting natural problems that have subexponential or $2^{n^{o(1)}}$ time algorithms but are not known to be in P.
Figure~\ref{fig:results} illustrates the contrast between \textsc{Decision~Tree} and \textsc{Uniform~Decision~Tree}.

\subsection{Approximation ratio tightness}
We also show that the $O(\frac{\log n}{\log C_{\OPT}})$ approximation ratio is tight up to a constant factor for the greedy algorithm by generalizing the example given by \cite{dasgupta2004analysis}. 
The proof is given in Appendix~\ref{app:B}.
\begin{proposition}
\label{thm:avg-lower}
There exists an $n_0$ such that for all $n\ge n_0$ and any $C^*\in[\log n,n]$, there exists an instance of \textsc{Uniform~Decision~Tree} with branching factor 2 for which
\begin{align}
  C_G\ge \frac{C^* \log n}{16\log C^*}, \qquad\text{and}\qquad
  C_{\OPT} \le 8C^*.
\end{align}

\end{proposition}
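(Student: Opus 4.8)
The plan is to construct, for each target value $C^*$, a ``bad'' instance by recursively composing a gadget on which greedy is forced to make slightly-imbalanced choices while the optimal tree stays balanced. I would first handle the base gadget: take a ground set of hypotheses and a collection of tests, one of which is nearly perfectly balanced (splitting off a $\tfrac12 - o(1)$ fraction) and is the unique optimal first split, while all the other tests that greedy might tie-break toward are only very slightly imbalanced but split off a tiny ``hard'' piece of the hypotheses that then requires many further tests to resolve. Concretely, following Dasgupta's construction, arrange it so that greedy, at each level, peels off a set of size roughly $n / C^*$ that forms a ``path-like'' subinstance of depth $\Theta(C^*)$, whereas the optimal tree recognizes that a different test at the top splits things evenly and achieves depth $O(\log n)$ on that branch.

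The key steps, in order, would be: (1) define the instance recursively — an instance $I_k$ on $n_k$ hypotheses that contains a ``trap'' test leading to a sub-instance $I_{k-1}$ together with enough ``filler'' tests to keep greedy's splits from being too balanced; (2) verify that the greedy tie-breaking (or even the forced greedy choice, since the statement allows an arbitrary greedy tree) descends into the trap at every level, so that $\Omega(n_k)$ hypotheses end up at depth $\Omega(C^*)$, giving $C_G = \Omega(C^* \log n / \log C^*)$ after accounting for how the parameters $n_k$ relate to $C^*$ across the $\Theta(\log n / \log C^*)$ levels of recursion; (3) exhibit an explicit near-balanced decision tree for the same instance of cost $O(C^*)$ — essentially a balanced tree that ignores the trap tests and uses the ``good'' tests to halve the candidate set each time, reaching cost $O(\log n)$ on most branches and $O(C^*)$ overall once the small hard pieces are resolved crudely; (4) choose the numerical parameters (the fraction peeled off, the branching of the recursion, the depth of each gadget) to land the constants $16$ and $8$ in the statement, and pin down $n_0$ and the range $C^* \in [\log n, n]$. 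The edge cases $C^* = \Theta(\log n)$ (where the bound degenerates to the known $\Omega(\log n/\log\log n)$ example) and $C^* = \Theta(n)$ (a near-path instance) should be recovered as endpoints of the same family.

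The main obstacle I expect is step (2): ensuring that greedy is genuinely \emph{forced} into the trap, i.e.\ that the trap test is actually a minimizer of the max-partition-size objective (not merely a permissible tie-break), at every one of the $\Theta(\log n / \log C^*)$ recursion levels \emph{simultaneously}, while the parameters are changing from level to level. This requires a careful inductive bookkeeping of the sizes $|L(v)|$ along greedy's root-to-leaf path and a matching argument that no ``good'' test is available inside the trapped sub-instances — the filler tests must be designed so that inside $I_{k-1}$ every available test is still only slightly-imbalanced, which constrains how the filler hypotheses of different levels can overlap. The secondary difficulty is quantitative: propagating the $\log C^*$ factor correctly, since it arises from the geometric decay of the sub-instance sizes $n_k \approx n^{1/(\text{number of levels})}$ and one must check that $C^*$ levels of a depth-$\Theta(1)$ gadget (or $\Theta(1)$ levels of a depth-$\Theta(C^*)$ gadget, depending on how the construction is organized) compose to give exactly the claimed ratio rather than something off by a $\log C^*$ or a constant. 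I would lean on the explicit parameter choices in Dasgupta's $\Omega(\log n/\log\log n)$ example and generalize them by replacing the ``$\log n$-sized'' hard piece with a ``$C^*$-sized'' one.
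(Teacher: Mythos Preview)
Your recursive-gadget plan is plausible in spirit but differs substantially from the paper's construction, and the paper's route sidesteps exactly the obstacle you flag in step (2).

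The paper does \emph{not} compose nested sub-instances. Instead it uses a single grid: put (roughly) $n$ hypotheses in a $C^* \times (n/C^*)$ grid and supply three kinds of tests --- (a) column-indicator tests, (b) bit-of-row tests \emph{restricted to a fixed column}, and (c) a recursively defined family of row-subset tests, each splitting the current row-set in a $\approx 1/C^*$ vs.\ $1-1/C^*$ ratio. The optimal tree first identifies the column in $\le C^*$ type-(a) tests, then binary-searches the row in $\log(n/C^*)$ type-(b) tests, giving $C_{\OPT}=O(C^*)$. For greedy, the key observation is that every type-(a) and type-(b) test is globally very imbalanced (one side has weight $\le 1/C^*$), so the \emph{most} balanced tests available are the type-(c) tests --- and those still only achieve a $\approx 1/C^*$ split. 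An entropy argument then finishes: each greedy step yields at most $\mathbb{H}(2/C^*)=O((\log C^*)/C^*)$ bits about the row, and there are $\Omega(\log n)$ bits to learn, so $C_G=\Omega(C^*\log n/\log C^*)$.

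Compared to your plan, the paper's construction is non-recursive in the instance (only the type-(c) test family is defined recursively), and greedy is not ``lured into a trap'' --- it simply has no better option, since \emph{every} test is at least $(1-1/C^*)$-imbalanced. This completely eliminates your step-(2) worry about forcing greedy's hand inductively across levels. Your scheme, by contrast, has to engineer the filler tests so that the trap test strictly beats all alternatives at every one of $\Theta(\log n/\log C^*)$ recursion depths simultaneously; this is doable but fiddly, and your accounting of how the $\log n/\log C^*$ factor emerges from the recursion (``$\Omega(n_k)$ hypotheses at depth $\Omega(C^*)$ gives $C_G=\Omega(C^*\log n/\log C^*)$'') is currently a leap --- you would need the depths to \emph{accumulate} across levels, not just hold per-level. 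The grid-plus-entropy argument gets the same bound with far less bookkeeping.
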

We also show that, when the weights are non-uniform, the $\ln\left(\frac{p_\text{max}}{p_\text{min}}\right)$ term in the approximation ratio of Theorem~\ref{thm:alg-main} is computationally necessary.
\begin{proposition}
\label{thm:lower_bound_poly_ratio}
Let $r\in(0,1)$. Then, for $n$ sufficiently large, approximating $\DT(2n^r \log n)$ to a factor of $\frac{r}{12}\log n$ is NP-hard.
\end{proposition}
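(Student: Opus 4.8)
The plan is to reduce from the NP-hardness of approximating \textsc{Set~Cover}, in the spirit of the $o(\log n)$-hardness for \textsc{Decision~Tree} of \cite{chakaravarthy2007decision}. Fix a small $\varepsilon>0$; it is NP-hard, given a set cover instance $(U,\mathcal S)$ with $|U|=N$ and $|\mathcal S|=\poly(N)$, to distinguish the case that $U$ has a cover of size at most $\ell$ from the case that every cover of $U$ has size more than $(1-\varepsilon)\ell\ln N$, and by padding with disjoint copies (which costs only a $1-o(1)$ factor in the gap) we may assume $\ell=\omega(\log N)$. Given such an instance and a parameter $T=2^t$ to be chosen last, I would build a \textsc{Decision~Tree} instance made of $T$ disjoint copies of a set-cover gadget. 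Copy $i\in\{0,1\}^t$ contains a \emph{sink} hypothesis $z^{(i)}$ of weight $q$ and, for each $e\in U$, an \emph{element} hypothesis $h^{(i)}_e$ of weight $p$, normalized so that $Tq+TNp=1$; concretely $q=(1-1/N)/T$ and $p=1/(TN^2)$, so that $p_{\max}/p_{\min}=q/p=N^2-N$. There are three families of binary tests: $t$ \emph{selector} tests $\beta_1,\dots,\beta_t$, where $\beta_l(h)$ is the $l$-th bit of the copy-index of $h$ (so $\beta_l$ is constant on each copy); for each $S\in\mathcal S$ and each copy $i$, a test $\tau^{(i)}_S$ that is $1$ on $h^{(i)}_e$ for $e\in S$ and $0$ on every other hypothesis, in particular $0$ on $z^{(i)}$ and on all of copies other than $i$; and for each $e\in U$ and copy $i$, a singleton test $\sigma^{(i)}_e$ that is $1$ only on $h^{(i)}_e$. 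Every pair of hypotheses is separated by some test, so a complete decision tree exists.

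The analysis rests on the observation that the only tests distinguishing $z^{(i)}$ from an element hypothesis $h^{(i)}_e$ are $\tau^{(i)}_S$ with $e\in S$, and $\sigma^{(i)}_e$: every selector test and every test of a copy other than $i$ maps $z^{(i)}$ and $h^{(i)}_e$ to the same answer. Consequently, in \emph{any} complete decision tree, the copy-$i$ tests on the root-to-leaf path of $z^{(i)}$ witness a set cover of $U$ --- using sets of $\mathcal S$ together with singletons, which does not decrease the minimum cover size since a singleton $\{e\}$ can always be swapped for an original set containing $e$. Hence, in the NO case, every $z^{(i)}$ has depth more than $(1-\varepsilon)\ell\ln N$, so $C_{\OPT}>Tq\cdot(1-\varepsilon)\ell\ln N=(1-1/N)(1-\varepsilon)\ell\ln N$. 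In the YES case, fixing a cover of size $a\le\ell$, build the tree that first applies $\beta_1,\dots,\beta_t$ --- a balanced prefix of depth $t$ routing each hypothesis to its copy --- and then, inside copy $i$, applies the $a$ tests of the cover along the branch consistent with $z^{(i)}$, isolating $z^{(i)}$ at depth $t+a$ while sorting out with $\sigma$-tests the at most $N$ element hypotheses that peel off. This gives $C_{\OPT}\le Tq(t+\ell)+TNp(t+2N)=t+(1-1/N)\ell+2$.

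It then remains to choose $T$. I would take $T$ to be the least power of $2$ with $2(TN)^r\log(TN)\ge N^2$; then $n:=T(N+1)=\Theta(N^{2/r})$, so $p_{\max}/p_{\min}=N^2-N<2n^r\log n$ --- the instance is a legitimate $\DT(2n^r\log n)$ instance --- while $\log n=(\frac2r+o(1))\log N$ and thus $t=\lceil\log T\rceil=\Theta(\frac1r\log N)=o(\ell)$. The ratio of the NO-case lower bound to the YES-case upper bound is therefore at least $\frac{(1-1/N)(1-\varepsilon)\ell\ln N}{t+(1-1/N)\ell+2}=(1-o(1))(1-\varepsilon)\ln N\ge(1-o(1))(1-\varepsilon)\frac{\ln 2}{2}\,r\log n>\frac{r}{12}\log n$ for $\varepsilon$ small and $N$ large, using $\frac{\ln 2}{2}>\frac1{12}$ with room to spare. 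An algorithm approximating $\DT(2n^r\log n)$ to within $\frac r{12}\log n$ would then distinguish the two \textsc{Set~Cover} cases; since $r$ is a fixed constant the reduction runs in polynomial time, which proves the proposition.

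The step I expect to be the main obstacle is reconciling the two competing demands on the constructed instance: its weight ratio must be at most $2n^r\log n$, yet its optimal cost must encode a \textsc{Set~Cover} computation. The classical reduction puts almost all weight on a single sink hypothesis whose depth equals the minimum cover size, which forces $p_{\max}/p_{\min}$ to be roughly $n^2$ --- far above $n^r$. Spreading the sink's weight over $T=\poly(N)$ sinks living in disjoint copies fixes the ratio, at the price of inflating $n$ to about $N^{2/r}$; the two delicate points are then (i) designing the selector so that a sink cannot be isolated without genuinely solving set cover inside its own copy --- which dictates that selector tests be constant on each copy and that the $\tau$- and $\sigma$-tests vanish outside their copy --- and (ii) verifying that the singleton tests required to make the instance feasible do not themselves supply a short cover, which they do not because of the swapping argument above.
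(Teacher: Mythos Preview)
Your reduction is correct, but it is structured quite differently from the paper's. The paper does \emph{not} use the gap version of \textsc{Set~Cover} directly; instead it builds a reduction that preserves the approximation ratio. Concretely, the paper's hypotheses are indexed by $[\ell]\times([q]\times[n_0]\cup\{\perp\})$ with $q=\lfloor\frac1r\log n_0\rfloor$: the $\perp$-hypotheses play the role of your sinks, but isolating one of them requires solving \emph{$q$ independent layers} of set cover (one for each value of the $[q]$-coordinate), which forces $C_{\OPT}=\Theta(qB_{\OPT})$ uniformly in $B_{\OPT}$. The $\ell$ outer copies serve the same purpose as your $T$ copies, diluting the weight ratio down to $\approx n_0 q\le 2n^r\log n$ with $n\approx n_0^{1/r}$. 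The $\frac{r}{12}\log n$ hardness then follows from the $\frac12\log n_0$ hardness of \textsc{Set~Cover} via the factor-$6$ slack between the upper and lower bounds on $C_{\OPT}$.

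Your route avoids the $q$-layering entirely: you take a single layer per copy, rely on the gap version of \textsc{Set~Cover} so that the NO-case cover is already $\Omega(\ell\ln N)$, and absorb the selector depth $t=\Theta(\frac1r\log N)$ by padding the instance so that $\ell=\omega(\log N)$. This is arguably more elementary --- no nested set covers --- and the padding step you invoke (e.g.\ $k=\log^2 N$ disjoint copies, costing only a $1+o(1)$ factor in the gap) is legitimate. The price is a larger blowup, $n=\Theta(N^{2/r})$ versus the paper's $n=\Theta(n_0^{1/r})$, and the need to appeal to the gap formulation plus padding. The paper's layered construction, in exchange, gives a clean two-sided estimate $\frac12 qB_{\OPT}<C_{\OPT}<3qB_{\OPT}$ that holds without any assumption on $B_{\OPT}$.
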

In other words, even if the ratio $p_\text{max}/p_\text{min}$ is guaranteed to be $O(n^r)$ for a constant $r\in(0,1)$, one cannot give a $o(\log n)$ approximation algorithm unless $\text{P}=\text{NP}$. 
The proof is given in Appendix~\ref{app:E}.

\subsection{Decision tree with noise}
\label{sec:noise}
Theorem~\ref{thm:alg-main} implies an improved black-box result for a noisy variant of \textsc{Decision~Tree}.
{\Kaariainen} \cite{kaariainen2006active} considers a variant of \textsc{Decision~Tree} with binary tests where the output of each test may be corrupted by i.i.d. noise. 
Formally, there exists $\varepsilon>0$ such that querying any test $\tau$ on any hypothesis $h$, outputs the correct answer $\tau(h)$ with probability $1-\delta_{\tau,h}$ and the wrong answer with probability $\delta_{\tau,h}$, for some $\delta_{\tau,h}\in[0,1/2-\varepsilon)$.
Tests are repeatable, with each one producing different draws of the noise.
\Kaariainen~\cite{kaariainen2006active} gives an algorithm that turns a decision tree of cost $C$ for the noiseless problem into a decision tree with cost $O(C\log C\log\log C)$ for the noisy problem by repeating queries sufficiently many times.

Combining \Kaariainen's result with the greedy algorithm for \textsc{Uniform~Decision~Tree} gives an algorithm for the noisy problem using an average of $O(C_G\log C_G \log\log C_G)$ queries.
Previously, using the bound $C_G\le \max(C_{\OPT}\cdot O(\log n), n)$, the noisy problem's cost was bounded by $C_{\OPT}\cdot O(\log^2n\log\log n)$.
However, by Theorem~\ref{thm:alg-main}, we have $C_G\le C_{\OPT}\cdot O(\frac{\log n}{\log C_{\OPT}})$, so we in fact have cost at most $C_{\OPT}\cdot O(\log n\log\log n)$, improving the cost ratio to the optimal solution of the noiseless problem by a nearly quadratic factor.

\section{Sketch of proof of Theorem~\ref{thm:alg-main}}
\label{sec:main-sketch}

In this section, we sketch a proof of Theorem~\ref{thm:alg-main}.
We first sketch the proof assuming that the branching factor is 2, so that $\mathcal{T}_G$ is a binary tree, and that the distribution is uniform ($p_h=1/n$ for all $h\in[n]$).
Since the proof of Theorem~\ref{thm:alg-main} is involved, we give the details of this easier result in Section~\ref{sec:main-proof}.
At the end of the section, we give the additional ideas necessary to complete the full proof of Theorem~\ref{thm:alg-main}.
The details of the full proof are given in Appendix~\ref{app:Z}.

\subsection{Uniform weights and binary tests}

Recall that $p(v)\defeq \sum_{h\in L(v)}^{} p_h$ and that, as the weights are uniform, $p_h=\frac{1}{n}$ for all $h\in[n]$.
By a simple double counting argument (Lemma~\ref{lem:alg-10}), we can compute the cost of the greedy tree by summing the weights of the vertices rather than summing the depths of leaves.
That is,
\begin{align}
C_G
\ = \  \sum_{v\in\mathcal{T}_G^{\mathrm{o}}}^{} p(v),
\end{align}
where the sum is over the interior vertices of $\mathcal{T}_G$.

\paragraph{Defining balanced and imbalanced vertices.}
We then define \emph{balanced} and \emph{imbalanced} vertices with respect to a parameter $\delta\in(0,1)$, which we eventually set to $O(\frac{1}{C_{\OPT}})$.
These definitions are crucial to the proof.
A vertex is \emph{imbalanced}\footnote{We remark that imbalanced vertices can have $p(v^-)$ arbitrarily close to $\frac{p(v)}{2}$, so the hypotheses at vertex $v$ are not necessary split in an imbalanced way. However, as we show (Lemma~\ref{lem:type-2}), all balanced vertices are in fact split in a balanced way with $p(v^-)/p(v)\ge\delta/2$, hence the terminology.} if there exists an \emph{integer} $s$ (called the \emph{level}) such that $p(v) > 2\delta^s$ and $p(v^-)\le \delta^s$. 
Here, $v^-$ is the child of $v$ containing a smaller weight of hypotheses in its subtree. We say $v$ is \emph{balanced} if it is not imbalanced.
Note that imbalanced vertices exist only for $s\le s_\text{max}$, where $s_\text{max}\defeq \frac{\log n}{\log\frac{1}{\delta}}$.
We prove a structural result (Lemma~\ref{lem:type-1}) that shows that the level-$s$ imbalanced vertices of $\mathcal{T}_G$ can be partitioned into downward paths, which we call \emph{chains}, such that, for all $s$, each leaf has vertices from at most one level-$s$ chain among its ancestors.
The parameter $\delta$ quantifies how many chains we consider: smaller $\delta$ means fewer, longer chains, and larger $\delta$ means more, shorter chains.
We optimize the choice of $\delta$ at the end of this proof sketch.
In the remainder of the proof, we bound the weight of the balanced and imbalanced vertices separately.

\paragraph{Bounding the weight of balanced vertices.}
To bound the weight of balanced vertices, we use an entropy argument.
We consider the random variable corresponding to a uniformly random hypothesis from $[n]$.
On one hand, this random variable has entropy $\log n$.
On the other hand, we can take a uniformly random hypothesis from $[n]$ by an appropriate random walk down the decision tree. 
Starting from the root, at each vertex, we step to a child with probability proportional to the number of hypotheses in that child's subtree.
The total entropy of this process is given by $\sum_{v\in\mathcal{T}_G}^{} p(v)\mathbb{H}(v)$, where $\mathbb{H}(v)\defeq \mathbb{H}(\frac{p(v^-)}{p(v)})$ is the entropy of the random walk's step at $v$.
A simple argument (Lemma~\ref{lem:type-2}) shows that, for all balanced vertices $v$, we have $p(v^-)/p(v) \ge \delta/2$ and hence $\mathbb{H}(v) \ge \mathbb{H}(\delta/2) \ge \frac{\delta}{2}\log\frac{2}{\delta}$.
We thus have
\begin{align}
  \log n\ &= \   \sum_{v\in\mathcal{T}_G}^{} p(v)\mathbb{H}(v) \ge \sum_{v\text{ balanced}}^{} p(v)\mathbb{H}(v) \ge \frac{\delta}{2}\log\left(\frac{2}{\delta}\right)\sum_{v\text{ balanced}}^{} p(v).
\end{align}
Hence,
\begin{align}
  \sum_{v\text{ balanced}}^{} p(v) \ &\le \ \frac{\log n}{\log\frac{2}{\delta}}\cdot\frac{2}{\delta}.
\end{align}

\paragraph{Bounding the weight of imbalanced vertices.}
To bound the cost of imbalanced vertices, we crucially use a connection to \textsc{Min~Sum~Set~Cover} (MSSC).
In MSSC, one is given a universe $S$ and sets $A_1,\dots,A_M$, and needs to construct an ordering $\sigma:[M]\to[M]$ of the sets that minimizes the \emph{cost}: the cost of a solution $\sigma$ is the average of the \emph{cover times} of the elements in the universe $S$.
That is, the cost $\MSSC(\sigma)$ of a solution $\sigma$ is
\begin{align}
   \MSSC(\sigma)
  \ \defeq \ \frac{1}{n}\sum_{h\in S}^{}  \min\{\ell: h\in A_{\sigma(\ell)}\}.
\label{}
\end{align}
A result by Feige, Lovasz, and Tetali shows that the greedy algorithm gives a 4 approximation of MSSC, and they show this is tight by proving that finding a $(4-\varepsilon)$ approximation of MSSC is NP-hard.
On the lower bound side, a connection between MSSC and \textsc{Decision~Tree} was already known: Chakaravarthy et al. \cite{chakaravarthy2007decision} proved that it is NP-hard to approximate \textsc{Uniform~Decision~Tree} with ratio between than $4-\varepsilon$ by a reduction to MSSC.
The key technical contribution of our work is showing that there is also a connection on the upper bound side.
Bounding the weight of imbalanced vertices works as follows.
\begin{enumerate}
  \item For each chain $P$, define a corresponding instance $\MSSC\ind{P}$
    (Definition~\ref{def:mssc}) of \\ \textsc{Min~Sum~Set~Cover} \emph{induced} by the chain $P=(P_1,\dots,P_{|P|})$ as follows:

\begin{itemize}
\item Universe $S\defeq L(P_1)$, the set of all hypotheses that are consistent with $P_1$.
\item For $j=1,\dots,m$, the set $A_j$ is the set of hypotheses in $S$ that give the minority answer of test $j$ with respect to hypotheses $S$. (See Figure~\ref{fig:mssc-set}).
\item For each $h=1,\dots,n$, a set $A_{m+h} \defeq \{h\}\cap S$. These tests are included for technical reasons.
\end{itemize}
Note we have a total of $m + n$ sets, so that a solution is a permutation $\sigma:[m+n]\to[m+n]$.
The sets $A_j$ for $j=1,\dots,m$ are chosen so that the second step below holds.

\item Prove that the weight of a chain $P$ is bounded by the cost of a greedy solution to MSSC$\ind{P}$ (Lemma~\ref{lem:mssc}), and hence, using a result of Feige, Lovasz, and Tetali (Theorem~\ref{thm:flt}), by 4 times the optimal cost of MSSC$\ind{P}$ (Corollary~\ref{lem:mssc-2}).
That is, there exists a greedy solution $\sigma_G$ to $\MSSC\ind{P}$ such that
\begin{align}
  \sum_{v\in P}^{} p(v) 
  \ \le \ \MSSC\ind{P}(\sigma_G)
  \ \le \ 4\MSSC\ind{P}(\sigma_\text{OPT})
\end{align}
This step is somewhat technical, as one must show that the greediness of the greedy decision tree $\mathcal{T}_G$ produces a greedy solution to $\MSSC\ind{P}$.
The choice of $\sigma_G$ is natural: for $\ell=1,\dots,|P|$, let $\sigma_G(\ell)$ be the index of the test used at vertex $P_\ell$ in the chain $P$ (see Figure~\ref{fig:mssc}).
However, showing that this $\sigma_G$ is in fact a greedy solution to $\MSSC\ind{P}$ is a subtle argument that depends on the carefully chosen definition of a chain.
\begin{figure}
\begin{minipage}[t]{.45\textwidth}
  \begin{center}
  \includegraphics[height=150px]{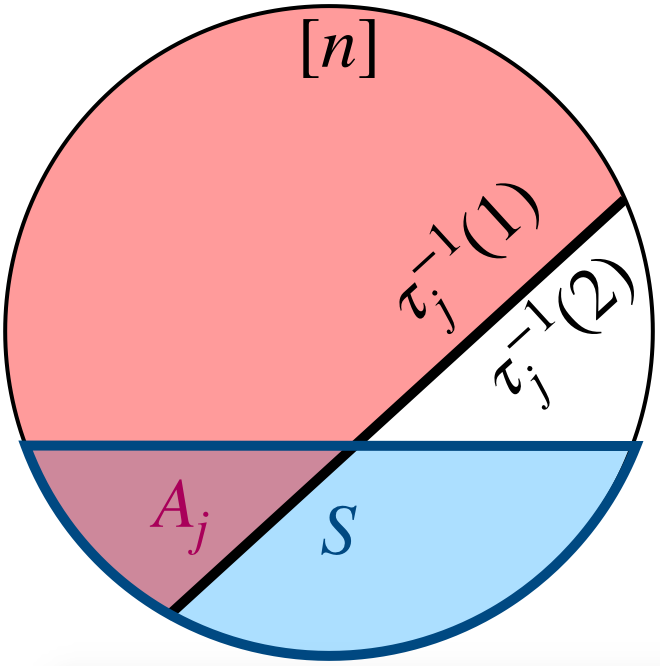}
  \end{center}
  \captionof{figure}{Illustration of set $A_j\subset S$, the part of test $j$'s partition of $S$ containing the smaller fraction of $S$.}
  \label{fig:mssc-set}
\end{minipage}
\begin{minipage}{0.04\textwidth}
\qquad
\end{minipage}
\begin{minipage}[t]{.5\textwidth}
  \includegraphics[height=150px]{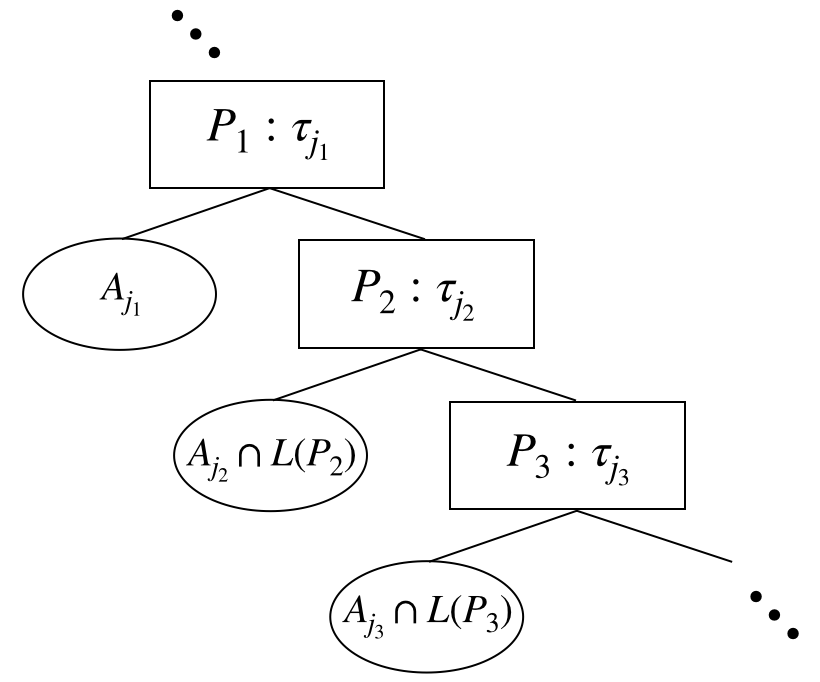}
  \captionof{figure}{MSSC sets $A_j$ in the greedy tree.
  Importantly (and subtly), sets $A_{j_1}, A_{j_2},\dots,$ cover the minority branches of the tree.
  We choose $\sigma_G$ by reading off the tests of the greedy tree: $\sigma_G(1)=j_1,\dots$.
  }
  \label{fig:mssc}
\end{minipage}
\end{figure}

\item Prove that, for any integer $s$, the sum, over all level-$s$ chains $P$, of optimal cost of MSSC$\ind{P}$, is bounded by $C_\text{OPT}$ (Lemma~\ref{lem:sum-2}).
Hence,
\begin{align}
  \sum_{P\in\mathcal{P}_s}^{} \MSSC\ind{P}(\sigma_\text{OPT})
  \ \le \ C_\text{OPT}.
\end{align}
This step is also technical, as one must draw the connection between the optimal MSSC solution and the optimal decision tree.

\item In total, we have
\begin{align}
  \sum_{v\text{ level $s$}}^{} p(v)
  \ &= \ \sum_{P\in\mathcal{P}_s}^{} \sum_{v\in P}^{} p(v)
  \ \le \ \sum_{P\in\mathcal{P}_s} 4\MSSC\ind{P}(\sigma_\text{OPT})
  \ \le \  4C_\text{OPT},
\end{align}
where the first inequality is by part 2 and the second inequality is by part 3.
In other words, for any integer $s$, the sum of the weights of all level $s$ chains is at most $4C_\text{OPT}$.
Hence, the sum of the weights of vertices in any chain, and thus the total weight of all imbalanced vertices, is at most $4s_\text{max}C_\text{OPT}$ (Lemma~\ref{lem:sum-3}), where $s_\text{max}$ is the number of levels.
As $s_\text{max}\le \frac{\log n}{\log\frac{1}{\delta}}$, we have
\begin{align}
  \sum_{v\text{ imbalanced}}^{} p(v) \ \le \ \frac{4\log n}{\log \frac{1}{\delta}}\cdot C_\text{OPT}.
\label{}
\end{align}
\end{enumerate}

To finish the proof, we bound
\begin{align}
  C_G
  \ = \  \sum_{v\in\mathcal{T}_G^\mathrm{o}}^{} p(v)
  \ = \  \sum_{v\text{ balanced}}^{} p(v) + \sum_{v\text{ imbalanced}}^{} p(v)
  \ \le \  \frac{\log n}{\log \frac{2}{\delta}}\cdot \frac{2}{\delta} + \frac{4\log n}{\log \frac{1}{\delta}}\cdot C_\text{OPT}.
\end{align}
The above is optimized roughly when $\delta=\frac{1}{C_\text{OPT}}$, giving the desired bound of $C_G\le \frac{6\log n}{\log C_\text{OPT}} \cdot C_\text{OPT}$.
If $n$ is sufficiently large, taking $\delta=\frac{10}{\varepsilon C_\text{OPT}}$ yields $C_G\le \frac{(4+\varepsilon)\log n}{\log C_{\OPT}}\cdot C_\text{OPT}$.

\subsection{General weights and larger $K$}
The proof of the general Theorem~\ref{thm:alg-main} follows similarly to the specific case given above.
The two differences are that Theorem~\ref{thm:alg-main} is stated for general $K$ and for general, not-necessarily-uniform distributions $p_1,\dots,p_n$.

Adapting the proof to general $K$ is the easier step.
The main difference is the definition of an \emph{imbalanced vertex}.
Now, we say a vertex $v$ is imbalanced if there is an integer $s$ such that $p(v)>2\delta^s$ and $p^-(v)\le \delta^s$, where $p^-(v)$ is the total weight of hypotheses in the subtrees of \emph{all children of $v$ except the majority vertex, $v^+$}, the child of $v$ with the largest weight of hypotheses.
Under this definition, a similar analysis follows.
Note that $p^-(v)$ could be much larger than $p(v^+)$ in this case, but this does not affect the proof much.
A little more care needed in the entropy argument for balanced vertices, and with the MSSC instance defined by a path $P$ now taking $A_j$ to be all hypotheses that do not take the majority answer of $\tau_j$ with respect to the MSSC universe.
Note that, if we specialize to $K=2$, the value $p^-(v)$ is simply $p(v^-)$.

In the weighted case, we again define $v$ to be imbalanced if there is an integer $s$ such that $p(v)>2\delta^s$ and $p^-(v)\le \delta^s$.
We again bound the cost of the balanced vertices by an entropy argument, and the cost of the imbalanced vertices via a connection to Min-Sum-Set-Cover.
However, because the entities are now weighted, we need to consider the greedy algorithm for a weighted generalization of MSSC called Weighted Min-Sum-Set-Cover (WMSSC).
In order to make the condition between the greedy decision tree and the greedy solution to WMSSC, we need a somewhat technical definition:
call a vertex $v$ is \emph{$h$-heavy} if $h$ is consistent with $v$ and $p_h>p^-(v)$.
Define $q(v) = p_h$ if there exists $h$ such that $v$ is $h$-heavy, and set $q(v)=0$ otherwise.
One can easily check that, for any vertex $v$, there is at most one $h$ such that $v$ is $h$-heavy, so $q(v)$ is well defined.
Now, we follow the argument in the uniform case, bounding
\begin{align}
  \sum_{v\text{ imbalanced}}^{} (p(v) - q(v))
  \ &\le \ \sum_{s=1}^{s_\text{max}} \sum_{P\in\mathcal{P}_s}^{} \sum_{v\in P}^{} (p(v) - q(v)) 
  \ \le \ \sum_{s=1}^{s_\text{max}} \sum_{P\in\mathcal{P}_s}^{} \WMSSC\ind{P}(\sigma_G\ind{P}) \nonumber \\
  \ &\le \ \sum_{s=1}^{s_\text{max}} \sum_{P\in\mathcal{P}_s}^{} 4\WMSSC\ind{P}(\sigma_\text{OPT}\ind{P}) 
  \ \le \ \sum_{s=1}^{s_\text{max}} C_\text{OPT}
  \ = \ s_\text{max}C_\text{OPT},
\end{align}
where $\sigma_G\ind{P}$ and $\sigma_\text{OPT}\ind{P}$ are the greedy solution and optimal solution, respectively, to the corresponding WMSSC.
The first inequality holds because $p(v) - q(v)\ge 0$ for all $v$ and every imbalanced vertex is in some chain\footnote{It is inequality because some imbalanced vertices may be in multiple chains}.
The second inequality holds by a technical lemma (Lemma~\ref{Z:lem:mssc}) comparing the greedy decision tree with a greedy solution to WMSSC.
Just as for MSSC, the greedy algorithm gives a 4 approximation for WMSSC, so the third inequality holds.
Additionally, for all $s=1,\dots,s_\text{max}$, we can still bound $\sum_{P\in\mathcal{P}_s}^{} \WMSSC\ind{P}(\sigma_\text{OPT})$, the sum of all WMSSC costs in a single level, by $C_\text{OPT}$, so the fourth inequality holds.
To finish:
\begin{align}
  \sum_{v\text{ imbalanced}}^{} p(v)
  \ \le \ s_\text{max}C_\text{OPT} + \sum_{v\in \mathcal{T}_G}^{} q(v)
  \ \le \ s_\text{max}C_\text{OPT} + \left( 1+\ln \frac{p_\text{max}}{p_\text{min}} \right) C_\text{OPT}.
\label{}
\end{align}
The last inequality (Lemma~\ref{Z:lem:alg-29}) comes from comparing, for fixed $h$, the vertices $v$ of the greedy tree $\mathcal{T}_G$ that are $h$-heavy to an appropriate SET-COVER instance, and using the fact that the greedy algorithm on a weighted generalization of SET-COVER gives a $1+\ln\left( \frac{p_\text{max}}{p_\text{min}}\right)$ approximation (Theorem~\ref{Z:thm:sc}).

\section{Sketch of proof of Theorem~\ref{thm:subexp-alg}}
\label{sec:subexp}
\subsection{Algorithm}
We describe the algorithm that achieves a $\frac{25}{\alpha} + \log R$ approximation for $\DT(R)$.
In Section~\ref{app:C}, we give the details and describe how the same algorithm with minor adjustments gives an improved approximation guarantee for \textsc{Uniform~Decision~Tree}.

The key idea in the algorithm is that, if the optimal tree has cost at least $n^{3\alpha/4}$, then the greedy algorithm gives an $O(1/\alpha)$ approximation by Theorem~\ref{thm:alg-main}. 
Fix $b\defeq \ceil{(12\log n +\log R)n^\alpha}$.
Our algorithm first computes the greedy tree.
If the cost of the greedy tree is at least $n^{-\alpha/4}b$, we simply return the greedy tree. 
Otherwise, we perform an exhaustive search over decision trees of depth at most $b$ such that all hypotheses not consistent with vertices at depth $b$ are uniquely distinguished.
We choose such a tree $\mathcal{T}$ with minimum cost (see definition of $C(\mathcal{T}; H)$ below).
Finally, at each leaf $v$ of $\mathcal{T}$ at depth $b$, we recursively compute a decision tree that distinguishes the hypotheses consistent with $v$.
The runtime of this algorithm is dominated by the exhaustive search, which we can solve in time $2^{\tilde O(n^{\alpha})}$ using a divide-and-conquer algorithm. 

Let $C(\mathcal{T}; H)$ denote the \emph{cost of a decision tree $\mathcal{T}$ with respect to hypothesis set $H$}, given by
\begin{align}
  C(\mathcal{T}; H) \defeq \frac{1}{p(H)}\sum_{h\in H}^{} p_h d_\mathcal{T}(h),
\label{}
\end{align}
where $d_\mathcal{T}(h)$ is the depth of the deepest vertex of $\mathcal{T}$ consistent with $h$.
In this way, we have $C(\mathcal{T}) = C(\mathcal{T}; [n])$.
To solve the \textsc{Decision~Tree} instance, we run Fulltree$_\alpha([n])$ below.
\begin{figure}[H]
  \begin{minipage}[t]{.48\linewidth}
    \begin{algorithm}[H]
      \caption{FullTree$_\alpha(H)$}
      \begin{algorithmic}
        \STATE Compute $ \mathcal{T}_G(H)$.
        \IF{ $p(H) < \frac{1}{n^2}$ }
          \STATE \textbf{Return} $\mathcal{T}_G(H)$
        \ENDIF
        \IF{ $C(\mathcal{T}_G(H);H) \geq n^{-\alpha/4}b$ }
          \STATE \textbf{Return} $\mathcal{T}_G(H)$
        \ENDIF
        \STATE $\mathcal{T}$ = PartialTree$(H,b)$.
        \FOR{$v \in \text{leaves}(\mathcal{T})$ with $d_\mathcal{T}(v) = b$}
          \STATE $H'=$ hypotheses in $H$ consistent with $v$.
          \STATE $\mathcal{T}_v = \text{FullTree}_\alpha(H')$
          \STATE Attach the root of $\mathcal{T}_v$ to $\mathcal{T}$ at vertex $v$.
        \ENDFOR
        \STATE \textbf{return} $\mathcal{T}$
      \end{algorithmic}
      \label{alg:fulltree}
    \end{algorithm}
  \end{minipage}
  \begin{minipage}[t]{.48\linewidth}
    \begin{algorithm}[H]
      \caption{PartialTree$(H,b')$}
      \begin{algorithmic}
        \IF{$b'=0$ or $|H| \leq 1$}
           \STATE \textbf{Return} leaf node
        \ENDIF
        \FOR{$j=1, \dots, m$}
          \FOR{$k=1, \dots, K$}
            \STATE $H_{j,k} = \{h \in H: \tau_j(h)=k\}$
            \STATE $\mathcal{T}_{j,k} = \text{PartialTree}(H_{j,k}, b'-1)$
          \ENDFOR
          \STATE Create tree $\mathcal{T}_j$ using $\tau_j$ at root and $\mathcal{T}_{j,k}$ as subtrees
        \ENDFOR
        \STATE $j^* = \argmin_j C(\mathcal{T}_j; H)$
        \STATE \textbf{Return} $\mathcal{T}_{j^*}$
        \STATE %
      \end{algorithmic}
      \label{alg:opt_trunc_tree}
    \end{algorithm}
  \end{minipage}
\end{figure}

\subsection{Analysis sketch}
We now sketch an analysis of the algorithm.
First, it is easy to check that FullTree$_\alpha([n])$ returns a valid decision tree.
By Theorem~\ref{thm:alg-main}, when the greedy tree is used in the recursive call FullTree$_\alpha(H)$, it gives an $O(1/\alpha) + \log R$ approximation to the \textsc{$\DT(R)$} instance induced by $H$.
Hence, by careful bookkeeping, the greedy trees included in the output tree contribute at most $(O(1/\alpha)+\log R)C_{OPT}$ to the cost (Lemma~\ref{lem:s-5}).
If the greedy tree is not used, then, in the optimal tree, the weighted average depth of the hypotheses is at most $n^{3\alpha/4}$. 
Hence, by a simple counting argument, at each recursive call, the fraction of undistinguished hypotheses shrinks by a factor of $n^{\alpha/4}$, so the maximum depth of recursive calls is $O(1/\alpha)$ (Lemma~\ref{lem:s-7}).
Careful bookkeeping shows that, for any $i=1,\dots,O(1/\alpha)$, the outputs to PartialTree$(H,b)$ called from the $i$th level of recursion collectively contribute at most $C_{\text{OPT}}$ to the cost of the output tree (Lemma~\ref{lem:s-6}).
Hence, the trees computed by exhaustive search across all levels of recursion contribute a cost of $O(C_{\text{OPT}}/\alpha)$.
Hence, the cost of our output tree is $(O(1/\alpha)+\log R)C_{OPT}$.

\section{Proof of Theorem~\ref{thm:alg-main} for uniform weights and $K=2$}

\label{sec:main-proof}

We prove a special case of Theorem~\ref{thm:alg-main} when $K=2$ and the weights $p_1=\cdots=p_n=\frac{1}{n}$ are uniform, that is, we show that the \textsc{Uniform~Decision~Tree} with binary tests gives an $O(\frac{\log n}{\log C_{\OPT}})$ approximation.
Throughout this section, we have a \textsc{Uniform~Decision~Tree} instance with hypotheses $[n]$ and tests $\tau_1,\dots,\tau_m:[n]\to[2]$.
\begin{theorem}
  \label{thm:alg-main-2}
  For any instance of the \textsc{Uniform~Decision~Tree} problem on $n$ hypotheses with branching factor  2, and any greedy tree $\mathcal{T}_G$ with average cost $C_G$, we have
  \begin{align}
    C_G
    \ &\le \ \frac{6\log n}{\log C_{\OPT}}\cdot C_{\OPT}. 
  \label{}
  \end{align}
\end{theorem}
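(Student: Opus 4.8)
The plan is to follow the outline from Section~\ref{sec:main-sketch}. First I would record the double-counting identity: since each hypothesis $h$ lies in $L(v)$ for exactly $d_{\mathcal{T}_G}(h)$ interior vertices $v$ (those on its root-to-leaf path), and $p_h=1/n$, we get $C_G=\sum_{v}p(v)$ where the sum runs over interior vertices of $\mathcal{T}_G$. Then I fix a parameter $\delta\in(0,1/2)$ — ultimately $\delta=1/C_{\OPT}$ — and call an interior vertex $v$ \emph{imbalanced of level $s$} if there is a positive integer $s$ with $p(v)>2\delta^s$ and $p(v^-)\le\delta^s$, where $v^-$ is the lighter child of $v$, and \emph{balanced} otherwise. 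Since $p(v)\le 1$ forces $\delta^s<1/2$ and $p(v^-)\ge 1/n$ (both children of an interior vertex of a complete tree are nonempty), the level of any imbalanced vertex lies in $\{1,\dots,s_{\max}\}$ with $s_{\max}\defeq\log n/\log(1/\delta)$. The remainder of the proof bounds $\sum_{v\text{ balanced}}p(v)$ and $\sum_{v\text{ imbalanced}}p(v)$ separately.

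\textbf{Balanced vertices.} Here I would use an entropy argument. A uniformly random hypothesis from $[n]$ has entropy $\log n$, and can be sampled by the random walk that starts at the root and at each interior vertex $v$ moves to $v^-$ with probability $p(v^-)/p(v)$; the chain rule for entropy gives $\log n=\sum_v p(v)\,\mathbb{H}\!\left(p(v^-)/p(v)\right)$. A short lemma shows any balanced $v$ has $p(v^-)/p(v)\ge\delta/2$: otherwise, taking $s$ to be the largest integer with $2\delta^s<p(v)$, maximality gives $\delta^s\ge\tfrac{\delta}{2}p(v)>p(v^-)$, so $v$ would be imbalanced. Since $\mathbb{H}$ is increasing on $(0,1/2]$ and $\mathbb{H}(\delta/2)\ge\tfrac{\delta}{2}\log\tfrac{2}{\delta}$, restricting the entropy sum to balanced vertices yields
\begin{align}
  \sum_{v\text{ balanced}}p(v)\ \le\ \frac{2}{\delta}\cdot\frac{\log n}{\log(2/\delta)}.
\end{align}

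\textbf{Imbalanced vertices.} This is the technical heart. I would first prove the structural fact that, for each level $s$, the level-$s$ imbalanced vertices of $\mathcal{T}_G$ partition into vertex-disjoint downward paths (\emph{chains}), so that no root-to-leaf path meets more than one level-$s$ chain; write $\mathcal{P}_s$ for the set of level-$s$ chains. For each chain $P=(P_1,\dots,P_{|P|})$ I form the induced instance $\MSSC\ind{P}$: universe $L(P_1)$, a set $A_j$ for each test $j$ equal to the minority side of $\tau_j$'s partition of $L(P_1)$, plus a singleton $\{h\}$ for each hypothesis. Then I establish: (1) reading off the tests used along $P$ gives a \emph{greedy} MSSC solution $\sigma_G$ with $\sum_{v\in P}p(v)\le\MSSC\ind{P}(\sigma_G)$, so by the $4$-approximation of greedy for MSSC (Feige, Lovasz, Tetali) $\sum_{v\in P}p(v)\le 4\,\MSSC\ind{P}(\sigma_{\OPT})$; and (2) for each fixed $s$, $\sum_{P\in\mathcal{P}_s}\MSSC\ind{P}(\sigma_{\OPT})\le C_{\OPT}$, obtained by restricting $\mathcal{T}_{\OPT}$ to each universe $L(P_1)$ to get an MSSC solution and using disjointness of level-$s$ chains so these pieces sum into a single copy of $C_{\OPT}$. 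Summing (1) over $P\in\mathcal{P}_s$, applying (2), and summing over $s\le s_{\max}$,
\begin{align}
  \sum_{v\text{ imbalanced}}p(v)\ \le\ \sum_{s=1}^{s_{\max}}\sum_{P\in\mathcal{P}_s}4\,\MSSC\ind{P}(\sigma_{\OPT})\ \le\ 4s_{\max}C_{\OPT}\ =\ \frac{4\log n}{\log(1/\delta)}\cdot C_{\OPT}.
\end{align}

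\textbf{Combining and the main obstacle.} Adding the two bounds and taking $\delta=1/C_{\OPT}$ (valid since $C_{\OPT}\ge\log_2 n>2$ past a small $n$, the remaining cases being trivial), the first term is $\le \frac{2C_{\OPT}\log n}{\log(2C_{\OPT})}\le\frac{2C_{\OPT}\log n}{\log C_{\OPT}}$ and the second is $\frac{4C_{\OPT}\log n}{\log C_{\OPT}}$, giving $C_G\le\frac{6\log n}{\log C_{\OPT}}C_{\OPT}$. I expect the \textbf{main obstacle} to be claims (1) and (2) above: matching the greedy decision-tree rule (minimize the larger part at each vertex) against the greedy MSSC rule (cover the most uncovered elements at each step) works only for the precise definition of a chain — imbalanced vertices along a chain successively shave off minority pieces whose sizes behave geometrically — and the non-overcounting in (2) relies on the disjointness that the same definition guarantees. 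By contrast, the double-counting identity, the entropy bound, and the final optimization of $\delta$ are routine.
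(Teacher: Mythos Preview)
Your proposal is correct and follows essentially the same approach as the paper's proof: the same double-counting identity, the same balanced/imbalanced split with parameter $\delta$, the same entropy bound for balanced vertices, the same chain decomposition and MSSC comparison (via the Feige--Lov\'asz--Tetali $4$-approximation) for imbalanced vertices, and the same final choice $\delta=1/C_{\OPT}$. One small slip: in the balanced-vertex lemma you write ``largest integer with $2\delta^s<p(v)$,'' but since $\delta<1$ this holds for all large $s$; you mean the \emph{smallest} such positive integer, so that $2\delta^{s-1}\ge p(v)$ gives $\delta^s\ge\tfrac{\delta}{2}p(v)>p(v^-)$ as you then use.
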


\subsection{Notation}
\label{ssec:main-proof:not}

We use the following notation for our proof.
These notations help us reason about the greedy tree.
We write $v\in \mathcal{T}_G$ to mean that $v$ is a vertex of tree $\mathcal{T}_G$, and we write $v\in \mathcal{T}_G^\mathrm{o}$ to mean that $v$ is a interior vertex.
We say the \emph{length} of a path in the tree is the number of edges along the path.
For $v,u\in\mathcal{T}_G$, we say $v$ is an \emph{ancestor} of $u$ if there is a (possibly degenerate) path from $v$ to $u$ going down the tree.
In particular, $v$ is an ancestor of $v$.
We write this as $u\sqsubseteq v$.
We call $u$ a \emph{descendant} of $v$ if and only if $v$ is an ancestor of $u$.
For $v\in \mathcal{T}_G$, let $L(v)\subseteq[n]$ denote the set of hypotheses consistent with $v$.
For a subset $S\subseteq [n]$ of hypotheses, denote its \emph{weight} or \emph{cost} by $p(S) \defeq \sum_{h\in S}^{} p_h = \frac{|S|}{n}$. 
For brevity, let $p(v)\defeq p(L(v))$, denoting the \emph{weight} of vertex $v$, and we say the \emph{weight} of a set of vertices is the sum of the weights of the individual vertices in the set.

\subsection{The basic argument}
The following lemma shows that, rather than accounting the cost of the greedy tree by summing the depths of the leaves associated with the hypotheses, we can instead account the cost by summing the weights of vertices of the tree.
\begin{lemma}
  \label{lem:alg-10}
  We have $C_G =  \sum_{v\in \mathcal{T}_G^\mathrm{o}}^{} p(v)$. 
\end{lemma}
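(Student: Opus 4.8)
The plan is to prove this by a double-counting (exchange-of-summation) argument, writing the cost as a sum over hypotheses of depths, and then swapping the order of summation so that each interior vertex is counted once for each hypothesis consistent with it. Concretely, for a hypothesis $h\in[n]$, let $v_0,v_1,\dots,v_{d_{\mathcal{T}_G}(h)}$ be the root-to-leaf path ending at the leaf $v$ with $L(v)=\{h\}$. The depth $d_{\mathcal{T}_G}(h)$ equals the number of edges on this path, which equals the number of interior vertices on the path, i.e. $d_{\mathcal{T}_G}(h) = \#\{v\in\mathcal{T}_G^{\mathrm{o}} : h \in L(v)\}$, since a vertex $v$ is on this path precisely when $h$ is consistent with $v$, and the leaf is the unique non-interior vertex on the path.

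The key steps, in order, are: (1) Start from the definition $C_G = \sum_{h\in[n]} p_h\, d_{\mathcal{T}_G}(h)$. (2) Rewrite $d_{\mathcal{T}_G}(h) = \sum_{v\in\mathcal{T}_G^{\mathrm{o}}} \mathbf{1}[h\in L(v)]$ using the observation above — this requires the small fact that the interior vertices consistent with $h$ are exactly the non-leaf vertices on the root-to-leaf path for $h$, which in turn uses that $\mathcal{T}_G$ is complete (so that path genuinely terminates at a leaf singleton $\{h\}$) and that consistency with a vertex is an ``ancestor-closed'' condition along that path. (3) Substitute and exchange the order of the two sums: $C_G = \sum_{h\in[n]} p_h \sum_{v\in\mathcal{T}_G^{\mathrm{o}}} \mathbf{1}[h\in L(v)] = \sum_{v\in\mathcal{T}_G^{\mathrm{o}}} \sum_{h\in L(v)} p_h = \sum_{v\in\mathcal{T}_G^{\mathrm{o}}} p(v)$, where the last equality is just the definition $p(v) = p(L(v)) = \sum_{h\in L(v)} p_h$.

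I expect the only real (and minor) obstacle to be the careful justification in step (2): namely, verifying that for each hypothesis $h$, the set of interior vertices $v$ with $h\in L(v)$ is exactly the set of non-leaf vertices on the unique root-to-leaf path associated with $h$, and that this set has size exactly $d_{\mathcal{T}_G}(h)$. This follows because (a) every vertex consistent with $h$ lies on that path — if $v$ is consistent with $h$, then so is every ancestor of $v$, and among the descendants of $v$ consistent with $h$ we can always follow the edge labeled $\tau_{j_v}(h)$, so the consistent vertices form a single downward path from the root; (b) this path ends at the leaf $\{h\}$ by completeness; and (c) the path has $d_{\mathcal{T}_G}(h)$ edges hence $d_{\mathcal{T}_G}(h)$ interior vertices (every vertex except the terminal leaf). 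Everything else is a one-line interchange of finite sums, so the proof is short.
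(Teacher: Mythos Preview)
Your proposal is correct and takes essentially the same approach as the paper: both express $d_{\mathcal{T}_G}(h)$ as the number of interior ancestors of the leaf for $h$, then swap the order of summation over $h\in[n]$ and $v\in\mathcal{T}_G^{\mathrm{o}}$ to obtain $\sum_{v}p(v)$. The paper's proof is a one-line display of exactly this double-counting, so your more detailed justification of step~(2) is fine but not strictly necessary.
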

\begin{proof}
  We have,
  \begin{align}
    C_G
    \ &= \ \sum_{h\in[n]}^{} p_hd_{\mathcal{T}_G}(h) 
    \ = \ \sum_{h\in[n]}^{} \sum_{\substack{v\in\mathcal{T}_G^\mathrm{o} \\ h\sqsubseteq v}}^{} p_h
    \ = \ \sum_{v\in\mathcal{T}_G^\mathrm{o}}^{} \sum_{h\in L(v)}^{} p_h
    \ = \ \sum_{v\in\mathcal{T}_G^\mathrm{o}}^{} p(v),
  \end{align}
  where, in the third equality, we switched the order of summation.
\end{proof}

At a high level, our proof defines \emph{balanced} and \emph{imbalanced} vertices (next subsection) using a parameter $\delta$ and bound the weight of the balanced and imbalanced vertices separately.
We bound the weight of the balanced vertices by an entropy argument, and the weight of the imbalanced vertices by partitioning the imbalanced vertices into paths, called \emph{chains}, and bounding the weights of each chain separately.
Overall, we get the following bound.
  \begin{align}
    C_G
    \ = \  \sum_{v\in\mathcal{T}_G^\mathrm{o}}^{} p(v) 
    \ = \  \sum_{v\text{ balanced}}^{} p(v) + \sum_{v\text{ imbalanced}}^{} p(v)
    \ \le \  \frac{\log n}{\log \frac{2}{\delta}}\cdot \frac{2}{\delta} + \frac{4\log n}{\log \frac{1}{\delta}}\cdot C_{\OPT}.
  \end{align} 
  Choosing $\delta=\frac{1}{C_{\OPT}}$ gives $C_G\le \frac{6\log n}{\log C_{\OPT}} \cdot C_{\OPT}$.

For the rest of the proof, fix $\delta=\frac{1}{C_{\OPT}}$.
Additionally, for convenience and without loss of generality, assume that our instance is \emph{nontrivial}, i.e.\ there is some test $j$ such that both of $\tau^{-1}_j(1)$ and $\tau^{-1}_j(2)$ have at least 2 hypotheses, as otherwise the greedy tree is optimal and $C_G = C_{\OPT}$ and the theorem is true.

\subsection{More notation: Majority and minority answers}
We define majority (minority) answers, edges, children.
These definitions are useful for defining \emph{balanced} and \emph{imbalanced} vertices.
We later show that imbalanced vertices form paths whose edges are majority edges.
We call these paths \emph{chains}.
We then analyze the balanced and imbalanced vertices separately, and in particular analyze each path of majority edges separately.

\begin{figure}
\begin{center}
    \includegraphics[width=200px]{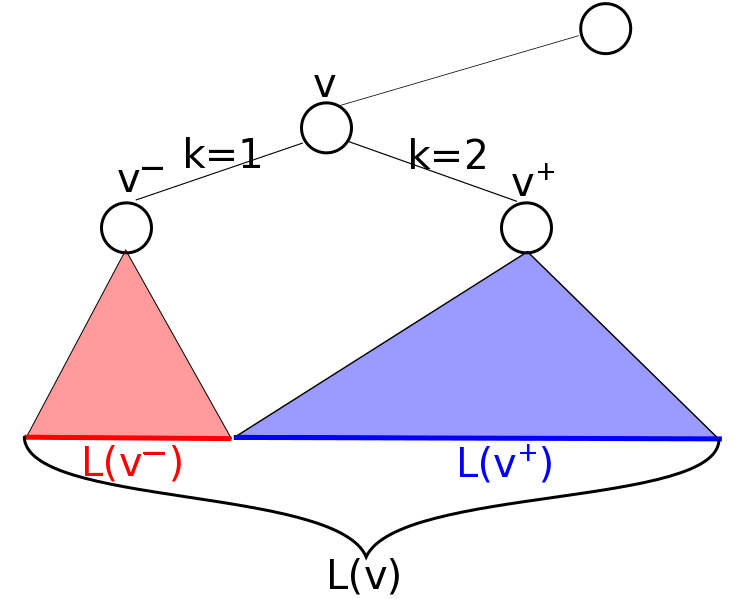}
    \caption{An illustration of the definitions of $v^-, v^+$ and $L(v)$.}
\label{fig:tree}
\end{center}
\end{figure}
For each vertex $v$ in the greedy tree, let $j_v$ denote the test used at $v$.
For each vertex $v$, label its children by $v^+$ and $v^-$ so that $p(v^+)\ge p(v^-)$, with ties broken\footnote{any tiebreaking procedure suffices, as long as the tiebreaking is consistent with the $k_{j,S}^+$ and $k_{j,S}^-$ notation in the next paragraph.} by labeling $v^+$ by the vertex corresponding to a test outputting 1.\footnote{it is possible to have a vertex that has one child, namely a test that doesn't distinguish any pairs of hypotheses at a vertex, but such a test is useless and never appears in either the greedy or optimal tree, so we assume such vertices don't exist.}
Accordingly, we have $p(v) = p(v^+) + p(v^-)$ for all $v\in\mathcal{T}_G^\mathrm{o}$.
Call the edge from $v$ to $v^+$ a \emph{majority edge}, and the edge from $v$ to $v^-$ a \emph{minority edge}. 
This is illustrated in Figure~\ref{fig:tree}.

In order to reason about the greedy tree precisely, we use the following notation which is more technical.
For test $j\in[m]$ and hypotheses $S\subseteq[n]$, let $k^+_{j,S}\in[2]$ be the answer to test $\tau_j$ that accounts for the maximum weight of hypotheses in $S$, and let $k^-_{j,S}$ be the other index, with ties broken by $k^+_{j,S}=1$.
In other words, $k^+_{j,S}$ and $k^-_{j,S}$ are chosen so that $|\tau_j^{-1}(k^+_{j,S})\cap S|\ge |\tau_j^{-1}(k^-_{j,S})\cap S|$.
We call $k^+_{j,S}$ the \emph{majority answer of test $j$ with respect to hypothesis set $S$}.
Call the other answer $k^-_{j,S}$ the \emph{minority answer of test $j$ with respect to hypothesis set $S$}.
For all $j\in[m]$ and $S\subseteq [n]$, let
\begin{align}
  I_{j,S}^+ \ \defeq \  \tau_j^{-1}(k^+_{j,S}),
  \qquad
  I_{j,S}^- \ \defeq \ \tau_j^{-1}(k^-_{j,S}).
\end{align}
We think of $I_{j,S}^+$ ($I_{j,S}^-$) as the set of hypotheses that, under test $j$, output the majority (minority) answer to test $j$ with respect to set $S$.
Note that, with the above notation, we have $L(v^+) = I_{j_v,L(v)}^+\cap L(v)$ and $L(v^-) = I_{j_v,L(v)}^-\cap L(v)$.

The following is a key property of the greedy tree $\mathcal{T}_G$: the weight of hypotheses consistent with the minority child $v^-$ decreases as we descend the tree.
\begin{lemma}
  For any vertices $u$ of $v$ with $u$ a descendant of $v$, we have $p(v^-)\ge p(u^-)$.
\label{lem:alg-9}
\end{lemma}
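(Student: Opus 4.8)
The plan is to reduce the claim to the single-step case. Since $u$ is a descendant of $v$, there is a path $v = w_0, w_1, \dots, w_t = u$ going down the tree, where each $w_{i+1}$ is a child of $w_i$. It suffices to show that $p(w_i^-) \ge p(w_{i+1}^-)$ for each $i$, and then chain these inequalities together. So fix a vertex $w$ with child $w'$ (where $w'$ is either $w^+$ or $w^-$), and the goal becomes: $p(w^-) \ge p((w')^-)$.

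The key observation is that the greedy choice at $w'$ could have used the \emph{same} test $j_w$ that was used at $w$. Consider what happens if we apply test $\tau_{j_w}$ to the hypothesis set $L(w')$: since $L(w') \subseteq L(w)$, the partition of $L(w')$ induced by $\tau_{j_w}$ is a subset of the partition of $L(w)$. Thus the larger of the two parts of $\tau_{j_w}$'s split of $L(w')$ has weight at most $p(w^+) $, and in fact the \emph{smaller} part has weight at most $p(w^-)$ (each part of the $L(w')$-split is contained in the corresponding part of the $L(w)$-split, and $L(w^-) = I^-_{j_w, L(w)} \cap L(w)$ has weight $p(w^-)$, so $I^-_{j_w,L(w)} \cap L(w')$ has weight at most $p(w^-)$; symmetrically for the other part; hence the minority part of the $L(w')$-split under $\tau_{j_w}$ has weight $\le p(w^-)$). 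Therefore, using test $j_w$ at $w'$ would yield a minority child of weight at most $p(w^-)$. Since the greedy algorithm at $w'$ picks the test minimizing the weight of the \emph{majority} child (equivalently, for $K=2$, the test whose split is most even), and test $j_w$ achieves majority-child weight $\ge p(w')/2$ — actually I should phrase this in terms of the greedy criterion directly: greedy picks $j_{w'}$ minimizing $\max_k p(L(w') \cap \tau^{-1}_j(k))$, i.e.\ minimizing the majority part, which is equivalent to \emph{maximizing} the minority part for $K=2$; but the cleanest route is: the greedy majority-child weight $p((w')^+)$ is at most the majority-child weight we'd get from test $j_w$, and since $p(w') \le p(w)$, a short computation gives $p((w')^-) = p(w') - p((w')^+) \le p(w) - p(w^+) = p(w^-)$ once we know test $j_w$'s split of $L(w')$ has minority weight $\le p(w^-)$ and majority weight $\le p(w^+)$. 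Combining: $p((w')^-) \le p(w^-)$.

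The main obstacle I anticipate is handling the tie-breaking and the $K=2$-specific equivalence between "minimize the majority part" and "maximize the minority part" cleanly, and making sure the bound from using test $j_w$ on $L(w')$ is stated correctly — one must be careful that it is the minority part of $\tau_{j_w}$'s split of $L(w')$ (not necessarily aligned with which child of $w$ contains $w'$) that is bounded by $p(w^-)$. Once the single-step inequality $p(w^-) \ge p((w')^-)$ is established for an arbitrary parent-child pair, iterating down the path from $v$ to $u$ immediately gives $p(v^-) \ge p(u^-)$, completing the proof.
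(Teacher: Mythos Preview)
There is a genuine gap, and it stems from comparing in the wrong direction. At $w'$, the greedy algorithm \emph{minimizes} the majority part, which for $K=2$ means it \emph{maximizes} the minority part $p((w')^-)$. So comparing the greedy choice at $w'$ against any other test---in particular $j_w$---yields a \emph{lower} bound on $p((w')^-)$, not the upper bound you need. Your chain ``$p((w')^+) \le$ [majority from $j_w$ at $w'$]'' rearranges to $p((w')^-) \ge$ [minority from $j_w$ at $w'$], which points the wrong way. Worse, since $w'$ is a child of $w$ obtained precisely by applying test $j_w$, the set $L(w')$ lies entirely on one side of $\tau_{j_w}$'s partition, so applying $j_w$ again at $w'$ gives the trivial split $(L(w'),\emptyset)$ and the comparison is vacuous.

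The paper's argument reverses the roles: it compares at $v$ using the test $j_u$. Greedy at $v$ maximizes the minority part, so $p(v^-) \ge p(I^-_{j_u,L(v)} \cap L(v))$. Since $L(u)\subseteq L(v)$, this is at least $p(I^-_{j_u,L(v)} \cap L(u))$; and since $\tau_{j_u}$ partitions $[n]$ into two parts, $I^-_{j_u,L(v)} \cap L(u)$ is one of $L(u^+)$ or $L(u^-)$, hence has weight at least $p(u^-)$. This gives $p(v^-)\ge p(u^-)$ directly for any descendant $u$, with no single-step reduction needed. If you want to keep the parent--child structure, the same fix applies: use $j_{w'}$ at $w$, not $j_w$ at $w'$.
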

\begin{proof}
  Because $\mathcal{T}_G$ was constructed greedily, for all $v\in\mathcal{T}_G^\mathrm{o}$, the test $j_v$ was chosen to maximize the weight of $I_{j_v,L(v)}^-\cap L(v)$, the hypotheses in $L(v)$ giving the minority answer $k_{j_v,L(v)}^-$.
  Hence, any other test, in particular, the test $j_u$ chosen at vertex $u$, has a smaller weight of hypotheses of $L(v)$ that give the minority answer of $j_u$ with respect to hypotheses $L(v)$.
  Hence, we have $p(v^-) \ge p(I_{j_u,L(v)}^-\cap L(v))$.
  Hence,
  \begin{align}
    p(v^-)
    \ \ge \ p\left(I_{j_u,L(v)}^-\cap L(v)\right) 
    \ \ge \ p\left(I_{j_u,L(v)}^-\cap L(u)\right)  
    \ \ge \   p(u^-).
  \end{align}
  The second inequality holds because $L(u)\subseteq L(v)$.
  The third inequality holds because test $j_u$ defines a partition of $[n]$ into two parts, and $I_{j_u,L(v)}^-$ is one of the two parts, so $I_{j_u,L(v)}^-\cap L(u)$ is one of $L(u^-)$ or $L(u^+)$.
\end{proof}

\subsection{Defining balanced and imbalanced vertices}
In the following definition, we identify \emph{balanced} vertices and \emph{imbalanced} vertices.
By Lemma~\ref{lem:alg-10}, we can separately bound the weights of the balanced and imbalanced vertices.
\begin{definition}
  \label{def:type}
  Let $s$ be a positive integer.
  \begin{enumerate}
  \item We say a vertex $v\in\mathcal{T}_G^\mathrm{o}$ is \emph{level-$s$ imbalanced} if $p(v^-) \le \delta^s$ and $p(v) > 2\delta^s$.
  \item We say a vertex $v$ is \emph{imbalanced} if it is level-$s$ imbalanced for some $s$, and \emph{balanced} otherwise.
  \item We say a level-$s$ imbalanced vertex $v$ is \emph{minimal} if no descendant of $v$ is also level-$s$ imbalanced vertex, and a level-$s$ imbalanced vertex $v$ is \emph{maximal} if no ancestor of $v$ is level-$s$ imbalanced.
  \end{enumerate}
\end{definition}
Let 
\begin{align}
    s_{\max} \defeq \frac{\log n}{\log\frac{1}{\delta}}
\end{align}
and note that level-$s$ imbalanced vertices exist only for $s\le s_{\max}$.
The following lemma proves a structural result about balanced vertices, with the punchline being item (iii), which permits Definition~\ref{def:chain}.
For an illustration, see Figure~\ref{fig:chain}.
\begin{lemma}
  \label{lem:type-1}
  Let $s$ be a positive integer.
\begin{enumerate}
  \item[(i)] If $v$ is a level-$s$ imbalanced vertex, then, among the children of $v$, only $v^+$ can be a level-$s$ imbalanced vertex.
  \item[(ii)] Additionally, if $v$ and $u$ are level-$s$ imbalanced vertices and $v$ is an ancestor of $u$, then every vertex on the path from $v$ to $u$ is a level-$s$ imbalanced vertex.
  \item[(iii)] Finally, the set of level-$s$ imbalanced vertices can be partitioned into vertex disjoint paths, each of which connects a maximal level-$s$ imbalanced vertex to a minimal level-$s$ imbalanced vertex and contains only majority edges.
\end{enumerate}
\end{lemma}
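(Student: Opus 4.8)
The plan is to derive (i) and (ii) immediately from the monotonicity facts already available, and then obtain (iii) as a purely combinatorial corollary. Besides Lemma~\ref{lem:alg-9} I will use only the trivial observation that if $u$ is a descendant of $v$ then $L(u)\subseteq L(v)$, hence $p(u)\le p(v)$. For (i): if $v$ is level-$s$ imbalanced then $p(v^-)\le\delta^s<2\delta^s$, so the child $v^-$ violates the defining inequality $p(\cdot)>2\delta^s$ and cannot itself be level-$s$ imbalanced; as $v^+,v^-$ are the only children, only $v^+$ can be. For (ii): if $v$ is an ancestor of $u$, both level-$s$ imbalanced, and $w$ lies on the path between them, then $w$ is a descendant of $v$, so Lemma~\ref{lem:alg-9} gives $p(w^-)\le p(v^-)\le\delta^s$, while $w$ is an ancestor of $u$, so $p(w)\ge p(u)>2\delta^s$; thus $w$ is level-$s$ imbalanced.

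For (iii), let $V_s$ denote the set of level-$s$ imbalanced vertices and let $F_s$ be the subgraph of $\mathcal{T}_G$ induced on $V_s$ (keep an edge of $\mathcal{T}_G$ iff both endpoints lie in $V_s$). Being a subgraph of a tree, $F_s$ is a forest, so it is the vertex-disjoint union of its connected components, and these components partition $V_s$. By (i) every vertex of $F_s$ has at most one $F_s$-child, namely its $v^+$-child; a connected subgraph of a tree in which each vertex has at most one child is a simple path running down the tree, and each of its edges, going from a vertex $w$ to its $F_s$-child, is the majority edge from $w$ to $w^+$ by (i). If such a component $P$ has top vertex $v_0$ and bottom vertex $v_1$, then $v_0$ must be maximal: a level-$s$ imbalanced proper ancestor $w$ of $v_0$ would, by (ii), make the entire $w$-to-$v_0$ path level-$s$ imbalanced, placing $w$ in the same component and contradicting the choice of $v_0$ as the top of $P$; symmetrically, $v_1$ is minimal (an imbalanced proper descendant of $v_1$ would force a downward extension of $P$). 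Hence the components of $F_s$ are exactly the asserted vertex-disjoint paths, each from a maximal to a minimal level-$s$ imbalanced vertex and consisting only of majority edges.

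I do not expect a genuine obstacle here: parts (i) and (ii) are one-line consequences of Lemma~\ref{lem:alg-9} and set inclusion, and the only mildly delicate point in (iii) is the standard fact that a connected subgraph of a tree in which every vertex has at most one child is a downward path, combined with the invocation of (ii) to certify that such a path cannot be extended past its endpoints (so that the endpoints are indeed maximal and minimal).
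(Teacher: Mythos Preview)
Your proof is correct and follows essentially the same approach as the paper. Parts (i) and (ii) match the paper's arguments verbatim; for (iii), your induced-subgraph/connected-components formulation is a slightly cleaner packaging of the paper's explicit path-building-and-contradiction argument, but the underlying idea (use (i) to bound the number of imbalanced children and (ii) to certify that the resulting paths cannot be extended past their endpoints) is the same.
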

\begin{proof}
  For (i), note that if $v$ is level-$s$ imbalanced, then $p(v^-)\le \delta^s < 2\delta^s$, so $v^-$ cannot be level-$s$ imbalanced.
  Hence, among the children of $v$, only $v^+$ can be level-$s$ imbalanced.

  For (ii), let $v\sqsupseteq w\sqsupseteq u$ be three vertices in the tree.
  Suppose that $v$ and $u$ are level-$s$ imbalanced.
  We know that $p(v) \ge p(w) \ge p(u) > 2\delta^s$, and Lemma~\ref{lem:alg-9} gives $\delta^s \ge p(v^-)\ge p(w^-)\ge p(u^-)$.
  Hence $w$ is level-$s$ imbalanced.
  
  For (iii), note that each level-$s$ imbalanced vertex has a maximal level-$s$ imbalanced ancestor (possibly itself), so we may partition the level-$s$ imbalanced vertices into sets based on their maximal level-$s$ imbalanced ancestor.
  We claim each set in the partition is a connected path.
  Let $v_1$ be the (unique) maximal level-$s$ imbalanced vertex in a set $P$.
  For $\ell=1,2,\dots$, if $v_\ell$ has a level-$s$ imbalanced child, let $v_{\ell+1}$ be that child, which is unique by the first item and in $P$ by definition.
  Let $\ell_P$ be the largest index such that $v_{\ell_P}$ is defined.
  Then $v_{\ell_P}$ has no level-$s$ imbalanced children.
  We claim $v_1,\dots,v_{\ell_P}$ are the only vertices in the set $P$.
  Suppose not.  Let $\ell$ be the largest index such that $v_{\ell}$ has a level-$s$ imbalanced descendant $u$ not among $v_1,\dots,v_{\ell_P}$.
  Then, by the second item, every vertex on the path from $v_{\ell}$ to $u$ is level-$s$ imbalanced.
  If $\ell=\ell_P$, this means $v_{\ell_P}$ has a level-$s$ imbalanced child, a contradiction of the maximality of $\ell_P$.
  Otherwise, as $\ell$ is maximal, $v_{\ell+1}$ is not on the path from $v_\ell$ to $u$, in which case, by (ii), $v_{\ell}^-$ is level-$s$ imbalanced, which contradicts (i).
  Thus, we always have a contradiction, so $P$ is the path $v_1,\dots,v_{\ell_P}$.
  By (i), every edge along $P$ is a majority edge.
  This completes the proof.
\end{proof}
\begin{figure}
\begin{center}
    \includegraphics[width=200px]{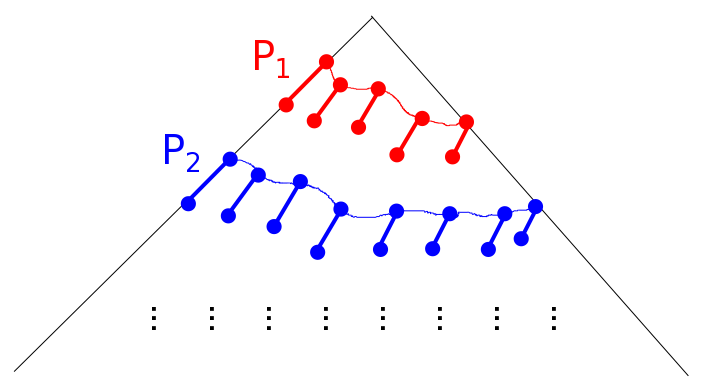} 
    \caption{
    Illustration of chains.
    The red vertices denote level-1 imbalanced vertices.
    The red segments denote level-1 chains.
    The blue vertices denote level-2 imbalanced vertices.
    The blue segments denote level-2 chains.
    The enlarged circular vertices denote maximal and minimal imbalanced vertices.
    }
    \label{fig:chain}
\end{center}
\end{figure}
Lemma~\ref{lem:type-1} motivates the following definition.
\begin{definition}
  \label{def:chain}
  Let $s$ be a positive integer.
  A \emph{level-$s$ chain}, $P=(P_1,\dots,P_{|P|})$, is a path of level-$s$ imbalanced vertices starting at a maximal level-$s$ imbalanced vertex and ending at a minimal level-$s$ imbalanced vertex.
  By Lemma~\ref{lem:type-1}, the level-$s$ chains partition the level-$s$ imbalanced vertices.
  We therefore let $\mathcal{P}_s$ denote the set of level-$s$ chains.
\end{definition}
In general, for $s\neq s'$, a level-$s$ chain might overlap with a level-$s'$ chain.

\subsection{Bounding the weight of balanced vertices}
We first prove a lemma that justifies the choice of the word ``balanced''.
\begin{lemma}
  For every balanced vertex $v$, we have $p(v^-)\ge \frac{\delta}{2}p(v)$.
  \label{lem:type-2}
\end{lemma}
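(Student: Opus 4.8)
The plan is to unfold the definition of ``balanced'' and push the two defining inequalities of imbalance through. Recall that $v$ is balanced exactly when, for every positive integer $s$, it fails to be level-$s$ imbalanced, i.e.\ for every $s$ we do \emph{not} have both $p(v^-)\le \delta^s$ and $p(v)>2\delta^s$. So for every $s$, either $p(v^-)>\delta^s$ or $p(v)\le 2\delta^s$. I would choose the value of $s$ that makes this dichotomy informative, namely the integer $s$ for which $\delta^s$ is approximately $p(v^-)$; more precisely, I would take $s$ to be the unique positive integer with $\delta^{s}< p(v^-)\le \delta^{s-1}$ (such an $s$ exists as long as $p(v^-)\le 1$, which holds, and $p(v^-)>0$, which we may assume since otherwise $v$ has only one child and never appears in the tree; the edge case $p(v^-)>\delta$, i.e.\ $s\le 0$, must be handled separately and is easy).

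With this choice, $p(v^-)\le \delta^{s-1}$ by definition of $s$. Now, since $v$ is balanced and thus not level-$(s-1)$ imbalanced (here I'd want $s-1\ge 1$; if $s=1$ then $p(v^-)>\delta$ and the bound $p(v^-)\ge\frac\delta2 p(v)\ge\frac\delta2$ is immediate because $p(v)\le 1$), and since $p(v^-)\le \delta^{s-1}$, the only way to fail level-$(s-1)$ imbalance is to have $p(v)\le 2\delta^{s-1}$. Combining, $p(v^-)> \delta^{s}=\delta\cdot\delta^{s-1}\ge \delta\cdot \frac{p(v)}{2}=\frac{\delta}{2}p(v)$, which is exactly the claimed inequality (in fact with strict inequality, but $\ge$ is what is asserted). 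The main obstacle — really the only delicate point — is bookkeeping the boundary cases: $s=1$ (equivalently $p(v^-)>\delta$), where we fall back on $p(v)\le 1$, and the degenerate case $p(v^-)=0$, which we exclude by the standing assumption that every interior vertex has two children with positive weight. I expect everything else to be a one-line substitution.

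An alternative, perhaps cleaner, phrasing avoids naming $s$: set $t\defeq p(v^-)$ and $w\defeq p(v)$, and suppose for contradiction $t<\frac{\delta}{2}w$, i.e.\ $\frac{w}{t}>\frac{2}{\delta}$. Pick the largest integer $s$ with $2\delta^s\ge w$ — wait, I would instead pick $s$ with $\delta^s\ge t$ maximal, so $\delta^{s+1}<t$, hence $w>\frac{2}{\delta}t>\frac{2}{\delta}\delta^{s+1}=2\delta^s\ge \dots$; tracking which inequality is strict, this forces $p(v)>2\delta^s$ and $p(v^-)=t\le\delta^s$, so $v$ is level-$s$ imbalanced, contradicting balancedness. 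Either route works; I would present whichever makes the integrality of $s$ least obtrusive. I'll then note that this lemma is exactly what feeds the entropy computation: $\mathbb{H}(v)=\mathbb{H}\!\big(p(v^-)/p(v)\big)\ge \mathbb{H}(\delta/2)\ge \frac{\delta}{2}\log\frac{2}{\delta}$ for balanced $v$, using monotonicity of binary entropy on $[0,\tfrac12]$ together with $p(v^-)/p(v)\le\tfrac12$.
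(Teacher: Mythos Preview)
Your proposal is correct and matches the paper's proof. The paper argues by contradiction exactly as in your second phrasing: assuming $p(v^-)<\frac{\delta}{2}p(v)$, it writes $p(v^-)=\delta^{s'}$ for a real $s'>1$, then checks that $v$ is level-$\lfloor s'\rfloor$ imbalanced; your choice of the maximal integer $s$ with $\delta^s\ge p(v^-)$ is the same $\lfloor s'\rfloor$, and your handling of the edge cases ($s=1$ via $p(v)\le 1$, and $p(v^-)=0$ excluded by the standing nontriviality assumption) is sound.
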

\begin{proof}
Assume for contradiction that $p(v^-) < \frac{\delta}{2}p(v)$.
Let $s'>1$ be the real number such that $p(v^-) = \delta^{s'} n$.
In this way, $p(v^-) \le \delta^{\floor{s'}}n$.
Then
\begin{align}
  p(v) > \frac{2}{\delta}p(v^-) = \frac{2}{\delta}\delta^{s'}n = 2\delta^{s'-1}n > 2\delta^{\floor{s'}}n.
\end{align}
This implies that $v$ is level-$\floor{s'}$ imbalanced, so $v$ is imbalanced, a contradiction.
\end{proof}
We now bound the contribution of the balanced vertices to the weight using an entropy argument.
\begin{lemma}
\label{lem:type-3}
  We have
  \begin{align}
    \sum_{v\text{ balanced}}^{} p(v) \le \frac{\log n}{\log\frac{2}{\delta}}\cdot \frac{2}{\delta}.
  \label{}
  \end{align}
\end{lemma}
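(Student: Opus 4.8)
The plan is to carry out the entropy argument sketched in Section~\ref{sec:main-sketch}. Consider the random variable $X$ obtained by picking a hypothesis from $[n]$ uniformly at random; since the weights are uniform, $X$ has Shannon entropy exactly $\log n$. I will generate $X$ by a random walk down $\mathcal{T}_G$: start at the root and, at each interior vertex $v$, move to child $v^+$ with probability $p(v^+)/p(v)$ and to child $v^-$ with probability $p(v^-)/p(v)$, stopping at the leaf reached, which determines $X$. This is a valid description of the uniform distribution because $L(v^+)$ and $L(v^-)$ partition $L(v)$ at every interior vertex, and the leaves of a complete tree are singletons; in particular the probability that the walk reaches $v$ equals $p(v)$.

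By the chain rule for entropy applied along this walk (equivalently, by summing, over interior vertices, the conditional entropy of the step taken at $v$ given that the walk reached $v$), the entropy of $X$ decomposes as
\begin{align}
  \log n \ = \ \sum_{v\in\mathcal{T}_G^\mathrm{o}}^{} p(v)\,\mathbb{H}(v)
  \ \ge \ \sum_{v\text{ balanced}}^{} p(v)\,\mathbb{H}(v),
\end{align}
where $\mathbb{H}(v)\defeq\mathbb{H}\!\left(p(v^-)/p(v)\right)$ is the binary entropy of the split at $v$; the inequality holds because $\mathbb{H}(v)\ge 0$ and the balanced vertices are a subset of $\mathcal{T}_G^\mathrm{o}$.

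It remains to lower bound $\mathbb{H}(v)$ for a balanced vertex $v$. By Lemma~\ref{lem:type-2}, $p(v^-)/p(v)\in[\delta/2,\,1/2]$. Since the binary entropy function $\mathbb{H}(x)=-x\log x-(1-x)\log(1-x)$ is nondecreasing on $[0,1/2]$, we get $\mathbb{H}(v)\ge\mathbb{H}(\delta/2)$, and since $(1-\delta/2)\log(1-\delta/2)\le 0$ we have $\mathbb{H}(\delta/2)\ge-\tfrac{\delta}{2}\log\tfrac{\delta}{2}=\tfrac{\delta}{2}\log\tfrac{2}{\delta}$. Substituting into the displayed inequality gives
\begin{align}
  \log n \ \ge \ \frac{\delta}{2}\log\frac{2}{\delta}\sum_{v\text{ balanced}}^{} p(v),
\end{align}
and rearranging yields $\sum_{v\text{ balanced}} p(v)\le\frac{\log n}{\log\frac{2}{\delta}}\cdot\frac{2}{\delta}$, as claimed.

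The one step that requires a little care is the entropy decomposition $\log n=\sum_{v}p(v)\mathbb{H}(v)$; this is the standard identity for the entropy of a distribution presented by a tree, and I would justify it either by induction on $\mathcal{T}_G$ (the entropy of $X$ conditioned on landing in $L(v)$ equals $\mathbb{H}(v)$ plus the $p(v^\pm)/p(v)$-weighted entropies of the two sub-distributions) or by the direct telescoping computation $\sum_h p_h\log\frac{1}{p_h}=\sum_h p_h\sum_{v\supseteq h}\log\frac{p(v)}{p(\text{child of }v\text{ on the }h\text{-path})}$ and swapping the order of summation. Everything else is a one-line estimate, so I do not expect a genuine obstacle here.
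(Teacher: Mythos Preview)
Your proposal is correct and follows essentially the same entropy-decomposition argument as the paper: generate a uniform hypothesis by the random walk down $\mathcal{T}_G$, write $\log n=\sum_{v}p(v)\mathbb{H}(v)$, drop the imbalanced terms, and lower-bound $\mathbb{H}(v)$ for balanced $v$ via Lemma~\ref{lem:type-2}. The only cosmetic difference is that you invoke monotonicity of the binary entropy on $[0,1/2]$ to get $\mathbb{H}(v)\ge\mathbb{H}(\delta/2)\ge\tfrac{\delta}{2}\log\tfrac{2}{\delta}$, whereas the paper drops the $(1-x)\log\tfrac{1}{1-x}$ term first and then bounds $x\log\tfrac{1}{x}$ on $[\delta/2,1/2]$; both routes are one-line estimates yielding the same bound.
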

\begin{proof}
  For a vertex $v$ with a test of index $j$, let $X_v$ denote the binary random variable equal to $\tau_j(h)$ for an hypothesis $h$ chosen uniformly at random from the hypotheses of $L(v)$.
  Let $\mathbb{H}(\cdot)$ denote the entropy of a random variable.
  The entropy of a uniformly random hypothesis in $[n]$ equals $\log n$.
  On the other hand, we can pick a uniformly random hypothesis in $[n]$ by starting at the root vertex $v$, sampling an answer $X_v\in[2]$ for the test at $v$, stepping to $u$, the child of $v$ corresponding to the chosen answer, and repeating with $v=u$, until we reach a leaf.
  In this process, at any vertex $v$, the probability of stepping to a child $u$ is exactly $\frac{p(u)}{p(v)}$.
  Hence, by a simple induction, the probability of reaching any vertex $v$ in the tree during this process is exactly $p(v)$. 
  The total entropy of this process is thus $\sum_{v\in\mathcal{T}_G}^{} p(v)\cdot \mathbb{H}(X_v)$, as $p(v)$ is the probability of reaching vertex $v$.
  For a balanced vertex $v$, Lemma~\ref{lem:type-2} implies $\frac{\delta}{2} \le p(v^-)\le \frac{1}{2}$.
  Hence,
  \begin{align}
    \mathbb{H}(X_v)
    \ = \ p(v^+)\log \frac{1}{p(v^+)} + p(v^-)\log \frac{1}{p(v^-)}
    \ \ge \ p(v^-)\log \frac{1}{p(v^-)}
    \ \ge \ \frac{\delta}{2}\log\frac{2}{\delta}. 
  \end{align}
  We conclude
  \begin{align}
    \log n
    \ &= \  \sum_{v\in\mathcal{T}_G}^{} p(v)\cdot \mathbb{H}(X_v) 
    \ \ge \ \sum_{v\text{ balanced}}^{} p(v) \cdot \mathbb{H}(X_v)
    \ \ge \ \sum_{v\text{ balanced}}^{} p(v)\cdot \frac{\delta}{2}\log \frac{2}{\delta},
  \label{}
  \end{align}
  and rearranging gives the desired result.
\end{proof}

\subsection{Bounding the weight of imbalanced vertices}

We now bound the weight of imbalanced vertices using a connection to MSSC.

\subsubsection{Defining Min Sum Set Cover}
Recall that $I_{j,S}^+ = \tau_j^{-1}(k_{j,S}^+)$ and $I_{j,S}^- = \tau_j^{-1}(k_{j,S}^-)$ for all $j\in[m]$ and $S\subseteq[n]$.
\begin{definition} 
  \label{def:mssc}
Let $\MSSC\ind{P}$ denote the instance of \textsc{Min~Sum~Set~Cover} that is \emph{induced} by the chain $P=(P_1,\dots,P_{|P|})$. 
This instance is given by
\begin{itemize}
\item universe $S\defeq L(P_1)$,
\item for $j=1,\dots,m$, sets $A_j \defeq I_{j,S}^-\cap S$, and
\item for each $h=1,\dots,n$, a singleton set $A_{m+h} \defeq \{h\}\cap S$.\footnote{Some of these sets are empty, but we include them for notational convenience.}
\end{itemize}
Note we have a total of $m + n$ sets.
A \emph{solution} to the instance $\MSSC\ind{P}$ is a permutation $\sigma:[m+n]\to [m+n]$ corresponding to an ordering of the sets, and the \emph{cost} $\MSSC\ind{P}(\sigma)$ of a solution is the average of the \emph{cover times} of the elements in the universe $S$.
Formally,
\begin{align}
  \label{eq:wmssc}
  \MSSC\ind{P}(\sigma)
  \ \defeq \ \frac{1}{n}\sum_{h\in S}^{}  \min\{\ell: h\in A_{\sigma(\ell)}\}
  \ = \ \sum_{\ell=1}^{m+n} p\left( S \setminus (A_{\sigma(1)}\cup\cdots\cup A_{\sigma(\ell-1)}) \right). 
\end{align}
  Note that the cost of any solution is finite, as each hypothesis $h\in L(v)$ is in some set $A_j$.
We sometimes refer to a solution $\sigma$ by the sets $A_{\sigma(1)},\dots,A_{\sigma(m+n)}$.
\end{definition}
\begin{remark}
  \label{rem:mssc}
Since the initial \textsc{Uniform~Decision~Tree} instance always has a solution, any two hypotheses can be distinguished by one of the $m$ tests.
Hence, there is at most one hypothesis $h\in S$ such that, for all $j=1,\dots,m$, we have $h\notin A_j$. In other words, all but one of the sets $A_{m+h}$ for $h\in[n]$ is not used.
\end{remark}
\begin{definition}
\label{def:greedy}
We say a solution $\sigma:[m+n]\to[m+n]$ to $\MSSC\ind{P}$ is \emph{greedy at index $\ell$} if the set $A_{\sigma(\ell)}$ covers the maximum number of elements not covered by sets $A_{\sigma(1)},\dots, A_{\sigma(\ell-1)}$.
We say a solution $\sigma:[m+n]\to[m+n]$ is \emph{greedy} if it is greedy at index $\ell$ for all $\ell\in[m+n]$,
\end{definition}
Note that, in the case of ties, there may be multiple greedy solutions to $\MSSC\ind{P}$. 
Note also that, for any partial assignment $\sigma(1),\dots,\sigma(\ell)$, one can always complete the solution greedily, so that $\sigma:[m+n]\to[m+n]$ is greedy at indices $\ell+1,\ell+2,\dots,m+n$.
Definition~\ref{def:greedy} lets us leverage the following theorem, due to Feige, Lov\'asz, and Tetali. 
\begin{theorem}[Theorem 1 of \citep{FeigeLT04}]
  \label{thm:flt}
  The greedy algorithm gives a 4-approximation to the MSSC problem.
  Formally, let $\sigma_G$ be any greedy solution to the instance $\MSSC\ind{P}$, and let $\sigma_{\OPT}$ denote an optimal solution to $\MSSC\ind{P}$.
  We have
  \begin{align}
    \MSSC\ind{P}(\sigma_G) \le 4\cdot \MSSC\ind{P}(\sigma_{\OPT}).
  \label{}
  \end{align}
\end{theorem}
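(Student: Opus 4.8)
The plan is to invoke Theorem~1 of \cite{FeigeLT04} directly, since $\MSSC\ind{P}$ is an ordinary (unweighted, up to a global scaling) instance of \textsc{Min~Sum~Set~Cover}; so in the paper this ``proof'' is essentially a citation. For the reader's convenience I would outline why it holds, specialized to our notation. First I would rewrite the greedy cost as a sum of per-element \emph{prices}. Running a greedy solution $\sigma_G$, let $R_\ell\subseteq S$ be the set of elements of $S$ not covered by $A_{\sigma_G(1)},\dots,A_{\sigma_G(\ell)}$ (so $R_0=S$ up to the at most one element covered only by its singleton set, cf.\ Remark~\ref{rem:mssc}), and let $X_\ell\defeq p(R_{\ell-1})-p(R_\ell)$ be the mass newly covered at step $\ell$. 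Greediness says exactly $X_\ell=\max_j p(A_j\cap R_{\ell-1})$ over all $m+n$ sets, i.e.\ \emph{no} set covers more than $X_\ell$ new mass at step $\ell$. Assign to each element covered at step $\ell$ the price $p(R_{\ell-1})/X_\ell$. Summing over steps and using that $\sum_{h\text{ covered at step }\ell}p_h=X_\ell$, together with \eqref{eq:wmssc}, gives $\sum_{h\in S}p_h\cdot\mathrm{price}(h)=\sum_\ell p(R_{\ell-1})=\MSSC\ind{P}(\sigma_G)$.

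Next I would show $\sum_{h\in S}p_h\cdot\mathrm{price}(h)\le 4\,\MSSC\ind{P}(\sigma_{\OPT})$ by the ``histogram shrinking'' argument. Form two non-increasing staircases on a mass axis of total length $p(S)$: the \emph{price histogram}, with one block of width $p_h$ and height $\mathrm{price}(h)$ per element, sorted by decreasing height, whose area is $\sum_{h\in S}p_h\cdot\mathrm{price}(h)$; and the \emph{optimal histogram}, with one block of width $p_h$ and height equal to $h$'s cover time under $\sigma_{\OPT}$, sorted by decreasing height, whose area is $\MSSC\ind{P}(\sigma_{\OPT})$. It suffices to prove that scaling the price histogram by $\tfrac12$ both horizontally and vertically---which divides its area by $4$---yields a staircase dominated pointwise by the optimal histogram; integrating this pointwise bound (substituting $y=2x$) then produces the factor $4$. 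The pointwise domination reduces to the claim that whenever a mass $\mu$ of elements receive price at least $c$, a mass of at least $\mu/2$ of elements have $\sigma_{\OPT}$-cover time greater than $c/2$. To see this, let $\ell_0$ be the \emph{earliest} greedy step whose elements receive price at least $c$; every element with price at least $c$ is covered at some step $\ge\ell_0$ and hence lies in $R_{\ell_0-1}$, so $p(R_{\ell_0-1})\ge\mu$, and the step-$\ell_0$ price bound gives $p(R_{\ell_0-1})\ge c\cdot X_{\ell_0}$. Since every set covers at most $X_{\ell_0}$ mass of $R_{\ell_0-1}$, after $k$ of $\sigma_{\OPT}$'s sets at least $p(R_{\ell_0-1})-kX_{\ell_0}$ of that mass remains uncovered; taking $k$ just below $c/2$ leaves at least $\tfrac12 p(R_{\ell_0-1})\ge\mu/2$ mass with $\sigma_{\OPT}$-cover time greater than $c/2$, as claimed.

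The step I expect to require the most care is converting that pointwise domination into the clean constant exactly $4$: one must be precise about the rounding in ``$k$ just below $c/2$'' and about the endpoint of the integral, since a crude version of the bookkeeping only yields a constant like $6$. This is exactly the analysis carried out in \cite{FeigeLT04}, so in the paper we just cite their Theorem~1. (For the general, weighted setting of Theorem~\ref{thm:alg-main}, the later argument needs the analogous $4$-approximation for \WMSSC; that statement is the $p_h$-weighted generalization of \cite{FeigeLT04} and is proved in the same way, with weighted prices and weighted histograms replacing the unweighted ones.)
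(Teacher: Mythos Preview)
Your proposal is correct and matches the paper's treatment: the paper does not prove this theorem but simply cites it as Theorem~1 of \cite{FeigeLT04}, exactly as you anticipate. Your outlined price/histogram argument is the standard Feige--Lov\'asz--Tetali proof, and your caveat about the rounding needed to hit the constant $4$ exactly is apt; the paper sidesteps all of this by citation, and for the weighted version invokes \cite{golovin2011adaptive} rather than redoing the argument.
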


\subsubsection{Bounding chain weight above by MSSC cost}
The section shows that the weight of a chain $P$ is bounded by the cost of its corresponding MSSC instance.
To do this, we need the following technical lemma which shows that following the choices of the greedy tree yields a greedy solution to the MSSC, and hence the two weights are comparable.
\begin{lemma}
  Let $s$ be a positive integer and let $P$ be a level-$s$ chain.
  Then there exists a greedy solution $\sigma:[m+n]\to[m+n]$ to $\MSSC\ind{P}$, such that
  \label{lem:mssc}
  \begin{align}
    \sum_{v\in P}^{} p(v)
    \le
    \MSSC\ind{P}(\sigma) .
  \label{}
  \end{align}
\end{lemma}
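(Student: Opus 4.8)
The plan is to define the candidate solution $\sigma$ to $\MSSC\ind{P}$ explicitly from the chain $P$, verify it is a legitimate greedy solution, and then compare the two cost expressions term by term. Write $P = (P_1,\dots,P_{|P|})$ for the level-$s$ chain, and recall $S = L(P_1)$. For $\ell = 1,\dots,|P|$, set $\sigma(\ell) = j_{P_\ell}$, the index of the test used by the greedy tree at vertex $P_\ell$; since $P$ consists of majority edges, $L(P_{\ell+1}) = I^+_{j_{P_\ell}, L(P_\ell)} \cap L(P_\ell)$, so the minority part $A_{j_{P_\ell}} = I^-_{j_{P_\ell}, S} \cap S$ peels off exactly the hypotheses of $L(P_\ell) \setminus L(P_{\ell+1})$ that lie in $S$ — modulo the subtlety that the majority/minority split is taken with respect to $S$ in the MSSC instance but with respect to $L(P_\ell)$ in the tree. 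After index $|P|$, complete $\sigma$ greedily in any fashion (possibly using the singleton sets $A_{m+h}$ to cover the last leftover hypothesis, cf.\ Remark~\ref{rem:mssc}).

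The first thing I would establish is that the prefix $\sigma(1),\dots,\sigma(|P|)$ is greedy at every index $\ell \le |P|$: the set $A_{j_{P_\ell}}$ covers the maximum number of yet-uncovered elements of $S$. This is where Lemma~\ref{lem:alg-9} and the greedy construction of $\mathcal{T}_G$ come in. After steps $1,\dots,\ell-1$, the uncovered elements of $S$ are precisely $L(P_\ell) \cap S$; but in fact $L(P_\ell) \subseteq L(P_1) = S$, so the uncovered set is exactly $L(P_\ell)$. The greedy tree chose $j_{P_\ell}$ at $P_\ell$ to maximize the weight (equivalently, count, by uniformity) of the minority part of $L(P_\ell)$, i.e.\ to maximize $|I^-_{j, L(P_\ell)} \cap L(P_\ell)|$ over all tests $j$. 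I need this to coincide with maximizing $|A_j \cap L(P_\ell)| = |I^-_{j,S} \cap S \cap L(P_\ell)| = |I^-_{j,S} \cap L(P_\ell)|$ over $j$. The point is that for \emph{any} test $j$, the set $I^-_{j,S} \cap L(P_\ell)$ is one of the two parts into which $j$ splits $L(P_\ell)$, and it is the minority part as long as it is the smaller one; if it happens to be the larger part of the split of $L(P_\ell)$, then it has size $> |L(P_\ell)|/2 \ge |I^-_{j_{P_\ell},L(P_\ell)} \cap L(P_\ell)|$ would be violated — so one argues that the smaller-with-respect-to-$S$ part, restricted to $L(P_\ell)$, still can't beat the genuine greedy choice. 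This monotonicity-of-minority-weight bookkeeping, carried along the chain, is the crux; I expect it to be the main obstacle, and it is exactly the ``subtle argument that depends on the carefully chosen definition of a chain'' flagged in the sketch.

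Granting that $\sigma$ is greedy, the comparison of costs is a short computation. By Lemma~\ref{lem:alg-10}-style double counting (or directly from the right-hand expression in \eqref{eq:wmssc}),
\begin{align}
\MSSC\ind{P}(\sigma) = \sum_{\ell=1}^{m+n} p\big(S \setminus (A_{\sigma(1)} \cup \cdots \cup A_{\sigma(\ell-1)})\big) \ \ge \ \sum_{\ell=1}^{|P|} p\big(S \setminus (A_{\sigma(1)} \cup \cdots \cup A_{\sigma(\ell-1)})\big),
\end{align}
and by the identification above, $S \setminus (A_{\sigma(1)} \cup \cdots \cup A_{\sigma(\ell-1)}) = L(P_\ell)$, so the right-hand side is exactly $\sum_{\ell=1}^{|P|} p(L(P_\ell)) = \sum_{v \in P} p(v)$. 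Thus $\sum_{v\in P} p(v) \le \MSSC\ind{P}(\sigma)$, as desired. I would present the argument in this order: (1) define $\sigma$ on the chain and extend greedily; (2) prove the key claim that the uncovered set after $\ell-1$ steps equals $L(P_\ell)$ and that $\sigma$ is greedy there, using Lemma~\ref{lem:alg-9} and the greedy tree property together with the relation between $S$-minority and $L(P_\ell)$-minority parts; (3) conclude by the term-by-term inequality. A minor loose end to handle is the case where the last uncovered hypothesis of $S$ is covered only by a singleton $A_{m+h}$ rather than by some $A_j$, $j \le m$; this is cleanly dispatched by Remark~\ref{rem:mssc} and only affects indices beyond $|P|$, which we discard anyway.
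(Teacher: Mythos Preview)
Your overall architecture matches the paper's proof: define $\sigma(\ell)=j_{P_\ell}$ for $\ell\le |P|$, extend greedily, argue $\sigma$ is greedy at the first $|P|$ indices, and finish with the termwise comparison $\MSSC\ind{P}(\sigma)\ge\sum_\ell p(P_\ell)$. Steps (1) and (3) are exactly as in the paper and are fine. The gap is in step (2), and specifically in the piece you yourself flag as the crux: showing that for every test $j$ and every $\ell\le|P|$ the $S$-minority and the $L(P_\ell)$-minority of $j$ coincide, i.e.\ $I^-_{j,S}\cap L(P_\ell)=I^-_{j,L(P_\ell)}\cap L(P_\ell)$. You need this both to get ``uncovered set $=L(P_\ell)$'' (apply it to $j=j_{P_1},\dots,j_{P_{\ell-1}}$) and to get greediness at index $\ell$ (apply it to all $j$). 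Your treatment of the bad case where $I^-_{j,S}\cap L(P_\ell)$ is the $L(P_\ell)$-\emph{majority} trails off into ``would be violated,'' but nothing is violated by Lemma~\ref{lem:alg-9} or by greediness of $\mathcal{T}_G$ alone: along a generic majority-edge path this alignment can genuinely fail, and then $A_j$ would cover more of $L(P_\ell)$ than $A_{j_{P_\ell}}$, destroying greediness of $\sigma$.

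What the paper uses---and what you never invoke---is the \emph{level-$s$ imbalanced} condition that defines the chain. Since $P_1$ is level-$s$ imbalanced and $\mathcal{T}_G$ is greedy at $P_1$, every test $j$ satisfies $p(I^-_{j,S}\cap S)\le p(P_1^-)\le\delta^s$; since $P_\ell$ is level-$s$ imbalanced, $p(P_\ell)>2\delta^s$. Hence $p(I^-_{j,S}\cap L(P_\ell))\le\delta^s<\tfrac12\,p(P_\ell)$, so the $S$-minority part of $j$, restricted to $L(P_\ell)$, really is the smaller half and therefore equals the $L(P_\ell)$-minority part. This single inequality is the ``carefully chosen definition of a chain'' doing its work; once you have it, your items (2) and (3) go through verbatim. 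Lemma~\ref{lem:alg-9} is not the operative tool here---it underlies the structural Lemma~\ref{lem:type-1} about chains, but it does not by itself give the minority-alignment you need.
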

\begin{proof}
  Let $P=(P_1,\dots,P_{|P|})$.
  Let $S=S\ind{P}$ be the universe and $A_1,\dots, A_{m+n}$ be the sets of the instance $\MSSC\ind{P}$.
  For $\ell=1,\dots,|P|$, let $j_\ell$ be the test used at vertex $P_\ell$.
  Define a solution $\sigma:[m+n]\to[m+n]$ to $\MSSC\ind{P}$ by setting $\sigma(\ell)=j_\ell$ for $\ell=1,\dots,|P|$, and completing the solution greedily.
  We claim $\sigma$ is a greedy solution.
  To prove this, we show the following. 
\begin{enumerate}
  \item[(i)] For all $j\in[m]$ and $1\le\ell\le |P|$, the majority answer for test $j$ with respect to $S=L(P_1)$ is the same as the majority answer for test $j$ with respect to $L(P_\ell)$.
  Equivalently, for all $j\in[m]$ and $1\le\ell\le |P|$, we have $A_j=I_{j,S}^-\cap S = I_{j,L(P_\ell)}^-\cap S$.
  As an immediate consequence, we know $A_{j_\ell}$ contains all of $L(P_\ell^-)$ and none of $L(P_\ell^+)$.
  \item[(ii)] The set of hypotheses of $S$ not covered by $A_{j_1}, \dots,A_{j_{\ell-1}}$ is exactly $L(P_\ell)$.
  \item[(iii)] For each $1\le \ell\le |P|$, among sets $A_1,\dots,A_{m+n}$, the set $A_{j_\ell}$ covers the maximum number of hypotheses in $L(P_\ell)$. i.e. we have
  \begin{align}
    p(A_{j_{\ell}}\cap L(P_\ell))=\max_{j\in[m+n]}p(A_j\cap L(P_\ell)).
  \end{align}
\end{enumerate}
  These points suffice, as (ii) and (iii) tell us that $\sigma$ is greedy at indices $1,\dots,|P|$, so by construction $\sigma$ is greedy.

  To show (i), fix $j\in[m]$ and $\ell\in \{1,\dots,|P|\}$.
  As $P_\ell$ is level-$s$ imbalanced, we also have $p(P_\ell)>2\delta^s$ and $p(P^-_\ell) \le \delta^s$ and $p(P^+_\ell) > \delta^s$, so $k_{j,L(P_\ell)}^+$ accounts for more than half of the hypotheses in $L(P_\ell)$.
  On the other hand, as $P_1$ is level-$s$ imbalanced, we have $p(I_{j,S}^-\cap L(P_\ell)) \le p(I_{j,S}^-\cap S) \le p(P^-_1) \le \delta^s$, so the majority answer for test $j$ with respect to hypothesis set $S$ also accounts for more than half of the hypotheses in $L(P_\ell)$.
  Hence $k_{j,S}^+ = k_{j,L(P_\ell)}^+$ for all $j\in[m]$ and $1\le\ell\le |P|$.

  Item (ii) follows because $L(P_\ell)$ is the set of hypotheses consistent with $P_\ell$, which was obtained by following the majority edges from $P_1$.
  This means $L(P_\ell)$ contains all the hypotheses of $S$ not consistent with a minority child of one of $P_1,\dots,P_{\ell-1}$.
  By (i), this is exactly $S\setminus (A_{j_1}\cup\cdots\cup A_{j_{\ell-1}})$.

  For (iii), at vertex $P_\ell$ in the greedy decision tree, the test index $j=j_\ell$ maximizes the weight $p(I_{j,L(P_\ell)}^-\cap L(P_\ell))$.
  By (i), this index $j$ equivalently maximizes $p(A_j\cap L(P_\ell))$, as desired.
  This completes the proof that $\sigma$ is greedy.

  We now return to the proof of Lemma~\ref{lem:mssc}.
  Take the greedy solution $\sigma$ given above.
  For $1\le\ell\le |P|$, the set of vertices of $S$ not covered by $A_{\sigma(1)},\dots,A_{\sigma(\ell-1)}$ is exactly $L(P_\ell)$, which has weight $p(P_\ell)$.
  Hence, by \eqref{eq:wmssc},
  \begin{align}
    \MSSC\ind{P}(\sigma)
    \ &\ge \ \sum_{\ell=1}^{|P|}  p\left( S\setminus(A_{\sigma(1)}\cup\cdots\cup A_{\sigma(\ell-1)}) \right)
    \ = \  \sum_{\ell=1}^{|P|} p(P_\ell),
  \label{}
  \end{align}
  as desired.
\end{proof}

By Theorem~\ref{thm:flt}, we have the following immediate corollary.
\begin{corollary}
\label{lem:mssc-2}
  Let $P$ be any chain, and $\sigma_{\OPT}$ be the optimal solution to $\MSSC\ind{P}$.
  Then
  \begin{align}
    \sum_{v \in P}^{} p(v)
    \le 4\MSSC\ind{P}(\sigma_{\OPT}) .
  \label{}
  \end{align}
\end{corollary}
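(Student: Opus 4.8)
The plan is to chain together the two results just established. First I would invoke Lemma~\ref{lem:mssc}, which for the chain $P$ furnishes a \emph{greedy} solution $\sigma:[m+n]\to[m+n]$ to $\MSSC\ind{P}$ satisfying $\sum_{v\in P} p(v) \le \MSSC\ind{P}(\sigma)$. Then, because $\sigma$ is greedy and $\MSSC\ind{P}$ is a genuine MSSC instance (every hypothesis in the universe $S$ lies in some set $A_j$, so every cover time is finite, as noted after Definition~\ref{def:mssc}), I would apply Theorem~\ref{thm:flt} directly to $\sigma$ to obtain $\MSSC\ind{P}(\sigma) \le 4\,\MSSC\ind{P}(\sigma_{\OPT})$, where $\sigma_{\OPT}$ is an optimal solution to $\MSSC\ind{P}$. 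Concatenating the two inequalities yields exactly $\sum_{v\in P} p(v) \le 4\,\MSSC\ind{P}(\sigma_{\OPT})$, the claimed bound.

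There is essentially no obstacle left to overcome at this stage: all the substance sits in Lemma~\ref{lem:mssc} (the delicate verification that reading the tests of the chain $P$ off from top to bottom yields a bona fide greedy MSSC ordering, which in turn relies on the precise definition of a chain) and in the cited Theorem~\ref{thm:flt} of Feige, Lov\'asz, and Tetali. The one point worth flagging is that Theorem~\ref{thm:flt} is stated for an \emph{arbitrary} greedy solution, so it is legitimate to feed it the particular greedy $\sigma$ produced by Lemma~\ref{lem:mssc} rather than some other greedy ordering; no tie-breaking subtleties enter. Accordingly, the write-up of this corollary is just the two-line computation above.
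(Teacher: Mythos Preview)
Your proposal is correct and matches the paper's approach exactly: the paper states this as an immediate corollary of Theorem~\ref{thm:flt} (combined with Lemma~\ref{lem:mssc}), which is precisely the two-step chaining you describe.
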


\subsubsection{Bounding MSSC cost above by $C_{\OPT}$}
We now show that the optimal MSSC solution can be compared to the optimal decision tree cost, $C_{\OPT}$.
For all chains $P$, let $\sigma_G\ind{P}:[m+n]\to[m+n]$ be a greedy solution to $\MSSC\ind{P}$ given by Lemma~\ref{lem:mssc}, and let $\sigma_{\OPT}\ind{P}$ be an optimal solution to $\MSSC\ind{P}$.

\begin{lemma}
  \label{lem:sum-2}
  Let $s$ be a positive integer.
  We have
  \begin{align}
    \sum_{P\in \mathcal{P}_s}^{} \MSSC\ind{P}(\sigma_{\OPT}\ind{P})
    \le C_{\OPT}.
  \label{}
  \end{align}
\end{lemma}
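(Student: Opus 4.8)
The plan is to prove the stronger \emph{per-chain} estimate
\[
  \MSSC\ind{P}(\sigma_{\OPT}\ind{P}) \ \le \ \sum_{w\in\mathcal{T}_{\OPT}^{\mathrm o}} p\bigl(L(w)\cap S\ind{P}\bigr)
\]
for every level-$s$ chain $P$, where $S\ind{P}=L(P_1)$ is the universe of $\MSSC\ind{P}$, and then to sum over $P\in\mathcal{P}_s$. The sum collapses correctly because the universes are pairwise disjoint: distinct maximal level-$s$ imbalanced vertices are incomparable in $\mathcal{T}_G$ --- if one were an ancestor of another, Lemma~\ref{lem:type-1}(ii) would place a level-$s$ imbalanced vertex above the lower one, contradicting its maximality --- so their subtrees, and hence the sets $L(P_1)$, are disjoint. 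Given the per-chain bound, swapping the order of summation and using disjointness, $\sum_{P\in\mathcal{P}_s}\sum_{w}p(L(w)\cap S\ind{P})=\sum_{w}p\bigl(L(w)\cap\bigcup_{P\in\mathcal{P}_s}S\ind{P}\bigr)\le\sum_{w}p(L(w))=C_{\OPT}$, where the last equality is the double-counting identity of Lemma~\ref{lem:alg-10} applied to $\mathcal{T}_{\OPT}$ instead of $\mathcal{T}_G$.

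To get the per-chain estimate I would exhibit one concrete solution $\sigma\ind{P}$ of $\MSSC\ind{P}$ whose cost is at most $\sum_{w}p(L(w)\cap S\ind{P})$ and then invoke optimality of $\sigma_{\OPT}\ind{P}$. Write $S=S\ind{P}$. Starting at the root of $\mathcal{T}_{\OPT}$, repeatedly descend to the child whose edge label is the majority answer $k^+_{j,S}$ of the current vertex's test $\tau_j$ \emph{with respect to the fixed set $S$}, obtaining a path $r_0,\dots,r_{\ell^*}$ which I stop at the first leaf or the first vertex with $L(r_{\ell})\cap S=\varnothing$. Let $\sigma\ind{P}$ list the tests used at $r_0,\dots,r_{\ell^*-1}$ (these do not repeat: after the path follows one side of a test's partition, that test is constant on all later vertices, which would contradict every interior vertex splitting its hypotheses; in any case, deduplicating only lowers the cost), then the singleton $A_{m+h^*}$ when $r_{\ell^*}$ is a leaf whose hypothesis $h^*$ lies in $S$, then all remaining sets arbitrarily. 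Since $A_j=I^-_{j,S}\cap S$ is exactly the set of hypotheses of $S$ that answer $\tau_j$ with its $S$-minority answer, a short induction shows that the set of hypotheses of $S$ uncovered after the first $\ell$ sets of $\sigma\ind{P}$ equals $L(r_\ell)\cap S$ for $\ell\le\ell^*$, and equals $\{h^*\}\cap S$ after $\ell^*$ steps and $\varnothing$ thereafter. By the cover-time expression for the cost in \eqref{eq:wmssc}, $\MSSC\ind{P}(\sigma\ind{P})=\sum_{\ell=0}^{\ell^*-1}p(L(r_\ell)\cap S)+p(\{h^*\}\cap S)$, and the first sum is dominated by $\sum_{w\in\mathcal{T}_{\OPT}^{\mathrm o}}p(L(w)\cap S)$ because $r_0,\dots,r_{\ell^*-1}$ are distinct interior vertices.

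If $h^*\notin S$ this already finishes, so the remaining work is the case $h^*\in S$, where one pays an extra $p_{h^*}=1/n$. Here the path $r_0,\dots,r_{\ell^*}$ is exactly $h^*$'s root-to-leaf path --- indeed $h^*$ is the unique hypothesis of $S$ answering with the $S$-majority answer at every test along its own path, since two such hypotheses would be forced to agree at the vertex where their paths diverge --- so I would rewrite $\sum_{\ell=0}^{\ell^*-1}p(L(r_\ell)\cap S)=\sum_{h\in S}p_h\,c(h)$, where $c(h)$ counts the common interior ancestors of $h$ and $h^*$, note $c(h^*)=d_{\mathcal{T}_{\OPT}}(h^*)$ and $c(h)\le d_{\mathcal{T}_{\OPT}}(h)-1$ for $h\ne h^*$, and conclude $\MSSC\ind{P}(\sigma\ind{P})\le\sum_{h\in S}p_h d_{\mathcal{T}_{\OPT}}(h)+2p_{h^*}-p(S)$. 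The correction $2p_{h^*}-p(S)$ is negative because $p(S)=p(L(P_1))>2\delta^{s}\ge 2\delta^{s_{\max}}=2/n=2p_{h^*}$, using that level-$s$ imbalanced vertices exist only for $s\le s_{\max}$; and $\sum_{h\in S}p_h d_{\mathcal{T}_{\OPT}}(h)=\sum_{w\in\mathcal{T}_{\OPT}^{\mathrm o}}p(L(w)\cap S)$ by double counting, giving the per-chain bound. I expect the main obstacle to be precisely this construction-and-verification step: choosing $\sigma\ind{P}$ so that the ``uncovered set $=L(r_\ell)\cap S$'' invariant holds --- which crucially uses that each $A_j$ is defined via the minority answer relative to the \emph{fixed} universe $S$ rather than the local hypothesis set --- and disposing cleanly of the single possible leftover hypothesis $h^*$; this is the ``technical'' link between the optimal MSSC solution and $\mathcal{T}_{\OPT}$ flagged in the proof sketch.
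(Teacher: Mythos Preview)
Your approach is the paper's own: build a candidate solution $\sigma_{\mathrm{TREE}}$ to $\MSSC\ind{P}$ by descending $\mathcal{T}_{\OPT}$ along the $S$-majority answers, check that after $\ell$ steps the uncovered subset of $S$ is exactly $L(r_\ell)\cap S$, bound the resulting cost by $\frac{1}{n}\sum_{h\in S}d_{\mathcal{T}_{\OPT}}(h)$, and then sum over $P\in\mathcal{P}_s$ using that the universes $S\ind{P}=L(P_1)$ are pairwise disjoint. That is precisely the paper's argument.

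The one genuine slip is in disposing of the extra $p_{h^*}$ when $h^*\in S$. Your assertion that $c(h)\le d_{\mathcal{T}_{\OPT}}(h)-1$ for every $h\in S\setminus\{h^*\}$ is not true: if $h$ is a \emph{leaf} that is the off-path child of some $r_k$, then $c(h)=k+1=d_{\mathcal{T}_{\OPT}}(h)$ with no slack at all. So the inequality $\MSSC\ind{P}(\sigma\ind{P})\le\sum_{h\in S}p_h\,d_{\mathcal{T}_{\OPT}}(h)+2p_{h^*}-p(S)$ does not follow, and knowing $p(S)>2/n$ (equivalently $|S|\ge 3$) does not rescue it---the hypotheses of $S\setminus\{h^*\}$ can each be leaf children of distinct $r_k$'s, giving zero total slack. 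The paper recovers the single $1/n$ it needs not by saving at every $h\neq h^*$ but by invoking the standing assumption (made at the start of the section) that the instance is \emph{nontrivial}: if no $h\in S$ had $d_{\mathcal{T}_{\OPT}}(h)\ge c(h)+1$, every branch leaving the majority path in $\mathcal{T}_{\OPT}$ would be a single leaf, which the paper argues forces triviality. So you only need one $h$ with strict slack, not all of them.
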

\begin{proof}
  Let $S\ind{P}$ be the universe of the instance $\MSSC\ind{P}$, and let $A_1,\dots, A_{m+n}$ be the sets.
  Construct a path $w_1,\dots,w_{\ell\sar}$ in $\mathcal{T}_{\OPT}$ such that $w_1$ is the root and $w_{\ell\sar}$ is a leaf, which is identified with some hypothesis $h\sar$, and, for $\ell=1,\dots,\ell\sar-1$, if the test at vertex $w_\ell$ has index $j_\ell$, the edge to its child $w_{\ell+1}$ corresponds to the answer $k_{j_\ell,S\ind{P}}^+$, the majority answer of test $j_\ell$ with respect to set $S\ind{P}$.
  Since we follow the edges with label $k^+_{j_\ell,S\ind{P}}$, this corresponds to following the path for an hypothesis contained in $I_{j_\ell,S\ind{P}}^+$. 
  In other words, we have, for $\ell=1,\dots,\ell\sar-1$,
  \begin{align}
    L(w_{\ell+1}) \cap S\ind{P} 
    \ = \ (L(w_\ell)\cap S\ind{P}) \setminus I_{j_{\ell},S\ind{P}}^-
    \ = \ (L(w_\ell)\cap S\ind{P}) \setminus A_{j_{\ell}}. 
  \label{}
  \end{align}
  Thus the sequence $A_{j_1},\dots,A_{j_{\ell\sar}}, A_{m+h\sar}$ covers $S\ind{P}$, and thus gives a valid solution $\sigma_{\text{TREE}}$ to the instance $\MSSC\ind{P}$, where $\sigma_{\text{TREE}}(1) = j_1, \sigma_{\text{TREE}}(2)=j_2,\dots,\sigma_{\text{TREE}}(\ell\sar)=j_{\ell\sar}, \sigma_{\text{TREE}}(\ell\sar+1)=m+h\sar$, and $\sigma_{\text{TREE}}$ on larger indices is arbitrarily chosen.
  Note that the depth of the leaf for hypothesis $h$ is at least the number of vertices of $w_1,\dots,w_{\ell\sar}$ that are on the root-to-leaf path of $h$, and this number is $\min\{\ell:h\in A_{j_\ell}\}$, except for $h^*$, in which case it is 1 smaller.
  Furthermore, for some $h\in S\ind{P}$, the depth of leaf $h$ is at least $\min\{\ell:h\in A_{j_\ell}\}+1$, because otherwise all the branches leaving the path have one leaf, which can only happen if our \textsc{Uniform~Decision~Tree} instance is trivial, and it is not trivial by assumption.
  Thus,
  \begin{align}
    \label{eq:sum-2-2}
    \frac{1}{n}\sum_{h\in S\ind{P}}^{} d_{\mathcal{T}_{\OPT}}(h) 
    \ &\ge  \ \frac{-1}{n} + \frac{1}{n} + \frac{1}{n}\sum_{h\in S\ind{P}}^{} \min\{\ell:h\in A_{j_\ell}\} \nonumber\\
    \ &\ge \ \MSSC\ind{P}(\sigma_{\text{TREE}})  
    \ \ge \ \MSSC\ind{P}(\sigma_{\OPT}).
  \end{align}
  The $\frac{-1}{n}$ and $\frac{1}{n}$ account for the lower order terms described above.
  Summing \eqref{eq:sum-2-2} over $P\in\mathcal{P}_s$ gives
  \begin{align}
    C_{\OPT} \ &= \ \frac{1}{n}\sum_{h\in[n]}^{} d_{\mathcal{T}_{\OPT}}(h)
    \ \ge \ \frac{1}{n}\sum_{P\in\mathcal{P}_s}^{} \sum_{h\in S\ind{P}}^{} d_{\mathcal{T}_{\OPT}}(h)
    \ \ge \ \sum_{P\in\mathcal{P}_s}^{} \MSSC\ind{P}(\sigma_{\OPT}), 
  \label{}
  \end{align}
  where in the first inequality, we used that every leaf has at most one maximal level-$s$ imbalanced ancestor, and hence it is in at most one MSSC universe $S\ind{P}$.
\end{proof}

\subsubsection{Bounding imbalanced vertex weight above by $C_{\OPT}$}
We now finish our bound of the weight of imbalanced vertices.
\begin{lemma}
\label{lem:sum-3}
  We have 
  \begin{align}
    \sum_{v\text{ imbalanced}}^{} p(v)
    \ \le \ \frac{4\log n}{\log\frac{1}{\delta}}C_{\OPT}.
  \label{}
  \end{align}
\end{lemma}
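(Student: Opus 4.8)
The plan is to simply assemble the three ingredients already established: the decomposition of imbalanced vertices into chains (Definition~\ref{def:chain} via Lemma~\ref{lem:type-1}), the per-chain bound $\sum_{v\in P}p(v)\le 4\MSSC\ind{P}(\sigma_{\OPT}\ind{P})$ (Corollary~\ref{lem:mssc-2}), and the per-level bound $\sum_{P\in\mathcal{P}_s}\MSSC\ind{P}(\sigma_{\OPT}\ind{P})\le C_{\OPT}$ (Lemma~\ref{lem:sum-2}). The first step is to observe that every imbalanced vertex is level-$s$ imbalanced for \emph{some} integer $s$, and by Definition~\ref{def:type} such $s$ can only lie in the range $1\le s\le s_{\max}$, where $s_{\max}=\frac{\log n}{\log(1/\delta)}$ (a vertex $v$ with $p(v)>2\delta^s$ forces $\delta^s<p(v)/2\le 1/2$, and since $p(v^-)\le\delta^s$ and $p(v^-)\ge 1/n$ whenever $L(v^-)$ is nonempty, we get $\delta^s\ge 1/n$, i.e.\ $s\le s_{\max}$). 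Hence, since by Lemma~\ref{lem:type-1}(iii) the level-$s$ imbalanced vertices are partitioned by $\mathcal{P}_s$ and all weights are nonnegative,
\begin{align}
  \sum_{v\text{ imbalanced}}^{} p(v)
  \ \le \ \sum_{s=1}^{s_{\max}} \sum_{P\in\mathcal{P}_s}^{} \sum_{v\in P}^{} p(v),
\end{align}
where the inequality (rather than equality) allows for the possibility that a single vertex is level-$s$ imbalanced for several values of $s$ and so is counted multiple times on the right.

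Next I would apply Corollary~\ref{lem:mssc-2} to each chain $P$ to replace $\sum_{v\in P}p(v)$ by $4\MSSC\ind{P}(\sigma_{\OPT}\ind{P})$, and then Lemma~\ref{lem:sum-2} to bound $\sum_{P\in\mathcal{P}_s}\MSSC\ind{P}(\sigma_{\OPT}\ind{P})\le C_{\OPT}$ for each fixed $s$. Chaining these:
\begin{align}
  \sum_{v\text{ imbalanced}}^{} p(v)
  \ \le \ \sum_{s=1}^{s_{\max}} \sum_{P\in\mathcal{P}_s}^{} 4\,\MSSC\ind{P}(\sigma_{\OPT}\ind{P})
  \ \le \ \sum_{s=1}^{s_{\max}} 4\,C_{\OPT}
  \ = \ 4\,s_{\max}\,C_{\OPT}
  \ = \ \frac{4\log n}{\log\frac{1}{\delta}}\,C_{\OPT},
\end{align}
which is exactly the claimed bound.

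This lemma is essentially pure bookkeeping on top of the earlier results, so there is no real obstacle; the only point requiring a sentence of care is making sure we account for vertices that are imbalanced at several levels (handled by taking an inequality in the first display) and confirming that no level $s>s_{\max}$ contributes (so that the outer sum has only $s_{\max}$ terms, which is what produces the $\frac{\log n}{\log(1/\delta)}$ factor rather than something larger). Combining Lemma~\ref{lem:type-3} and this lemma with Lemma~\ref{lem:alg-10} then yields $C_G\le \frac{\log n}{\log(2/\delta)}\cdot\frac{2}{\delta}+\frac{4\log n}{\log(1/\delta)}C_{\OPT}$, and setting $\delta=1/C_{\OPT}$ finishes Theorem~\ref{thm:alg-main-2}.
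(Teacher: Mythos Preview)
Your proposal is correct and follows the paper's proof essentially verbatim: decompose the imbalanced vertices by level into chains, apply Corollary~\ref{lem:mssc-2} per chain, then Lemma~\ref{lem:sum-2} per level, and sum over the $s_{\max}=\frac{\log n}{\log(1/\delta)}$ levels. The paper makes the same remark that the first step is an inequality because a vertex may be level-$s$ imbalanced for several $s$; your added justification for why no level $s>s_{\max}$ contributes is a bit more explicit than the paper's one-line assertion, but otherwise the arguments are identical.
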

\begin{proof}
  Each imbalanced vertex $v$ is level-$s$ imbalanced for some positive integer $s$, so it is part of some level-$s$ chain, $P$.
  Hence,
  \begin{align}
    \label{eq:sum-5}
    \sum_{v\text{ imbalanced}}^{} p(v) 
    \ \le \ \sum_{s=1}^{s_{\max}} \sum_{P\in\mathcal{P}_s}^{} \sum_{v \in P}^{} p(v) 
    \ &\le \ \sum_{s=1}^{s_{\max}} \sum_{P\in\mathcal{P}_s}^{}  4\MSSC\ind{P}(\sigma\ind{P}_{\OPT}) \nonumber\\
    \ &\le \ \sum_{s=1}^{s_{\max}} 4C_{\OPT} 
    \ = \  \frac{4\log n}{\log \frac{1}{\delta}}C_{\OPT} 
  \end{align}
  The first inequality is not equality because some vertices may be level-$s$ imbalanced for more than one integer $s$.
  The second inequality is by Corollary~\ref{lem:mssc-2}.
  The third inequality is by Lemma~\ref{lem:sum-2}.
\end{proof}

\subsection{Finishing the proof}
\begin{proof}[Proof of Theorem~\ref{thm:alg-main}]
  We have
  \begin{align}
    \label{eq:final}
    C_G
    \ = \  \sum_{v\in\mathcal{T}_G^\mathrm{o}}^{} p(v) 
    \ &= \  \sum_{v\text{ balanced}}^{} p(v) + \sum_{v\text{ imbalanced}}^{} p(v)
    \ \le \  \frac{\log n}{\log \frac{2}{\delta}}\cdot \frac{2}{\delta} + \frac{4\log n}{\log\frac{1}{\delta}}C_{\OPT} \nonumber\\
    \ &= \  \frac{2\log n}{\log 2C_{\OPT}}\cdot C_{\OPT} + \frac{4\log n}{\log C_{\OPT}} C_{\OPT}
    \ < \   \frac{6\log n}{\log C_{\OPT}} \cdot C_{\OPT},
  \end{align} 
  as desired. 
  In the first inequality, we used Lemma~\ref{lem:type-3} and Lemma~\ref{lem:sum-3}.
\end{proof}

\section{Proof of Theorem~\ref{thm:subexp-alg}}
\label{app:C}
\newcommand\greedy{\text{greedy}}

For the entirety of this section, fix $\alpha<1$, $R\ge 1$, and an instance of $\DT(R)$.
We first analyze Algorithm~\ref{alg:fulltree}, showing that running FullTree$_\alpha([n])$ gives a $(\frac{25}{\alpha}+\log R)$ approximation for $\DT(R)$, and then describe how the algorithm can be modified to give a $\frac{9.01}{\alpha}$ approximation for \textsc{Uniform~Decision~Tree} in subexponential time.

\subsection{Runtime}
\begin{lemma}
Algorithm \ref{alg:opt_trunc_tree} runs in time $O(n (Km)^{b+1})$.
\end{lemma}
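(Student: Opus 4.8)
The plan is to bound the running time of \textnormal{PartialTree}$(H,b)$ by a straightforward recursion-tree accounting, together with one observation about how the candidate costs $C(\mathcal{T}_j;H)$ ought to be computed. First I would fix the natural implementation in which every recursive call returns, alongside the tree it outputs, the cost of that tree. The reason this matters: the parent call must evaluate $C(\mathcal{T}_j;H)$ for all $m$ choices of $j$ to take the $\argmin$, and scoring a tree by re-traversing it would cost time proportional to its size, which can be as large as $K^{b}$. Instead, since for $h\in H_{j,k}$ the deepest vertex of $\mathcal{T}_j$ consistent with $h$ sits exactly one level below the corresponding vertex of $\mathcal{T}_{j,k}$, one has the identity
\begin{align}
  C(\mathcal{T}_j;H) \ = \ 1+\frac{1}{p(H)}\sum_{k=1}^{K}p(H_{j,k})\,C(\mathcal{T}_{j,k};H_{j,k}),
\end{align}
so $C(\mathcal{T}_j;H)$ is computed in $O(K)$ time from the $K$ returned costs (and a base-case call returns cost $0$, consistent with $d_\mathcal{T}(h)=0$ at a leaf).

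Next I would pin down the shape of the recursion. A call \textnormal{PartialTree}$(H,b')$ with $b'\ge 1$ and $|H|\ge 2$ makes exactly $mK$ recursive calls, each with budget $b'-1$; since the top-level budget is $b$, the recursion tree has branching factor at most $mK$ and depth at most $b$, hence at most $\sum_{d=0}^{b}(mK)^d\le (mK)^{b+1}$ nodes in total and at most $\sum_{d=0}^{b-1}(mK)^d\le (mK)^{b}$ internal (non-base) nodes, where I use $mK\ge 2$ (the case $mK\le 1$ is degenerate and trivial).

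It then remains to account the non-recursive work at each node and sum. At an internal node with set $H$ (so $|H|\le n$): forming $H_{j,k}=\{h\in H:\tau_j(h)=k\}$ costs $O(|H|)$ per test $j$, hence $O(mn)$ over all $j,k$; assembling the $m$ candidate trees $\mathcal{T}_j$ by linking the already-built subtrees under fresh $\tau_j$-roots and scoring them via the displayed identity costs $O(mK)$; and taking the $\argmin$ and returning $\mathcal{T}_{j^*}$ costs $O(m)$. So internal nodes do $O(mn+mK)$ work and base-case nodes $O(1)$. Multiplying by the node counts, the total is $O((mK)^{b}(mn+mK))+O((mK)^{b+1})$, and since $(mK)^{b}\,mn\le n(mK)^{b+1}$, $(mK)^{b}\,mK=(mK)^{b+1}$, and $(mK)^{b+1}\le n(mK)^{b+1}$, this is $O(n(mK)^{b+1})$, as claimed.

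The only step with any subtlety is the first one: the costs $C(\mathcal{T}_j;H)$ must be folded in from the values returned recursively rather than recomputed by traversing the candidate trees. Otherwise the per-node work — and hence the whole bound — picks up a multiplicative factor as large as $K^{b}$ (or, with a naive root-to-leaf scan for each hypothesis, a factor of $b$), which would break the claimed running time. Everything after that is routine bookkeeping.
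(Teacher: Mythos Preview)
Your proposal is correct and follows essentially the same recursion-tree accounting as the paper: bound the number of recursive calls by $O((mK)^b)$ via branching factor $mK$ and depth $b$, bound the local work per call, and multiply. The paper's version is terser---it simply asserts ``the local runtime of each call is $O(nmK)$'' without justification---whereas you explicitly address the one non-obvious point, namely that the costs $C(\mathcal{T}_j;H)$ must be folded up from the recursive return values via the identity you display rather than recomputed by traversal, which is exactly what is needed to keep the per-node work within the stated bound.
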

\begin{proof}
  Each call to Algorithm~\ref{alg:opt_trunc_tree} calls at most $mK$ recursive calls with one less depth. This means the total number of recursive calls is $O((Km)^b)$. Since the local runtime of each call is $O(nmK)$, the total runtime is $O(n (Km)^{b+1})$.
\end{proof}

\begin{lemma}
Algorithm \ref{alg:fulltree} runs in time $O(n^3 (Km)^{b+1}) = 2^{O(n^{\alpha} \log (Rn) \log m)}$.
\end{lemma}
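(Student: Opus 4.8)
The plan is to bound the total running time by (number of recursive calls to FullTree$_\alpha$) times (work performed locally within one such call), to show that the first factor is $O(n^2)$ and the second is $O(n(Km)^{b+1})$, and then to substitute $b=\ceil{(12\log n+\log R)n^\alpha}$. For the per-call work: a single invocation FullTree$_\alpha(H)$, not counting its recursive sub-calls, computes one greedy tree $\mathcal{T}_G(H)$ in $\poly(n,m)$ time, evaluates $p(H)$ and $C(\mathcal{T}_G(H);H)$ in $O(n)$ time, and makes at most one call to PartialTree$(H,b)$, which costs $O(n(Km)^{b+1})$ by the previous lemma. Since $b$ grows with $n$, the term $(Km)^{b+1}$ dominates all the polynomial terms, so the local cost is $O(n(Km)^{b+1})$.

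The heart of the argument is the bound on the number of FullTree calls. I would view the recursion as a rooted tree $\mathcal{R}$ whose nodes are the FullTree calls, rooted at FullTree$_\alpha([n])$, and establish two facts. First, the children of a node FullTree$_\alpha(H)$ in $\mathcal{R}$ are exactly the calls FullTree$_\alpha(L(v))$ over the depth-$b$ leaves $v$ of the tree output by PartialTree$(H,b)$; since the leaves of any decision tree partition its hypothesis set, the sets $L(v)$ are pairwise disjoint subsets of $H$. Inducting on depth in $\mathcal{R}$, the hypothesis sets attached to all calls at a common depth of $\mathcal{R}$ are therefore pairwise disjoint subsets of $[n]$, so there are at most $n$ of them. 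Second, each child's hypothesis set is a strict subset of its parent's: when $|H|\ge 2$, the test used at the root of PartialTree$(H,b)$ partitions $H$ into at least two nonempty parts, so every depth-$b$ leaf $v$ has $|L(v)|<|H|$; hence $\mathcal{R}$ has depth at most $n$. Multiplying, $|\mathcal{R}|\le n^2$, so the total running time is $O(n^2)\cdot O(n(Km)^{b+1})=O(n^3(Km)^{b+1})$. To finish, substitute $b$: since $b+1=O(n^\alpha\log(Rn))$, we get $n^3(Km)^{b+1}=2^{O(\log n+(b+1)\log(Km))}=2^{O(n^\alpha\log(Rn)\log m)}$.

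The main obstacle is the second fact above: that the recursion genuinely shrinks the hypothesis set, equivalently that the test chosen at the root of PartialTree$(H,b)$ is nontrivial on $H$ whenever $|H|\ge 2$. I would argue this using that the instance is solvable — so some test distinguishes a pair within $H$ and hence splits $H$ into at least two nonempty parts — together with a short computation showing such a nontrivial test never has larger cost $C(\mathcal{T}_j;H)$ than a test that leaves $H$ intact: restricting a partial tree to a part $H_k$ only decreases leaf depths, and $\sum_k \frac{p(H_k)}{p(H)}$ of these restricted costs telescopes to the cost on all of $H$. Thus a nontrivial test always attains the minimum in PartialTree's $\argmin$, and we may take PartialTree to select one; everything else in the lemma is bookkeeping.
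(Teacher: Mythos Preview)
Your proposal is correct and follows the same approach as the paper: bound the recursion tree of FullTree calls by depth at most $n$ and width at most $n$, then multiply by the $O(n(Km)^{b+1})$ per-call cost from the previous lemma. Your argument is in fact more careful than the paper's one-line proof, which simply asserts the depth and width bounds; in particular, your justification that PartialTree may be taken to select a nontrivial test (so that $|H'|<|H|$ at each recursive step) fills a gap the paper leaves implicit.
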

\begin{proof}
  The cost is dominated by the cost of Algorithm~\ref{alg:opt_trunc_tree}. The depth of recursive calls is at most $n$ and the width of the recursive call tree is at most $n$, thus the total runtime is at most $n^2$ times the runtime of Algorithm \ref{alg:opt_trunc_tree}. Thus the runtime is $O(n^3 (Km)^{b+1})$ 
\end{proof}

\subsection{Notation}
To formally analyze the approximation guarantees of Algorithm~\ref{alg:fulltree}, we need to generalize some earlier definitions. 
We say a decision tree $\mathcal{T}$ is \emph{complete with respect to hypothesis set $H$ up to depth $b$} if, for all hypotheses $h\in H$, either there exists a leaf $v$ of $\mathcal{T}$ with $L(v)\cap H = \{h\}$, or $d_\mathcal{T}(h) \ge b$.
Note that $\mathcal{T}$ is \emph{complete} with respect to hypothesis set $H$ if it is complete up to depth $b$ for all $b$.

Given the tests $\tau_1,\dots,\tau_m:[n]\to[2]$ and a subset $H$ of hypotheses, let $\Phi_H$ be the \emph{$\DT(R)$ instance induced by $H$}. 
It is given by hypotheses $H$ and the tests $\tau_1,\dots,\tau_m: H \to [2]$ restricted to domain $H$.
It is easy to check that, for this instance, we indeed have $\frac{p_{\max}'}{p_{\min}'}\le R$.
We let $\mathcal{T}_G(H)$ denote a greedy tree for the instance $\Phi_H$.
We let $\mathcal{T}_{\OPT}(H)$ denote an optimal tree for instance $\Phi_H$.
We also define the \emph{optimal tree for $\Phi_H$ up to depth $b$} by
\begin{align}
\mathcal{T}_{\OPT}(H, b) \defeq  \argmin_{\substack{\text{$\mathcal{T}$ complete w.r.t.\ $H$}\\ \text{up to depth $b$}}} C(\mathcal{T} ; H).
\end{align}
Importantly, $\mathcal{T}_{\OPT}(H,b)$ is computable by a straightforward recursive algorithm in time $O(n(mK)^b)$ by Algorithm~\ref{alg:opt_trunc_tree}.
For convenience, let
\begin{align}
  C_G(H) \ &\defeq \ C(\mathcal{T}_G(H); H)
  \nonumber\\
  C_{\OPT}(H) \ &\defeq \ C(\mathcal{T}_{\OPT}(H); H)
  \nonumber\\
  C_{\OPT}(H, b) \ &\defeq \ C(\mathcal{T}_{\OPT}(H, b); H).
\end{align}
In this way, we have
\begin{align}
  C_{\OPT}(H, b) \ \le \ C_{\OPT}(H) \ \le \  C_G(H).   
\end{align}

\subsection{Approximation guarantee}
It is easy to see that PartialTree$(H,b)$ computes $\mathcal{T}_{\OPT}(H,b)$ (or one such tree, if there are several).
Let $\mathcal{F}_i$ be the family of hypothesis sets $H$ such that FullTree$_\alpha(H)$ is called at the $i$th level of recursion and the greedy tree $\mathcal{T}_G(H)$ is \emph{not} returned. We consider FullTree$_\alpha([n])$ to be the 0th level of recursion so that $\mathcal{F}_0 = \{[n]\}$.
Let $\mathcal{F}_{\greedy}$ denote the family of hypothesis sets $H$ such that FullTree$_\alpha(H)$ is called and $\mathcal{T}_G(H)$ is returned in that call.
Let $\mathcal{F}_{\greedy} = \mathcal{F}_{\greedy}\ind{1}\cup\mathcal{F}_{\greedy}\ind{2}$ be a partition such that $H\in \mathcal{F}_{\greedy}\ind{2}$ if and only if $p(H)<\frac{1}{n^2}$
We know that, for any $H$ and $H'$, if FullTree$_\alpha(H')$ is called recursively from FullTree$_\alpha(H)$, then $H'\subseteq H$.
Thus, under these definitions, we know that, for all $i$, the hypothesis sets of $\mathcal{F}_i$ are pairwise disjoint, and the hypothesis sets of $\mathcal{F}_{\greedy}$ are pairwise disjoint.

By a double-counting argument, the cost of the output tree is the weighted sum of the partial trees and greedy trees computed in the recursion.
Formally, if $\mathcal{T}_{out}$ is the output tree,
\begin{align}
  \label{eq:subexp-1}
    C(\mathcal{T}_{\text{out}},[n]) = \sum_i \sum_{H \in \mathcal{F}_i} p(H) C_{\OPT}(H, b) + \sum_{H \in \mathcal{F}_{\greedy}} p(H) C_G(H).
\end{align}
Our proof bounds the depth of the recursion, as well as the summand components. 
\begin{lemma}
\label{lem:disjoint_hypotheses_opt}
\label{lem:s-4}
Let $\{H_i\}$ be a collection of disjoint subsets of $[n]$. Then $\sum_i p(H_i) C_{\OPT}(H_i) \leq C_{\OPT}$
\end{lemma}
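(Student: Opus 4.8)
The plan is to exhibit, for each index $i$, an optimal decision tree $\mathcal{T}_{\OPT}(H_i)$ for the induced instance $\Phi_{H_i}$, and then glue these trees together inside a single decision tree for the full instance whose cost is an upper bound on the weighted average of the $C_{\OPT}(H_i)$. The cleanest way to do this is probably to argue in the reverse direction: take the \emph{global} optimal tree $\mathcal{T}_{\OPT}$ for the whole instance, and observe that for each $i$, restricting attention to the hypotheses in $H_i$ gives a decision tree (the same tree $\mathcal{T}_{\OPT}$, viewed as a tree that separates $H_i$) that is complete with respect to $H_i$. Hence $C_{\OPT}(H_i) \le C(\mathcal{T}_{\OPT}; H_i)$, since $C_{\OPT}(H_i)$ is the minimum cost over all trees complete w.r.t.\ $H_i$.

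The key computation is then just unwinding the definitions and using disjointness. Writing $d \defeq d_{\mathcal{T}_{\OPT}}$ for the depth function of the global optimal tree (noting that the depth of $h$ in $\mathcal{T}_{\OPT}$ viewed as a tree for $\Phi_{H_i}$ is at most its depth in the global tree — in fact equal, since removing hypotheses only removes leaves), we have
\begin{align}
  \sum_i p(H_i) C_{\OPT}(H_i)
  \ \le \ \sum_i p(H_i) \cdot \frac{1}{p(H_i)} \sum_{h \in H_i} p_h d(h)
  \ = \ \sum_i \sum_{h \in H_i} p_h d(h)
  \ \le \ \sum_{h \in [n]} p_h d(h)
  \ = \ C_{\OPT},
\end{align}
where the last inequality is where disjointness of the $H_i$ enters (each $h$ is counted at most once), and the final equality is the definition of $C_{\OPT} = C(\mathcal{T}_{\OPT})$.

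The only point requiring a little care — and the closest thing to an obstacle, though it is minor — is the claim that a globally optimal (hence complete) tree, when restricted to a subset $H_i$ of hypotheses, is still a legal decision tree for $\Phi_{H_i}$ with depths no larger than in the original, i.e.\ that $C_{\OPT}(H_i)$ is genuinely bounded by $C(\mathcal{T}_{\OPT};H_i)$. This follows because $\mathcal{T}_{\OPT}$ already separates all of $[n]$, so it certainly separates $H_i$, and for each $h \in H_i$ the deepest vertex consistent with $h$ in the induced instance is an ancestor of (or equal to) the corresponding vertex in the global tree, so $d_{\mathcal{T}_{\OPT}}(h)$ with respect to $H_i$ is at most $d_{\mathcal{T}_{\OPT}}(h)$; combined with the fact that $\mathcal{T}_{\OPT}$ is one admissible tree in the minimization defining $\mathcal{T}_{\OPT}(H_i)$, we get $C_{\OPT}(H_i) \le C(\mathcal{T}_{\OPT};H_i)$. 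Everything else is the double-counting display above.
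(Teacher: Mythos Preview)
Your proposal is correct and follows essentially the same approach as the paper: take the global optimal tree $\mathcal{T}_{\OPT}$, use that $C_{\OPT}(H_i)\le C(\mathcal{T}_{\OPT};H_i)$ by optimality of the induced instance, expand the definition of $C(\mathcal{T}_{\OPT};H_i)$, and sum over $i$ using disjointness. The extra care you take about depths under restriction is in fact unnecessary given the paper's definition of $C(\mathcal{T};H)$ (which uses $d_{\mathcal{T}}(h)$ in the original tree), but it does no harm.
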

\begin{proof}
   Recall that $\mathcal{T}_{\OPT} \defeq \mathcal{T}_{\OPT}([n])$ is the optimal tree of the $\DT(R)$ instance.
   By optimality of $\mathcal{T}_{\OPT}(H_i)$ for the instance instance induced by $H_i$, we have  $C_{\OPT}(H_i) = C(\mathcal{T}_{\OPT}(H_i); H_i)\leq C(\mathcal{T}_{\OPT}; H_i)$.
   Hence,
   \begin{align}
   \sum_{i=1}^\ell p(H_i) C_{\OPT}(H_i)
       \ &\leq \ \sum_{i=1}^\ell p(H_i) C(\mathcal{T}_{\OPT}; H_i)
       \ = \  \sum_i \sum_{h \in H_i} p_h \cdot d_{\mathcal{T}_{\OPT}}(h) \\
       \ &\leq \ \sum_{h \in [n]} p_h \cdot d_{\mathcal{T}_{\OPT}}(h) 
       \ = \ C_{\OPT}. \qedhere
   \end{align}
\end{proof}

\begin{lemma}
\label{lem:s-5}
We have
\begin{align}
\sum_{H \in \mathcal{F}_{\greedy}} p(H) C_G(H) \leq \left(\frac{17}{\alpha}+\log R\right)\cdot C_{\OPT}
\end{align}
\end{lemma}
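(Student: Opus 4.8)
The plan is to split the sum over $\mathcal{F}_{\greedy}$ according to the partition $\mathcal{F}_{\greedy} = \mathcal{F}_{\greedy}\ind{1} \cup \mathcal{F}_{\greedy}\ind{2}$, handling the two families separately. For $\mathcal{F}_{\greedy}\ind{2}$, the hypothesis sets $H$ all satisfy $p(H) < \frac{1}{n^2}$, and since these sets are pairwise disjoint there are at most $n$ of them, so the total weight $\sum_{H \in \mathcal{F}_{\greedy}\ind{2}} p(H) < \frac{n}{n^2} = \frac{1}{n}$. Each such greedy tree has depth at most $n$ (it distinguishes at most $n$ hypotheses using distinct tests along any root-leaf path, so $C_G(H) \le n$), hence $\sum_{H \in \mathcal{F}_{\greedy}\ind{2}} p(H) C_G(H) \le \frac{1}{n} \cdot n = 1 \le C_{\OPT}$, absorbing into the constant. (One should double-check the exact depth bound available in the paper's setup; if $C_G(H) \le n$ is not immediate one can instead use $C_G(H) \le$ (number of hypotheses in $H$) $\le n$.)

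For $\mathcal{F}_{\greedy}\ind{1}$, the idea is to invoke Theorem~\ref{thm:alg-main} on each induced instance $\Phi_H$. For $H \in \mathcal{F}_{\greedy}\ind{1}$ we have $p(H) \ge \frac{1}{n^2}$, and since the weights $p_h$ within $\Phi_H$ are the original weights renormalized by $p(H)$, the minimum renormalized weight is at least $\frac{p_{\min}}{p(H)} \ge \frac{p_{\min}}{1}$; more usefully, after the rounding preprocessing from Appendix~\ref{app:H} we may assume $p_{\min} \ge \frac{1}{n^2}$, so within $\Phi_H$ the minimum weight is at least $\frac{1/n^2}{p(H)} \ge \frac{1}{n^2}$ (since $p(H) \le 1$), giving $\log \frac{1}{p_{\min}'} \le 2\log n$ for the induced instance, and the ratio $\frac{p_{\max}'}{p_{\min}'} \le R$ is preserved. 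Thus Theorem~\ref{thm:alg-main} yields
\begin{align}
  C_G(H) \ \le \ \left(\frac{12 \cdot 2\log n}{\log C_{\OPT}(H)} + \ln R\right) C_{\OPT}(H).
\end{align}
The crucial point is that $H \in \mathcal{F}_{\greedy}\ind{1}$ means the greedy tree was returned, which by the algorithm's test happens when $C_G(H) \ge n^{-\alpha/4} b$ where $b = \ceil{(12\log n + \log R)n^\alpha}$. Combined with $C_G(H) \le \left(\frac{24\log n}{\log C_{\OPT}(H)} + \ln R\right) C_{\OPT}(H) \le \left(\frac{24\log n}{\log C_{\OPT}(H)} + \log R\right)C_G(H)$... that's circular, so instead: from $C_G(H) \ge n^{-\alpha/4}b \ge n^{3\alpha/4}(12\log n + \log R) \ge n^{3\alpha/4}$ and the fact that greedy over-approximates, we get $C_{\OPT}(H) \ge C_G(H) / \left(\frac{24\log n}{\log C_{\OPT}(H)} + \log R\right)$; one shows $C_{\OPT}(H)$ is large enough (at least, say, $n^{\alpha/2}$ roughly) so that $\log C_{\OPT}(H) = \Omega(\alpha \log n)$, whence $\frac{24 \log n}{\log C_{\OPT}(H)} = O(1/\alpha)$. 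Plugging back in, $C_G(H) \le \left(\frac{O(1)}{\alpha} + \log R\right) C_{\OPT}(H)$ for each $H \in \mathcal{F}_{\greedy}\ind{1}$.

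The final step is to sum over $H \in \mathcal{F}_{\greedy}\ind{1}$ using the disjointness: $\sum_{H \in \mathcal{F}_{\greedy}\ind{1}} p(H) C_G(H) \le \left(\frac{O(1)}{\alpha} + \log R\right) \sum_{H \in \mathcal{F}_{\greedy}\ind{1}} p(H) C_{\OPT}(H) \le \left(\frac{O(1)}{\alpha} + \log R\right) C_{\OPT}$ by Lemma~\ref{lem:s-4}, and then adding the $O(1)$ contribution from $\mathcal{F}_{\greedy}\ind{2}$ and tracking constants carefully to land at $\frac{17}{\alpha} + \log R$. The main obstacle is the constant-chasing: I need to pin down precisely how large $C_{\OPT}(H)$ must be to guarantee $\log C_{\OPT}(H) \ge c\alpha\log n$ with a good enough $c$, which requires being careful about the interaction between the threshold $n^{-\alpha/4}b$, the $12\log n$ inside $b$, and the factor $12\cdot 2 = 24$ (or $6\cdot 2 = 12$ in the uniform improvement) coming out of Theorem~\ref{thm:alg-main}; the "$\log R$" term needs to be carried additively rather than multiplicatively throughout, which means being slightly careful that the threshold test $C_G(H) \ge n^{-\alpha/4}b$ with $b \propto (12\log n + \log R)n^\alpha$ forces $\log C_{\OPT}(H)$ large independent of $R$.
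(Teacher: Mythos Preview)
Your approach matches the paper's: split into $\mathcal{F}_{\greedy}\ind{1}$ and $\mathcal{F}_{\greedy}\ind{2}$, bound the latter trivially using $p(H)<1/n^2$ and $C_G(H)\le n$, and for the former combine Theorem~\ref{thm:alg-main} with Lemma~\ref{lem:s-4}. The one place you get tangled is exactly where the paper's argument is cleanest, and it is worth spelling out because it dissolves your ``circular'' worry and your constant-chasing obstacle.

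The paper applies Theorem~\ref{thm:alg-main} \emph{twice}. First, it uses only the crude consequence that greedy is at worst a $(12\log n + \log R)$-approximation on the induced instance (ignoring the refinement in the denominator). Since $H\in\mathcal{F}_{\greedy}\ind{1}$ forces $C_G(H)\ge n^{-\alpha/4}b \ge (12\log n+\log R)\,n^{3\alpha/4}$, dividing by the crude approximation ratio immediately gives $C_{\OPT}(H)\ge n^{3\alpha/4}$. \emph{Now} apply Theorem~\ref{thm:alg-main} a second time, plugging in $\log C_{\OPT}(H)\ge \tfrac{3\alpha}{4}\log n$, to obtain $C_G(H)\le\bigl(\tfrac{12}{3\alpha/4}+\log R\bigr)C_{\OPT}(H)=\bigl(\tfrac{16}{\alpha}+\log R\bigr)C_{\OPT}(H)$. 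Summing via Lemma~\ref{lem:s-4} gives the $\tfrac{16}{\alpha}+\log R$ part; the $\mathcal{F}_{\greedy}\ind{2}$ contribution is bounded by $1\le \tfrac{1}{\alpha}C_{\OPT}$, yielding the stated $\tfrac{17}{\alpha}+\log R$. Your attempt to solve an implicit inequality for $C_{\OPT}(H)$ is unnecessary once you use the two-pass trick, and this is also why the threshold $b$ was chosen with the factor $12\log n+\log R$ in the first place: it is exactly the crude ratio, so dividing cancels it cleanly.

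One caveat you flagged is real: the paper silently uses $\log(1/p_{\min}')\le \log n$ (equivalently, writes $12\log|H|$) for the induced instance, whereas after the rounding of Appendix~\ref{app:H} one only gets $\log(1/p_{\min}')\le 2\log n$. This would turn the $12$ into a $24$ and the $16/\alpha$ into a $32/\alpha$; the paper is a bit loose here, but the structure of the argument is unaffected.
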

\begin{proof}
  For all hypothesis sets $H\in \mathcal{F}_{\greedy}\ind{1}$, Algorithm~\ref{alg:fulltree} guarantees that $C_G(H) \geq (12 \log n + \log R) n^{3\alpha/4}$.
  By Theorem~\ref{thm:alg-main}, the greedy algorithm gives a $(12\log n + \log R)$ approximation on the $\DT(R)$ instance induced by $H$.
  Hence, $C_{\OPT}(H)\ge n^{3\alpha/4}$ for all $H\in \mathcal{F}{\greedy}\ind{1}$.
  By Theorem \ref{thm:alg-main} again, for all $H\in\mathcal{F}{\greedy}\ind{1}$, we have
  \begin{align}
    C_G(H)
    \ &\le \  \left(\frac{12\log(|H|)}{\log C_{\OPT}(H)} + \frac{c}{\alpha}\right) C_{\OPT}(H)
    \ \le \ \left(\frac{16}{\alpha}+\log R\right) \cdot C_{\OPT}(H).
  \end{align}
  Hence,
  \begin{align}
    \label{eq:subexp-2}
      \sum_{H \in \mathcal{F}{\greedy}\ind{1}} p(H) C_G(H) 
      \ &\leq \ \left(\frac{16}{\alpha}+\log R\right) \sum_{H \in \mathcal{F}{\greedy}\ind{1}} p(H) C_{\OPT}(H)
      \ \le \ \left(\frac{16}{\alpha}+\log R\right) \cdot C_{\OPT}  \\
    \label{eq:subexp-3}
      \sum_{H \in \mathcal{F}{\greedy}\ind{2}} p(H) C_G(H) 
      \ &< \   \sum_{H \in \mathcal{F}{\greedy}\ind{2}} \frac{1}{n^2}\cdot n  < 1 < \frac{1}{\alpha}\cdot C_{\OPT}.
  \end{align}
  The last inequality is by Lemma~\ref{lem:disjoint_hypotheses_opt} and the fact that the $H\in\mathcal{F}_{\greedy}$ are disjoint.
  Adding \eqref{eq:subexp-2} and \eqref{eq:subexp-3} gives the desired result
\end{proof}

\begin{lemma}
\label{lem:s-6}
For all $i=0,1,\dots$, we have
\begin{align}
  \sum_{H \in \mathcal{F}_i} p(H) C_{\OPT}(H, b)\leq C_{\OPT}
\end{align}
\end{lemma}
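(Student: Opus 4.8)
The plan is to interpret $\sum_{H\in\mathcal{F}_i} p(H)\, C_{\OPT}(H,b)$ as the cost, restricted to certain hypotheses, of a single decision tree obtained by ``gluing'' optimal depth-$b$ truncations along the recursion, and then compare that glued tree to $\mathcal{T}_{\OPT}$ using a Lemma~\ref{lem:s-4}-style disjointness argument. The hypothesis sets in $\mathcal{F}_i$ are pairwise disjoint (as noted in the excerpt: each $H\in\mathcal{F}_i$ arises from a distinct leaf at depth $b$ of the PartialTree computed one level up, and those leaves partition the surviving hypotheses), so it suffices to relate each summand $p(H)\, C_{\OPT}(H,b)$ to the contribution of $H$ in $\mathcal{T}_{\OPT}$.

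First I would recall that $C_{\OPT}(H,b) = C(\mathcal{T}_{\OPT}(H,b);H)$ is the \emph{minimum} over all trees $\mathcal{T}$ that are complete with respect to $H$ up to depth $b$ of $C(\mathcal{T};H)$. Now fix any $H\in\mathcal{F}_i$. The tree $\mathcal{T}_{\OPT}$ itself, restricted to the hypotheses $H$, need not be complete-up-to-depth-$b$ with respect to $H$ --- a hypothesis in $H$ might be distinguished only much deeper than $b$. But that is fine: we can \emph{truncate} $\mathcal{T}_{\OPT}$ at depth $b$ to obtain a tree $\mathcal{T}'$ that is trivially complete with respect to $H$ up to depth $b$ (every $h$ either is isolated at depth $<b$ or has $d_{\mathcal{T}'}(h)\ge b$ by definition of truncation), and for which $d_{\mathcal{T}'}(h) = \min(d_{\mathcal{T}_{\OPT}}(h), b)\le d_{\mathcal{T}_{\OPT}}(h)$ for every $h\in H$. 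Hence by optimality of $\mathcal{T}_{\OPT}(H,b)$,
\begin{align}
  p(H)\, C_{\OPT}(H,b)
  \ \le \ p(H)\, C(\mathcal{T}'; H)
  \ = \ \sum_{h\in H} p_h\, d_{\mathcal{T}'}(h)
  \ \le \ \sum_{h\in H} p_h\, d_{\mathcal{T}_{\OPT}}(h).
\end{align}
Summing over the pairwise-disjoint $H\in\mathcal{F}_i$ then gives $\sum_{H\in\mathcal{F}_i} p(H)\, C_{\OPT}(H,b) \le \sum_{h\in[n]} p_h\, d_{\mathcal{T}_{\OPT}}(h) = C_{\OPT}$, which is exactly the claim.

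The one point that needs care --- and which I expect to be the main obstacle --- is making precise the notion of ``truncating $\mathcal{T}_{\OPT}$ to depth $b$'' and verifying it really is complete w.r.t.\ $H$ up to depth $b$ in the sense of the paper's definition: one must check that cutting the tree at depth $b$ (replacing every depth-$b$ vertex by a leaf) does not accidentally merge two hypotheses of $H$ that the definition requires be separated at depth $<b$ --- but it cannot, since any $h$ that was isolated at depth $<b$ in $\mathcal{T}_{\OPT}$ remains isolated after truncation, and any other $h$ automatically satisfies the $d_{\mathcal{T}}(h)\ge b$ alternative. A secondary bookkeeping point is confirming that the $H\in\mathcal{F}_i$ are genuinely disjoint; this is asserted in the surrounding text (``the hypothesis sets of $\mathcal{F}_i$ are pairwise disjoint''), so it can be cited directly. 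With those two observations in hand the proof is the short chain of inequalities above.
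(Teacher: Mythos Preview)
Your argument is correct and is essentially the paper's proof with Lemma~\ref{lem:s-4} inlined. The paper's version is one line shorter: it observes directly that $C_{\OPT}(H,b)\le C_{\OPT}(H)$ (since any tree complete with respect to $H$ is automatically complete with respect to $H$ up to depth $b$, so the minimization in $C_{\OPT}(H,b)$ is over a superset), and then applies Lemma~\ref{lem:s-4} to the disjoint family $\mathcal{F}_i$. In particular, your truncation step is unnecessary: the full tree $\mathcal{T}_{\OPT}$ is already complete with respect to $H$ up to depth $b$, so you may compare $C_{\OPT}(H,b)$ to $C(\mathcal{T}_{\OPT};H)$ without truncating.
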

\begin{proof}
  For any $H \in \mathcal{F}_i$, we have $C_{\OPT}(H, b) \leq C_{\OPT}(H)$.
  Summing over all $H \in \mathcal{F}_i$, we have
  \begin{align}
      \sum_{H \in \mathcal{F}_i} p(H) C_{\OPT}(H, b)
      &\le \sum_{H \in \mathcal{F}_i} p(H) C_{\OPT}(H) 
      \le C_{\OPT}
  \end{align}
  where the last inequality follows from Lemma~\ref{lem:disjoint_hypotheses_opt} and that $H\in\mathcal{F}_i$ are disjoint.
\end{proof}

\begin{lemma}
\label{lem:s-7}
The maximum recursion depth in Algorithm~\ref{alg:fulltree} is at most $8/\alpha$
\end{lemma}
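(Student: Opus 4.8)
The plan is to establish a one-step mass-shrinkage estimate and then iterate it down a chain of recursive calls until it collides with the algorithm's guard $p(H)\ge 1/n^2$. Concretely, I will show that whenever FullTree$_\alpha(H)$ recurses, the total weight of the hypotheses it hands to its recursive children is strictly less than $n^{-\alpha/4}\,p(H)$.

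To get this, fix a hypothesis set $H$ at which FullTree$_\alpha(H)$ does \emph{not} return the greedy tree, so that it computes $\mathcal{T} := \text{PartialTree}(H,b) = \mathcal{T}_{\OPT}(H,b)$ and, by the guards of Algorithm~\ref{alg:fulltree}, $p(H)\ge 1/n^2$ and $C_G(H)<n^{-\alpha/4}b$. Every hypothesis $h$ forwarded to a recursive call lies at a leaf of $\mathcal{T}$ at depth $b$, and since $\mathcal{T}$ has depth at most $b$ this forces $d_{\mathcal{T}}(h)=b$. Let $H^{\mathrm{und}}$ be the union of the forwarded sets; these are disjoint (distinct leaves have disjoint sets of consistent hypotheses), so it suffices to bound $p(H^{\mathrm{und}})$. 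Combining the double-counting identity $p(H)\,C(\mathcal{T};H)=\sum_{h\in H} p_h\, d_{\mathcal{T}}(h)$ with $C(\mathcal{T};H)=C_{\OPT}(H,b)\le C_{\OPT}(H)\le C_G(H)<n^{-\alpha/4}b$ gives
\[
 b\cdot p(H^{\mathrm{und}}) \ \le\ \sum_{h\in H} p_h\, d_{\mathcal{T}}(h)\ =\ p(H)\,C(\mathcal{T};H)\ <\ n^{-\alpha/4}\,b\,p(H),
\]
so $p(H^{\mathrm{und}}) < n^{-\alpha/4}\,p(H)$, and in particular every forwarded set $H'$ satisfies $p(H') < n^{-\alpha/4}\,p(H)$.

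Now I iterate. Take any $H\in\mathcal{F}_i$, and let $[n]=H_0\to H_1\to\cdots\to H_i=H$ be the chain of recursive calls leading to it, so that each $H_j$ with $j<i$ recursed and forwarded $H_{j+1}$. Applying the shrinkage estimate at each step, starting from $p(H_0)=1$, yields $p(H)=p(H_i) < n^{-i\alpha/4}$. On the other hand, $H\in\mathcal{F}_i$ means the greedy tree was not returned at $H$, so $p(H)\ge 1/n^2$. Hence $1/n^2\le p(H) < n^{-i\alpha/4}$; taking $\log_n$ of both sides gives $i\alpha/4 < 2$, i.e.\ $i < 8/\alpha$. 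Therefore $\mathcal{F}_i=\emptyset$ for every integer $i\ge 8/\alpha$, which is exactly the claimed bound on the recursion depth.

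I do not expect a genuine obstacle: the whole argument hinges on the one displayed inequality, and every auxiliary fact it uses — that PartialTree$(H,b)$ outputs $\mathcal{T}_{\OPT}(H,b)$, the cost identity $p(H)C(\mathcal{T};H)=\sum_{h}p_h d_{\mathcal{T}}(h)$, the chain $C_{\OPT}(H,b)\le C_{\OPT}(H)\le C_G(H)$, and the pairwise disjointness of the sets in each $\mathcal{F}_i$ — is already recorded in Section~\ref{app:C}. The only thing needing mild care is fixing the convention that the recursion depth is the largest $i$ with $\mathcal{F}_i\neq\emptyset$ and observing that it is precisely the guard $p(H)\ge 1/n^2$ (equivalently, the normalization $p_{\min}\ge 1/n^2$ of Appendix~\ref{app:H}) that terminates each chain.
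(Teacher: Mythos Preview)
Your proof is correct and follows essentially the same route as the paper's: both establish the one-step mass-shrinkage estimate $p(H^{\mathrm{und}})<n^{-\alpha/4}p(H)$ via the Markov-type inequality on depths, then combine it with the guard $p(H)\ge 1/n^2$ to bound the recursion depth. The only cosmetic difference is that the paper tracks the aggregate $\sum_{H\in\mathcal{F}_i}p(H)\le n^{-i\alpha/4}$ by induction on $i$, whereas you follow a single chain of calls, but the underlying argument is identical.
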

\begin{proof}
We show by induction that, for all $i$,
\begin{align}
  \sum_{H\in\mathcal{F}_i}^{} p(H)\le n^{-i\alpha/4}.
\label{eq:C-10}
\end{align}
This suffices, as then, for $i=\lfloor 8/\alpha \rfloor +1$, we have $\sum_{H \in \mathcal{F}_i} p(H) < \frac{1}{n^2}$.
By Algorithm~\ref{alg:fulltree}, we have $p(H)\ge \frac{1}{n^2}$ for all $H\in\mathcal{F}_i$ (otherwise we take the greedy tree). 
Thus, the maximum depth of the recursion in Algorithm \ref{alg:fulltree} is less than $\lfloor 8/\alpha \rfloor$.

Note that equality holds in \eqref{eq:C-10} for $i=0$, so the base case is true.
For the inductive step, fix $i\ge 1$, and let $H\in\mathcal{F}_i$.
Note that
\begin{align}
    n^{-\alpha/4}b
    &\geq C_G(H) \ge C_{\OPT}(H) \ge C_{\OPT}(H, b)
\end{align}
Consider a random variable $X$ equal to the depth in tree $\mathcal{T}_{\OPT}(H, b)$ of a random hypothesis in $H$ where $h\in H$ is chosen with probability proportional to $p_h$.
By above, $\E[X]\le n^{-\alpha/4}b$.
Hence, by Markov's inequality, $\Pr[X\ge b] \le \frac{\E[X]}{b} \le n^{-\alpha/4}$.
Thus, the total weight of hypotheses in $H$ that are in the next recursive call, i.e.\ in $H'$ for some $H'\in \mathcal{F}_{i+1}$, is at most $n^{-\alpha/4}p(H)$.
This holds for any $H$, so we conclude
\begin{align}
    \sum_{H \in \mathcal{F}_{i}} p(H) &\geq n^{\alpha/4} \sum_{H' \in \mathcal{F}_{i+1}} p(H').
\end{align}
This completes the induction, proving the lemma.
\end{proof}

\begin{lemma}
  \label{lem:s-8}
Let $\mathcal{T}_{\text{out}}$ be the tree returned by FullTree$_\alpha([n])$.
Then $C(\mathcal{T}_\text{out}, [n]) \leq (\frac{25}{\alpha}+\log R) C_{\OPT}$.
\end{lemma}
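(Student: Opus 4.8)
The plan is to read the bound off directly from the double-counting identity \eqref{eq:subexp-1}, bounding each of its two groups of terms with one of the preceding lemmas. Recall that \eqref{eq:subexp-1} decomposes $C(\mathcal{T}_{\text{out}},[n])$ into a sum $\sum_i \sum_{H\in\mathcal{F}_i} p(H)C_{\OPT}(H,b)$ over the partial trees produced by PartialTree at each level of recursion, plus a sum $\sum_{H\in\mathcal{F}_{\greedy}} p(H)C_G(H)$ over the greedy trees that are returned directly. So it suffices to bound these two quantities separately and add.

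For the greedy part I will simply cite Lemma~\ref{lem:s-5}, which gives $\sum_{H\in\mathcal{F}_{\greedy}} p(H)C_G(H)\le(\frac{17}{\alpha}+\log R)C_{\OPT}$; this is the step that actually uses Theorem~\ref{thm:alg-main} (to see that on each surviving call the greedy tree is an $O(1/\alpha)+\log R$ approximation to the induced $\DT(R)$ instance) together with the split of $\mathcal{F}_{\greedy}$ according to whether $p(H)\ge\frac{1}{n^2}$. For the partial-tree part I will combine Lemma~\ref{lem:s-7} with Lemma~\ref{lem:s-6}: by Lemma~\ref{lem:s-7} (whose proof in fact shows the recursion depth is less than $\lfloor 8/\alpha\rfloor$), at most $8/\alpha$ of the families $\mathcal{F}_i$ are nonempty, and by Lemma~\ref{lem:s-6} each nonempty family contributes at most $C_{\OPT}$; hence $\sum_i \sum_{H\in\mathcal{F}_i} p(H)C_{\OPT}(H,b)\le\frac{8}{\alpha}C_{\OPT}$.

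Adding the two bounds yields $C(\mathcal{T}_{\text{out}},[n])\le(\frac{8}{\alpha}+\frac{17}{\alpha}+\log R)C_{\OPT}=(\frac{25}{\alpha}+\log R)C_{\OPT}$, which is the claim. There is no real obstacle remaining at this stage: everything substantive --- the Markov-inequality argument bounding the recursion depth, the appeal to Theorem~\ref{thm:alg-main} for the greedy subtrees, and the double-counting identity itself --- is already packaged into the earlier lemmas, so this final step is essentially arithmetic assembly together with the disjointness of the families $\mathcal{F}_i$ (and of $\mathcal{F}_{\greedy}$) already noted before \eqref{eq:subexp-1}. The one point that needs care is that arithmetic: one must verify that the number of nonempty levels coming from Lemma~\ref{lem:s-7} is at most $8/\alpha$ (rather than $8/\alpha+1$ once the $0$th level is counted), since otherwise the constants would not close up to exactly $\frac{25}{\alpha}$ --- and the strict inequality ``$<\lfloor 8/\alpha\rfloor$'' in the proof of Lemma~\ref{lem:s-7} is precisely what provides this slack.
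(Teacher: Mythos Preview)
Your proposal is correct and follows essentially the same route as the paper: both apply \eqref{eq:subexp-1}, invoke Lemma~\ref{lem:s-5} for the greedy contribution, and combine Lemmas~\ref{lem:s-6} and~\ref{lem:s-7} to bound the partial-tree contribution by $\frac{8}{\alpha}C_{\OPT}$, then add. Your closing remark about needing the strict inequality in the proof of Lemma~\ref{lem:s-7} to make the level count come out to at most $8/\alpha$ (rather than $8/\alpha+1$) is a point the paper's own write-up glosses over.
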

\begin{proof}
By \eqref{eq:subexp-1} and Lemmas~\ref{lem:s-5}, \ref{lem:s-6}, and \ref{lem:s-7}, we have
\begin{align}
    C(\mathcal{T}_{\text{out}},[n]) &= \sum_{i=1}^{\lfloor 8/\alpha \rfloor} \sum_{H \in \mathcal{F}_i} p(H) C_{\OPT}(H, b) + \sum_{H \in \mathcal{F}_\text{greedy}} p(H) C_G(H) \\
    &\leq \left(\sum_{i=1}^{\lfloor 8/\alpha \rfloor} C_{\OPT} \right) + \left(\frac{17}{\alpha} + \log R\right) C_{\OPT} 
    \le \left(\frac{25}{\alpha} +\log R \right) C_{\OPT}.\qedhere
\end{align}
\end{proof}

\subsection{Uniform Decision Tree}
We now describe how to modify Algorithm~\ref{alg:fulltree} to give a $\frac{9+\varepsilon}{\alpha}$ approximation for \textsc{Uniform~Decision~Tree} in subexponential time.
By the remark at the end of Theorem~\ref{thm:alg-main}, for all $\varepsilon>0$ there exists an $n_0 = n_0(\varepsilon)$ such that for all $n\ge n_0$, the greedy algorithm gives a $\frac{(4+\frac{2}{3}\varepsilon)\log n}{\log C_{\OPT}}$ approximation on \textsc{Uniform~Decision~Tree}.
Hence, the following \emph{modified greedy algorithm} runs in polynomial time and gives a $\frac{(4+\frac{2}{3}\varepsilon)\log n}{\log C_{\OPT}}$ approximation: for $n\ge n_0$, run the greedy algorithm, and for $n < n_0$, compute the optimal tree by brute force in constant time.
For \textsc{Uniform~Decision~Tree}, set $b=\ceil{(4+\frac{2}{3}\varepsilon)\log(n)n^\alpha}$, use the modified greedy algorithm instead of the greedy algorithm, and return the output of the modified greedy algorithm if $C(\mathcal{T}_G(H); H)\ge n^{-\alpha/3}b$ (rather than $n^{-\alpha/4}b$) and keep the rest of Algorithm~\ref{alg:fulltree} the same.
Lemma~\ref{lem:s-4} still holds.
For uniform weights, we have $F_{\greedy}\ind{2}=\emptyset$, so $\mathcal{F}_{\greedy}\ind{1}=\mathcal{F}_{\greedy}$.
Similar to Lemma~\ref{lem:s-5}, we are guaranteed that $C_{\OPT}(H)\ge n^{2\alpha/3}$ for all $H\in \mathcal{F}_{\greedy}$, and thus
\begin{align}
  \sum_{H\in\mathcal{F}_{\greedy}}^{} p(H)C_G(H) \le \sum_{H\in\mathcal{F}_{\greedy}}^{} \frac{(4+\varepsilon)\log(|H|)}{\log(C_{\OPT}(H))}\cdot C_{\OPT}(H) \le \frac{6+\varepsilon}{\alpha}\cdot C_{\OPT}.
\end{align}
Lemma~\ref{lem:s-6} still holds.
In Lemma~\ref{lem:s-7} the maximum depth of recursion is now $3/\alpha$ as the weight of hypotheses at each recursive call shrinks by a factor of $n^{\alpha/3}$ and the weight of hypotheses at each nonempty level is at most $\frac{1}{n}$.
Hence, the cost of the output tree has a contribution of at most $\frac{6+\varepsilon}{\alpha}C_{\OPT}$ from the greedy trees and at most $\frac{3}{\alpha}C_{\OPT}$ from the outputs of the PartialTree, for a total cost of at most $\frac{9+\varepsilon}{\alpha}C_{\OPT}$.

\section{Related Work}
\label{sec:rel}

There have been several other works analyzing \textsc{Decision~Tree} and they analyze it in a variety of cases to achieve the gold standard $O(\log n)$. While we examined the case with $K$-ary tests and non-uniform weights, we assumed that the tests had equal costs. Other works \citep{guillory2009average,gupta2010approximation} analyze the case where the test costs are non-uniform. \cite{guillory2009average} shows that the greedy algorithm yields $O(\log n)$ when either the costs are non-uniform or the weights are non-uniform (with the rounding trick) but not both. \cite{gupta2010approximation} introduces a new algorithm that achieves $O(\log n)$ with both non-uniform weights and costs.

In this work we studied the average depth of decision trees. We remark that, in the \emph{worst-case} decision tree problem, where the cost of a tree is defined to be the \emph{maximum} depth of a leaf in the tree, the approximability is known.
The greedy algorithm gives an $O(\log n)$ approximation \citep{arkin1998decision}, and obtaining a $o(\log n)$ approximation is NP-hard \citep{laber2004hardness}.

For the worst-case decision tree problem, there is a line of work that examines the \emph{absolute} query rate rather than the query rate relative to the optimal. In this line of work, the chief goal is to identify conditions where the greedy algorithm achieves the information-theoretically optimal rate $O(\log n)$. 
One such condition that ensures the $O(\log n)$ rate is ``sample-rich'' \citep{naghshvar2012noisy},
which states that every binary partition of the hypotheses has a test with matching pre-images.
\cite{nowak2009noisy,nowak2011geometry} introduced the more lenient
\emph{$k$-neighborly} condition, which requires that every two tests be connected by a sequence of tests
where neighboring tests disagree on at most $k$ hypotheses. An even more general condition is the split-neighborly condition \citep{mussmann2018generalized}, which is satisfied if every two tests are connected by a sequence of tests where neighboring tests must have every subset of the disagreeing hypotheses be evenly split by some other test.

\section{Conclusion}
\label{sec:conclusion}
There are several open questions left by our work.
\begin{enumerate}
\item 
Could one prove hardness of approximation results for \textsc{Uniform~Decision~Tree} for ratios larger than $4-\varepsilon$? It would be interesting to prove either NP-hardness results for larger constant factor approximations, or fine-grained complexity results for larger approximation ratios such as in \cite{ManurangsiR17}. 
\item On the flip side, could one find faster, perhaps polynomial time algorithms for approximating \textsc{Uniform~Decision~Tree} for ratios where we now have subexponential time algorithms?
\item On can also consider a generalization of \textsc{Decision~Tree} when the test costs are non-uniform. \citep{guillory2009average,gupta2010approximation} Could one obtain similar results in this setting?
\end{enumerate}

\section{Acknowledgements}
The authors thank Joshua Brakensiek for helpful discussions and feedback on an earlier draft of this paper.
The authors thank Mary Wootters for helpful feedback on an earlier draft of this paper.
The authors thank  anonymous reviews for helpful feedback on an earlier draft of this paper.

\bibliographystyle{alpha}
\bibliography{bibliography}

\appendix

\section{Proof of Theorem~\ref{thm:alg-main}}
\label{app:Z}

We now give a proof of Theorem~\ref{thm:alg-main}, highlighting the differences with the proof of the special case in Section~\ref{sec:main-proof}, and suppressing parts of the proof that are identical.

\subsection{Notation}
We reuse all of the notation in Section~\ref{ssec:main-proof:not}.
The only difference is that, in this section, $p(S)\defeq\sum_{h\in S}^{} p_h$ is not necessarily equal to $\frac{|S|}{n}$.
Just as in Section~\ref{sec:main-proof}, fix $\delta=\frac{1}{C_{\OPT}}$.

\subsection{The basic argument}
Lemma~\ref{lem:alg-10} is still true, and we restate it for completeness.
\begin{lemma}[Lemma~\ref{lem:alg-10}, restated]
  \label{Z:lem:alg-10}
  We have $C_G =  \sum_{v\in \mathcal{T}_G^\mathrm{o}}^{} p(v)$. 
\end{lemma}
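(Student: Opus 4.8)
The statement is the exact analogue of Lemma~\ref{lem:alg-10} from the uniform case, and the plan is to carry over that proof essentially verbatim — the only thing that changed is that $p(v)$ is now $\sum_{h\in L(v)} p_h$ rather than $|L(v)|/n$, and the double-counting argument never used uniformity. So I would start from the definition $C_G = \sum_{h\in[n]} p_h\, d_{\mathcal{T}_G}(h)$ and rewrite $d_{\mathcal{T}_G}(h)$, the depth of the leaf associated with $h$, as the number of interior vertices $v$ of $\mathcal{T}_G$ that are ancestors of that leaf, i.e. $d_{\mathcal{T}_G}(h) = \sum_{v\in\mathcal{T}_G^{\mathrm{o}},\, h\sqsubseteq v} 1$.

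Then I would substitute this in and swap the order of summation over hypotheses $h$ and interior vertices $v$: $\sum_{h} p_h \sum_{v\in\mathcal{T}_G^{\mathrm o},\, h\sqsubseteq v} 1 = \sum_{v\in\mathcal{T}_G^{\mathrm o}} \sum_{h : h\sqsubseteq v} p_h$. The inner set $\{h : h\sqsubseteq v\}$ is exactly $L(v)$, the set of hypotheses consistent with $v$, so $\sum_{h\in L(v)} p_h = p(v)$ by definition, giving $C_G = \sum_{v\in\mathcal{T}_G^{\mathrm o}} p(v)$. Concretely:
\begin{align}
  C_G
  \ = \ \sum_{h\in[n]}^{} p_h d_{\mathcal{T}_G}(h)
  \ = \ \sum_{h\in[n]}^{} \sum_{\substack{v\in\mathcal{T}_G^\mathrm{o} \\ h\sqsubseteq v}}^{} p_h
  \ = \ \sum_{v\in\mathcal{T}_G^\mathrm{o}}^{} \sum_{h\in L(v)}^{} p_h
  \ = \ \sum_{v\in\mathcal{T}_G^\mathrm{o}}^{} p(v).
\end{align}

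There is no real obstacle here: the identity $d_{\mathcal{T}_G}(h) = |\{v\in\mathcal{T}_G^{\mathrm o} : h\sqsubseteq v\}|$ holds because the root-to-leaf path for $h$ passes through exactly $d_{\mathcal{T}_G}(h)$ interior vertices (the leaf itself is not interior), and the Fubini-style interchange of the two finite sums is unconditional. The only point worth a word of care is confirming that $\{h\in[n] : h\sqsubseteq v\} = L(v)$ for an interior vertex $v$, which is immediate from the definition of consistency. Since the excerpt already flags that "Lemma~\ref{lem:alg-10} is still true," the proof is just this short computation restated in the nonuniform setting, and I would present it exactly as above.
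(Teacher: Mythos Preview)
Your proposal is correct and is essentially identical to the paper's own proof of Lemma~\ref{lem:alg-10}: the same double-counting chain of equalities, with the interchange of summation over hypotheses and interior vertices. The paper in Appendix~\ref{app:Z} simply restates the lemma and refers back to the original proof, which (as you note) never used uniformity.
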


At a high level, our proof defines \emph{balanced} and \emph{imbalanced} vertices (next subsection) using the parameter $\delta$ and bound the weight of the balanced and imbalanced vertices separately.
We bound the weight of the balanced vertices by an entropy argument, and the weight of the imbalanced vertices by partitioning the imbalanced vertices into paths, called \emph{chains}, and bounding the weights of each chain separately.
If a vertex $v$ has a \emph{heavy} hypothesis $h$ (defined in Section~\ref{sssec:Z:heavy}), we set $q(v)=p_h$, and otherwise we set $q(v)=0$.
To bound the cost of imbalanced vertices, we also need to bound the costs of heavy hypotheses $q(v)$.
Overall, we make get the following bounds.
  \begin{align}
    C_G
    \ &= \  \sum_{v\in\mathcal{T}_G^\mathrm{o}}^{} p(v) \nonumber\\
    \ &= \  \sum_{v\text{ balanced}}^{} p(v) + \sum_{v\text{ imbalanced}}^{} p(v) \nonumber\\
    \ &\le \  \frac{\log n}{\log 2C_{\OPT}}\cdot 2C_{\OPT} + \left(4s_{\max}(C_{\OPT}+1) + \sum_{v\in\mathcal{T}_G^\mathrm{o}}^{} q(v)\right) \nonumber\\
    \ &\le \  \frac{\log n}{\log 2C_{\OPT}}\cdot 2C_{\OPT} + 4\cdot \frac{\log\frac{1}{p_{\min}}}{\log \frac{1}{\delta}} \cdot (C_{\OPT}+1) + \left( 1 + \ln\frac{p_{\max}}{p_{\min}} \right)\cdot C_{\OPT} \nonumber\\
    \ &< \   \left(\frac{12\log\frac{1}{p_{\min}}}{\log C_{\OPT}} + \ln\frac{p_{\max}}{p_{\min}}\right)\cdot C_{\OPT}.
  \end{align}

\subsection{More notation: Majority and minority answers}
Again, we define majority (minority) answers, edges, children, which are useful for defining \emph{balanced} and \emph{imbalanced} vertices.

For each vertex $v$ in the greedy tree, let $j_v$ denote the test used at $v$.
For each vertex $v$, label its children by $v^+$ and $v^{1,-}, v^{2,-},\dots,$ so that $p(v^+)\ge p(v^{\ell,-})$ for all $\ell$, with ties broken\footnote{any tiebreaking procedure suffices, as long as the tiebreaking is consistent with the $k_{j,S}^+$ and $k_{j,S}^-$ notation in the next paragraph.} by labeling $v^+$ by the vertex corresponding to the largest answer.\footnote{it is possible to have a vertex that has one child, namely a test that doesn't distinguish any pairs of hypotheses at a vertex, but such a test is useless and never appears in either the greedy or optimal tree, so we assume it doesn't exist.}
Call the edge from $v$ to $v^+$ a \emph{majority edge}\footnote{Here, we may have $p(v^+) < \frac{1}{2}p(v)$, so the weight of hypotheses consistent with $v^+$ do not necessarily constitute a majority. However, this does difference does not affect the proof, and we keep the wording to stay consistent with Section~\ref{sec:main-proof}.}, and the edges from $v$ to $v^{\ell,-}$ \emph{minority edges}. 
Call $v^+$ the \emph{minority child} of $v$ and call $v^{1,-},v^{2,-},\dots,$ the minority children of $v$.
Let $L^-(v)\defeq \cup_{\ell}L(v^{\ell,-})$ be the hypotheses consistent with the minority children of $v$, and let their weight be $p^-(v) = p(L^-(v))$.
Accordingly, we have $p(v) = p(v^+) + p^-(v)$ for all $v\in\mathcal{T}_G^\mathrm{o}$.
This is illustrated in Figure~\ref{fig:tree}.

In order to reason about the greedy tree precisely, we use the following notation which is more technical.
For test $j\in[m]$ and hypotheses $S\subseteq[n]$, let $k^+_{j,S}\in[K]$ be the answer to test $\tau_j$ that accounts for the maximum weight of hypotheses in $S$, with ties broken by choosing the largest indexed answer $k^+_{j,S}$.
We call $k^+_{j,S}$ the \emph{majority answer of test $j$ with respect to hypothesis set $S$}.
Call the other answers the \emph{minority answers of test $j$ with respect to hypothesis set $S$}.
For all $j\in[m]$ and $S\subset [n]$, let
\begin{align}
  I_{j,S}^+ \ \defeq \  \tau_j^{-1}(k^+_{j,S}),
  \qquad
  I_{j,S}^- \ \defeq \ [n] \setminus I_{j,S}^+.
\end{align}
We think of $I_{j,S}^+$ ($I_{j,S}^-$) as the set of hypotheses that, under test $j$, output the majority (minority) answer to test $j$ with respect to set $S$.
Note that, with the above notation, we have $L(v^+) = I_{j_v,L(v)}^+\cap L(v)$ and $L^-(v) = I_{j_v,L(v)}^-\cap L(v)$.
Under these more general definitions, a generalization of Lemma~\ref{lem:alg-9} holds.
The proof is identical to that of Lemma~\ref{lem:alg-9}, so we omit it.
\begin{lemma}
  \label{Z:lem:alg-9}
  For any vertices $u$ of $v$ with $u$ a descendant of $v$, we have $p^-(v)\ge p^-(u)$.
\end{lemma}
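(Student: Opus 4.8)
The plan is to mimic the proof of Lemma~\ref{lem:alg-9} almost verbatim, replacing the one binary-specific step by a short observation about $K$-ary partitions. The only fact I need from greediness is that at every interior vertex $w$ the chosen test $j_w$ minimizes $\max_{k\in[K]} p\big(\tau_j^{-1}(k)\cap L(w)\big)$ over all tests $j$; equivalently, $j_w$ \emph{maximizes} $p(L(w)) - \max_k p\big(\tau_j^{-1}(k)\cap L(w)\big) = p\big(I_{j,L(w)}^-\cap L(w)\big)$, the total weight of hypotheses in $L(w)$ falling outside the heaviest answer class of test $j$. Recalling from the notation that $L^-(w) = I_{j_w,L(w)}^-\cap L(w)$, and hence $p^-(w) = p\big(I_{j_w,L(w)}^-\cap L(w)\big)$, applying this at $w=v$ and comparing against the test $j_u$ actually used at the descendant $u$ gives $p^-(v)\ge p\big(I_{j_u,L(v)}^-\cap L(v)\big)$ as the first step.

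Next I would chain downward through the inequalities
\[
  p^-(v) \;\ge\; p\big(I_{j_u,L(v)}^-\cap L(v)\big) \;\ge\; p\big(I_{j_u,L(v)}^-\cap L(u)\big) \;\ge\; p\big(I_{j_u,L(u)}^-\cap L(u)\big) \;=\; p^-(u).
\]
Here the first inequality is greediness at $v$ as above; the second holds because $L(u)\subseteq L(v)$ (since $u$ is a descendant of $v$), so $I_{j_u,L(v)}^-\cap L(u)\subseteq I_{j_u,L(v)}^-\cap L(v)$ and $p$ is monotone; the third is the step requiring $K>2$ care; and the final equality is the definition of $p^-(u)$. For the third inequality, observe that $I_{j_u,L(v)}^-\cap L(u) = L(u)\setminus \tau_{j_u}^{-1}\big(k^+_{j_u,L(v)}\big)$ is obtained from $L(u)$ by deleting one answer class of $j_u$, whereas $I_{j_u,L(u)}^-\cap L(u) = L(u)\setminus \tau_{j_u}^{-1}\big(k^+_{j_u,L(u)}\big)$ deletes, by definition of $k^+_{j_u,L(u)}$, the maximum-weight answer class of $j_u$ within $L(u)$; hence the former deletion leaves at least as much weight as the latter.

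The one real subtlety — the step the binary proof gets for free — is precisely this third inequality: when $K=2$, $I_{j_u,L(v)}^-\cap L(u)$ is automatically one of the two children $L(u^+)$ or $L(u^-)$ and so has weight at least $p(u^-)=p^-(u)$ since $p(u^-)\le p(u^+)$, whereas for general $K$ one must instead argue directly that removing an arbitrary answer class from $L(u)$ cannot delete more weight than removing the heaviest one. With that observation in place, concatenating the displayed chain yields $p^-(v)\ge p^-(u)$, and nothing else in the binary argument needs to change.
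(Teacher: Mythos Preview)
Your proof is correct and follows essentially the same approach as the paper: the paper states that the proof of Lemma~\ref{Z:lem:alg-9} is identical to that of Lemma~\ref{lem:alg-9}, and your chain of three inequalities matches that proof step for step. Your explicit justification of the third inequality for general $K$ (removing an arbitrary answer class deletes at most as much weight as removing the heaviest one) is exactly the small adaptation needed, since the paper's binary argument there relied on $I_{j_u,L(v)}^-\cap L(u)$ being one of $L(u^+)$ or $L(u^-)$.
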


\subsection{Defining balanced and imbalanced vertices}
In the following definition, we identify \emph{balanced} vertices and \emph{imbalanced} vertices.
By Lemma~\ref{Z:lem:alg-10}, we can separately bound the weights of the balanced and imbalanced vertices.
\begin{definition}
  \label{Z:def:level}
  Let $s$ be a positive integer.
  \begin{enumerate}
  \item We say a vertex $v\in\mathcal{T}_G^\mathrm{o}$ is \emph{level-$s$ imbalanced} if $p^-(v) \le \delta^s$ and $p(v) > 2\delta^s$.
  \item We say a vertex $v$ is \emph{imbalanced} if it is level-$s$ imbalanced for some $s$, and \emph{balanced} otherwise.
  \item We say a level-$s$ imbalanced vertex $v$ is \emph{minimal} if no descendant of $v$ is also level-$s$ imbalanced vertex, and a level-$s$ imbalanced vertex $v$ is \emph{maximal} if no ancestor of $v$ is level-$s$ imbalanced.
  \end{enumerate}
\end{definition}
Let 
\begin{align}
    s_{\max} \defeq \frac{\log \frac{1}{p_{\min}}}{\log\frac{1}{\delta}}
\end{align}
and note that interior level-$s$ imbalanced vertices exist only for $s\le s_{\max}$.
The following lemma proves a structural result about balanced vertices, with the punchline being item (iii), which permits Definition~\ref{Z:def:chain}.
The proof of Lemma~\ref{Z:lem:level-1} is nearly identical to that of Lemma~\ref{lem:type-1}.
We include a proof of item (i) because of a subtle difference to the proof of item (i) of Lemma~\ref{lem:type-1}.
However, the proofs of the other two parts are identical, so we omit them.
\begin{lemma}
  \label{Z:lem:level-1}
  Let $s$ be a positive integer.
  \begin{enumerate}
    \item[(i)] If $v$ is a level-$s$ imbalanced vertex, then, among the children of $v$, only $v^+$ can be a level-$s$ imbalanced vertex.
    \item[(ii)] Additionally, if $v$ and $u$ are level-$s$ imbalanced vertices and $v$ is an ancestor of $u$, then every vertex on the path from $v$ to $u$ is a level-$s$ imbalanced vertex.
    \item[(iii)] Finally, the set of level-$s$ imbalanced vertices can be partitioned into vertex disjoint paths, each of which connects a maximal level-$s$ imbalanced vertex to a minimal level-$s$ imbalanced vertex and contains only majority edges.
  \end{enumerate}
\end{lemma}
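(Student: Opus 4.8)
The plan is to prove Lemma~\ref{Z:lem:level-1} by following exactly the structure of the proof of Lemma~\ref{lem:type-1}, substituting $p^-(v)$ for $p(v^-)$ and $v^+$ for the majority child wherever those appear. For part (i), if $v$ is level-$s$ imbalanced, then by definition $p^-(v)\le\delta^s$; since each minority child $v^{\ell,-}$ satisfies $p(v^{\ell,-})\le p^-(v)\le\delta^s<2\delta^s$, no minority child can be level-$s$ imbalanced (a level-$s$ imbalanced vertex $w$ must have $p(w)>2\delta^s$). Hence only $v^+$ can be level-$s$ imbalanced. The one subtlety worth spelling out --- and the reason the authors say a proof of (i) is included --- is that in the general-$K$ setting the minority children collectively carry weight $p^-(v)$, so one must check the bound $p(v^{\ell,-})\le p^-(v)$ for \emph{each} minority child individually rather than just for the single minority child as in the binary case; but this is immediate since $p^-(v)=\sum_\ell p(v^{\ell,-})$ and all terms are nonnegative.

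For part (ii), I would take three vertices $v\sqsupseteq w\sqsupseteq u$ with $v,u$ level-$s$ imbalanced. Monotonicity of $p$ down the tree gives $p(v)\ge p(w)\ge p(u)>2\delta^s$, and Lemma~\ref{Z:lem:alg-9} (the generalization of Lemma~\ref{lem:alg-9}, which states $p^-(v)\ge p^-(w)\ge p^-(u)$ for descendants) gives $\delta^s\ge p^-(v)\ge p^-(w)\ge p^-(u)$. Hence $w$ is level-$s$ imbalanced. This argument is word-for-word the binary case with $p^-$ in place of $p(\cdot^-)$, so I would simply cite the proof of Lemma~\ref{lem:type-1}(ii).

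For part (iii), I would again reuse the argument verbatim: partition the level-$s$ imbalanced vertices by their (unique, by monotonicity) maximal level-$s$ imbalanced ancestor, and show each block is a downward path. Starting from the maximal vertex $v_1$, repeatedly follow the unique level-$s$ imbalanced child (unique by (i)) to get $v_1,v_2,\dots,v_{\ell_P}$; the same contradiction argument as in Lemma~\ref{lem:type-1}(iii) --- using (i) and (ii) --- shows these are \emph{all} the vertices in the block, and by (i) every edge along the path is a majority edge. Since nothing in this combinatorial argument used $K=2$ or uniformity, it carries over unchanged.

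I do not expect any real obstacle here: the entire content of the lemma is that the structural facts about chains survive the passage from the special case to the general case, and the only place where the generalization is not completely cosmetic is part (i), where one must be slightly careful that the correct quantity to compare against the threshold $2\delta^s$ is $p^-(v)$ (the total minority weight) and that this still dominates each individual minority child's weight. The authors have arranged the definitions (in particular the definition of $p^-(v)$ and the statement of Lemma~\ref{Z:lem:alg-9}) precisely so that this goes through, so the ``hard part'' is really just verifying that no hidden use of $K=2$ was made --- and a careful read of the original proof confirms there was none.
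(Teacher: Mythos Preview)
Your proposal is correct and matches the paper's approach exactly: the paper gives only the proof of (i), noting that each minority child $u$ of $v$ satisfies $p(u)\le p^-(v)\le \delta^s<2\delta^s$ and hence cannot be level-$s$ imbalanced, and explicitly omits (ii) and (iii) as identical to Lemma~\ref{lem:type-1}. You have also correctly identified the one subtlety the authors flag---that in the $K$-ary case one must bound each minority child's weight by the total minority weight $p^-(v)$ rather than working with a single minority child.
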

\begin{proof}[Proof of (i)]
  Note that if $v$ is level-$s$ imbalanced, then $p^-(v)\le \delta^s$, which means every $u$ different from $v^+$ satisfies $p(u)\le p^-(v)\le \delta^s < 2\delta^s$, so such $u$ cannot be level-$s$ imbalanced.
  Hence, among the children of $v$, only $v^+$ can be level-$s$ imbalanced.
\end{proof}
Lemma~\ref{Z:lem:level-1} motivates the following definition.
\begin{definition}
  \label{Z:def:chain}
  Let $s$ be a positive integer.
  A \emph{level-$s$ chain}, $P=(P_1,\dots,P_{|P|})$, is a sequence of level-$s$ imbalanced vertices starting at a maximal level-$s$ imbalanced vertex and ending at a minimal level-$s$ imbalanced vertex.
  By Lemma~\ref{Z:lem:level-1}, the level-$s$ chains partition the level-$s$ imbalanced vertices.
  We therefore let $\mathcal{P}_s$ denote the level-$s$ chains.
\end{definition}
In general, for $s\neq s'$, a level-$s$ chain might overlap with a level-$s'$ chain.

\subsection{Bounding the weight of balanced vertices}
Under these definitions, a generalization of Lemma~\ref{lem:type-2} is still true.
The proof is identical to that of Lemma~\ref{lem:type-2}, so we omit it.
\begin{lemma}
  \label{Z:lem:level-2}
  For every balanced vertex $v$, we have $p^-(v)\ge \frac{\delta}{2}p(v)$.
\end{lemma}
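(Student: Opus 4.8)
The claim is Lemma~\ref{Z:lem:level-2}: every balanced vertex $v$ satisfies $p^-(v)\ge \frac{\delta}{2}p(v)$. The plan is to mimic exactly the proof of Lemma~\ref{lem:type-2}, since the definitions of balanced/imbalanced have been set up so that the same argument carries over verbatim, with $p(v^-)$ replaced by $p^-(v)$ and ``$n$'' replaced by the appropriate normalization.

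First I would argue by contradiction: suppose $v$ is balanced but $p^-(v) < \frac{\delta}{2}p(v)$. The idea is to exhibit an integer $s$ witnessing that $v$ is level-$s$ imbalanced, contradicting balancedness. Choose the real number $s'$ defined by $p^-(v) = \delta^{s'}$ (this is well-defined since $p^-(v)>0$ because the tree is complete and nontrivial; if $p^-(v)=0$ the vertex has only one child, which we have assumed away). Since $\delta<1$, we have $p^-(v)\le \delta^{\lfloor s'\rfloor}$. Then from the hypothesis $p(v) > \frac{2}{\delta}p^-(v)$ I would compute
\begin{align}
p(v) > \frac{2}{\delta}p^-(v) = \frac{2}{\delta}\delta^{s'} = 2\delta^{s'-1} \ge 2\delta^{\lfloor s'\rfloor},
\end{align}
where the last inequality uses $s'-1 \le \lfloor s'\rfloor$ together with $\delta<1$. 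So with $s=\lfloor s'\rfloor$ we get $p^-(v)\le\delta^s$ and $p(v)>2\delta^s$, i.e.\ $v$ is level-$s$ imbalanced, contradicting the assumption that $v$ is balanced. One caveat: we need $s\ge 1$ (the definition of level-$s$ imbalanced requires $s$ a positive integer). Since $v$ is balanced with $p^-(v)<\frac{\delta}{2}p(v)\le\frac12$ and $\delta\le 1$ (in fact $\delta=1/C_{\OPT}\le 1/\log_K n$ is small), we have $p^-(v)<\delta/2<\delta<1=\delta^0$, so $s'>0$ and hence $\lfloor s'\rfloor\ge 1$ once $s'\ge 1$; if $0<s'<1$ one checks directly that $p(v)>2\delta^{s'-1}>2$ is impossible since $p(v)\le 1$, so in fact $s'\ge 1$ always and $s=\lfloor s'\rfloor\ge 1$.

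I do not expect any real obstacle here — this is a routine transcription of Lemma~\ref{lem:type-2}, and indeed the excerpt explicitly says ``the proof is identical to that of Lemma~\ref{lem:type-2}, so we omit it.'' The only points requiring a sentence of care are (a) that $p^-(v)>0$, which follows from our standing assumption that every interior vertex has at least two children, and (b) the floor-exponent manipulation $\delta^{s'-1}\ge\delta^{\lfloor s'\rfloor}$, which is immediate from $\delta<1$ and $s'-1\le\lfloor s'\rfloor$. So the proof is essentially: assume the negation, pick $s=\lfloor s'\rfloor$ where $\delta^{s'}=p^-(v)$, and derive that $v$ is level-$s$ imbalanced, contradiction.
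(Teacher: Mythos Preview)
Your proposal is correct and takes essentially the same approach as the paper's proof of Lemma~\ref{lem:type-2}: assume $p^-(v)<\frac{\delta}{2}p(v)$, set $s'$ so that $p^-(v)=\delta^{s'}$, and verify that $v$ is level-$\lfloor s'\rfloor$ imbalanced. Your extra care about $s'\ge 1$ and $p^-(v)>0$ is fine; note that $s'>1$ also follows more directly from $p^-(v)<\tfrac{\delta}{2}p(v)\le\tfrac{\delta}{2}<\delta$.
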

We now bound the contribution of the balanced vertices to the weight using an entropy argument.
Now, in the general case, the entropy argument requires a little more care when bounding the entropy of a single $K$-ary test.
\begin{lemma}
\label{Z:lem:level-3}
  We have
  \begin{align}
    \sum_{v\text{ balanced}}^{} p(v) \le \frac{\log n}{\log\frac{2}{\delta}}\cdot \frac{2}{\delta}.
  \end{align}
\end{lemma}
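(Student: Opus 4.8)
The plan is to reuse the entropy argument from the proof of Lemma~\ref{lem:type-3}, the only genuinely new ingredient being a lower bound on the entropy of a single $K$-ary split at a balanced vertex. As in the binary case, let $X_v$ denote the answer drawn at an interior vertex $v$ by the walk that starts at the root and moves from $v$ to child $u$ with probability $p(u)/p(v)$; this walk reaches each interior vertex $v$ with probability exactly $p(v)$ and terminates at hypothesis $h$ with probability $p_h$, so by the chain rule for entropy along the tree
\[
  \sum_{v\in\mathcal{T}_G^{\mathrm o}} p(v)\,\mathbb{H}(X_v) \;=\; \mathbb{H}(p_1,\dots,p_n) \;\le\; \log n,
\]
the last inequality because there are $n$ outcomes. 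Hence it suffices to show $\mathbb{H}(X_v)\ge \frac{\delta}{2}\log\frac{2}{\delta}$ for every balanced vertex $v$; restricting the sum above to balanced vertices and rearranging then gives $\sum_{v\text{ balanced}} p(v)\le \frac{\log n}{\log(2/\delta)}\cdot\frac{2}{\delta}$.

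For the per-vertex bound, set $\mu \defeq p^-(v)/p(v)$, so $\mu\ge\delta/2$ by Lemma~\ref{Z:lem:level-2}, and split on whether $\mu\le\frac12$ (here $\mathbb{H}(\cdot)$ applied to a scalar in $[0,1]$ denotes binary entropy, as elsewhere in the paper). If $\mu\le\frac12$: coarsening $X_v$ by merging all minority answers into one symbol can only decrease entropy, so $\mathbb{H}(X_v)\ge \mathbb{H}(\mu)$; since binary entropy is increasing on $[0,\frac12]$ and $\delta/2\le\mu\le\frac12$, this is at least $\mathbb{H}(\delta/2)\ge \frac{\delta}{2}\log\frac{2}{\delta}$, the last step discarding the nonnegative term $(1-\frac{\delta}{2})\log\frac{1}{1-\delta/2}$. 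If $\mu>\frac12$: every child $u$ of $v$ satisfies $p(u)\le p(v^+)=(1-\mu)p(v)$ because $v^+$ is a heaviest child, so in $\mathbb{H}(X_v)=\sum_u \frac{p(u)}{p(v)}\log\frac{p(v)}{p(u)}$ each factor $\log\frac{p(v)}{p(u)}$ is at least $\log\frac{1}{1-\mu}$, whence $\mathbb{H}(X_v)\ge \log\frac{1}{1-\mu}>\log 2=1$. To close the second case I would note that $\frac{\delta}{2}\log\frac{2}{\delta}=x\log\frac1x$ with $x=\delta/2\in(0,\frac12]$, and $x\log\frac1x\le \frac{1}{e\ln 2}<1$, so $\mathbb{H}(X_v)>1>\frac{\delta}{2}\log\frac{2}{\delta}$ here too; this uses only $\delta=1/C_{\OPT}\le 1$.

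The main obstacle is exactly the regime $\mu>\frac12$, which cannot occur when $K=2$: a $K$-ary test may send most of $L(v)$ to the minority children collectively, so one cannot simply invoke $\mathbb{H}(X_v)\ge \mathbb{H}(p^-(v)/p(v))\ge \mathbb{H}(\delta/2)$ as in the binary proof. The fix above is the observation that in that regime the heaviest child already has mass below $\frac12 p(v)$, which forces $\mathbb{H}(X_v)>1$ for free, and $1$ comfortably exceeds $\frac{\delta}{2}\log\frac{2}{\delta}$. Everything else — the tree-entropy decomposition, the inequality $\mathbb{H}(p_1,\dots,p_n)\le\log n$, and the final step $\log n\ge \sum_{v\text{ balanced}} p(v)\,\mathbb{H}(X_v)\ge \frac{\delta}{2}\log\frac{2}{\delta}\sum_{v\text{ balanced}} p(v)$ — is a verbatim copy of the binary argument.
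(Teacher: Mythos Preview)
Your proof is correct. Both you and the paper reduce the lemma to the per-vertex bound $\mathbb{H}(X_v)\ge \frac{\delta}{2}\log\frac{2}{\delta}$ at balanced vertices and then use the same entropy decomposition along the greedy tree. The difference is only in how that per-vertex bound is established: the paper minimises $\mathbb{H}(\alpha_1,\dots,\alpha_K)$ over the constrained simplex $\{\alpha_1=\max_k\alpha_k,\ \sum_{k\ge 2}\alpha_k\ge \delta/2\}$ directly, using concavity of $-x\log x$ to push all minority mass into a single coordinate and down to the boundary, landing at $\mathbb{H}(1-\delta/2,\delta/2)$. You instead split on whether $\mu=p^-(v)/p(v)$ is at most $\tfrac12$. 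In the first case your coarsening (data-processing) step cleanly recovers the binary-case bound; in the second you exploit that the heaviest child already has mass $<\tfrac12$ to get $\mathbb{H}(X_v)>1$, which dominates $\frac{\delta}{2}\log\frac{2}{\delta}\le \frac{1}{e\ln 2}<1$. Your route is a bit more elementary and avoids the simplex optimisation; the paper's route has the advantage of identifying the exact extremiser and works uniformly without a case split. Either way the conclusion and the constants match.
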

\begin{proof}
  For a vertex $v$ with a test of index $j$, let $X_v$ denote the random variable supported on $[K]$ that is equal to $\tau_j(h)$ for an hypothesis $h$ chosen randomly from the elements of $L(v)$, where the probability of choosing $h$ is proportional to $p_h$.
  Let $\mathbb{H}(\cdot)$ denote the entropy of a random variable, and by abuse of notation, let $\mathbb{H}(v)\defeq \mathbb{H}(X_v)$.
  By abuse of notation, for nonnegative $\alpha_1,\dots,\alpha_K$ summing to 1, let $\mathbb{H}(\alpha_1,\dots,\alpha_K) = \sum_{i=1}^{K} -\alpha_i\log\alpha_i$ where $0\log 0$ is taken to be 0.
  The entropy of a random element $[n]$ chosen according to the prior distribution $\bfp=(p_1,\dots,p_n)$ is at most $\log n$.
  On the other hand, we can pick a random hypothesis in $[n]$ according to the distribution $\bfp$ by setting $v$ to the root of $\mathcal{T}_G$, sampling an answer $X_v\in[K]$ for the test at $v$, setting $v$ to the child of $v$ corresponding to the chosen answer $X_v$, and repeating, until we reach a leaf.
  In this process, at any vertex $v$, the probability of stepping to a child $u$ is exactly $\frac{p(u)}{p(v)}$.
  Hence, by a simple induction, the probability of reaching any vertex $v$ in the tree during this process is exactly $p(v)$. 
  The total entropy of this process is thus $\sum_{v\in\mathcal{T}_G}^{} p(v)\cdot \mathbb{H}(v)$, as $p(v)$ is the probability of reaching vertex $v$ and $\mathbb{H}(v)$ is the entropy of the random variable $X_v$ chosen at vertex $v$.

  Fix a balanced vertex $v$.
  We claim that $\mathbb{H}(v)\ge \frac{\delta}{2}\log \frac{2}{\delta}$.
  Let $\mathcal{R}\subset\mathbb{R}^K$ denote the region given by the constraints $0\le \alpha_k\le \alpha_1\le 1$ for all $k=2,\dots,K$, and $\alpha_1+\cdots+\alpha_K=1$, and $\alpha_2+\cdots+\alpha_K\ge \delta/2$. 
  We claim that the minimum of $\mathbb{H}(\alpha_1,\dots,\alpha_K)$ for $(\alpha_1,\dots,\alpha_K)\in\mathcal{R}$ is $\mathbb{H}(1-\delta/2,\delta/2)$.
  To see this, note first that this region $\mathcal{R}$ is closed and bounded, so the function $\mathbb{H}(\alpha_1,\dots,\alpha_k)$ obtains a minimum.
  Furthermore, note that, for $(\alpha_1,\dots,\alpha_K)\in\mathcal{R}$, by concavity of $-x\log x$, for any $k=2,\dots,K$ and any $\varepsilon\le \alpha_K$, setting $(\alpha_1',\dots,\alpha_K')=(\alpha_1+\varepsilon,\alpha_2,\dots,\alpha_{k-1},\alpha_k-\varepsilon,\alpha_{k+1},\dots,\alpha_K)$ gives $\mathbb{H}(\alpha_1,\dots,\alpha_K) > \mathbb{H}(\alpha_1',\dots,\alpha_K')$. 
  Similarly pushing $\alpha_{k}$ and $\alpha_{k'}$ apart by the same positive $\varepsilon$ also decreases the value of $H$.
  Hence, the maximum cannot be obtained when two of $\alpha_2,\dots,\alpha_K$ are positive, nor can it be obtained when $\alpha_2+\cdots+\alpha_K > \delta/2$.
  It follows that the only local minima in the region occur when some $\alpha_k$ is $\frac{\delta}{2}$ and $\alpha_1=1-\frac{\delta}{2}$.

  For $k=1,\dots,K$, let $\alpha_k$ denote the probability that $X_v=k$. 
  By Lemma~\ref{Z:lem:level-2}, when $v$ is balanced, $(\alpha_1,\dots,\alpha_K)$ must, up to a permutation in coordinates, be in region $\mathcal{R}$.
  Hence, by the above we have
  \begin{align}
    \mathbb{H}(v)
    \ = \ \mathbb{H}(X_v) 
    \ = \ \mathbb{H}(\alpha_1,\dots,\alpha_k)
    \ \ge \ H\left( \frac{\delta}{2}, 1-\frac{\delta}{2} \right) > \frac{\delta}{2}\log\frac{2}{\delta}. 
  \end{align}

  Putting the above two paragraphs together, we conclude
  \begin{align}
    \log n
    \ &\ge \  \sum_{v\in\mathcal{T}_G}^{} p(v)\cdot \mathbb{H}(v) 
    \ \ge \ \sum_{v\text{ balanced}}^{} p(v) \cdot \mathbb{H}(v)
    \ \ge \ \sum_{v\text{ balanced}}^{} p(v)\cdot \frac{\delta}{2}\log \frac{2}{\delta},
  \end{align}
  and rearranging gives the desired result.
\end{proof}

\subsection{Bounding the weight of imbalanced vertices}

We now bound the weight of imbalanced vertices using a connection to Weighted Min Sum Set Cover.
For each hypothesis $h$, let $u_h^\bot$ denote the leaf in the greedy tree $\mathcal{T}_G$ for which hypothesis $h$ is consistent.
Since $\mathcal{T}_G$ is complete, this leaf exists and is unique.

\subsubsection{Technical definition: heavy vertices}
\label{sssec:Z:heavy}
We need the following technical definition to make the connection between the greedy decision tree and a greedy WMSSC solution.
\begin{definition}
  For a vertex $v$ and an hypothesis $h\in[n]$, we say $v$ is \emph{$h$-heavy} if $h$ is consistent with $v$ and $p_h > p^-(v)$.
\end{definition}
\begin{lemma}
\label{Z:lem:heavy-1}
  Let $h\in[n]$ be a hypothesis.
  \begin{enumerate}
  \item [(i)] If $v$ is $h$-heavy, then every vertex on the path from $v$ to leaf $u_h^\bot$ is $h$-heavy.
  \item [(ii)] Additionally, if $v$ is $h$-heavy, then every edge on the path from $v$ to leaf $u_h^\bot$ is a majority edge.
  \item [(iii)] Lastly, for any vertex $v$, there exists at most one hypothesis $h$ such that $v$ is \emph{$h$-heavy}.
  \end{enumerate}
\end{lemma}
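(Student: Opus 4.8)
The plan is to prove parts (i), (ii), (iii) in that order, with (ii) and (iii) built on (i), and all three resting on the monotonicity of $p^-$ down the greedy tree (Lemma~\ref{Z:lem:alg-9}) together with the trivial inequality $p(v^{\ell,-})\le p^-(v)$ (which holds since $L(v^{\ell,-})\subseteq L^-(v)$). Two preliminary observations I would record first: if $v$ is $h$-heavy then $p_h>p^-(v)\ge 0$, so $p_h>0$; and the set of vertices of $\mathcal{T}_G$ consistent with a fixed hypothesis $h$ is exactly the root-to-leaf path ending at $u_h^\bot$ (each interior vertex has a unique child whose edge label matches $h$'s answer, and a complete tree forces the resulting leaf to be the singleton leaf for $h$), so $v$ lies on this path and ``the path from $v$ to $u_h^\bot$'' is a well-defined suffix $v=w_0,w_1,\dots,w_t=u_h^\bot$ of it along which every $w_i$ is consistent with $h$.

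For (i), I would induct along this path: the base case $w_0=v$ is $h$-heavy by hypothesis, and for the step, $w_{i+1}$ is a descendant of $v$, so Lemma~\ref{Z:lem:alg-9} gives $p^-(w_{i+1})\le p^-(v)<p_h$ (for the leaf $w_t$ one may instead use $p^-(w_t)=0$), and $w_{i+1}$ is consistent with $h$, hence $h$-heavy. For (ii), take any edge $w_i\to w_{i+1}$ with $w_i$ interior; by (i) it is $h$-heavy, so $p_h>p^-(w_i)$. If this edge were a minority edge, say $w_{i+1}=w_i^{\ell,-}$, then $L(w_{i+1})\subseteq L^-(w_i)$, so $p(w_{i+1})\le p^-(w_i)<p_h$, contradicting $h\in L(w_{i+1})$; hence every edge of the path is a majority edge.

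For (iii), I would argue by contradiction: suppose $v$ is both $h$-heavy and $h'$-heavy with $h\ne h'$, and show inductively that every vertex $w$ on the path from $v$ to $u_h^\bot$ is both $h$-heavy and $h'$-heavy. The base case is $v$; for the step, if an interior $w$ on this path is $h$-heavy and $h'$-heavy, then (ii) applied to $h$ forces the next vertex on the path to be the majority child $w^+$, while (ii) applied to $h'$ forces the child of $w$ containing $h'$ to also be $w^+$; thus $w^+$ is consistent with $h'$, and since $w^+$ is a descendant of $v$, Lemma~\ref{Z:lem:alg-9} gives $p^-(w^+)\le p^-(v)<p_{h'}$, so $w^+$ is $h'$-heavy (and $h$-heavy by (i)). Applying this at $w=u_h^\bot$ shows $h'$ is consistent with $u_h^\bot$, contradicting $L(u_h^\bot)=\{h\}$ and $h\ne h'$.

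The only step I expect to require genuine care is (iii): its content is exactly that two distinct heavy hypotheses at $v$ would be forced out of every common descendant along the same majority edge, so they could never be separated — impossible since $\mathcal{T}_G$ is complete. Parts (i) and (ii) are short inductions given Lemma~\ref{Z:lem:alg-9}, and the whole argument parallels, and reuses the majority-edge bookkeeping of, Lemma~\ref{Z:lem:level-1}.
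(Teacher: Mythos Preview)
Your proofs of (i) and (ii) are correct and essentially identical to the paper's: both use the monotonicity of $p^-$ from Lemma~\ref{Z:lem:alg-9} for (i), and for (ii) both argue that if the edge to the child containing $h$ were a minority edge then $p_h\le p^-(v)$, contradicting $h$-heaviness.

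Your proof of (iii) is correct but takes a genuinely different route. The paper argues directly at the single vertex $v$: it picks a test $j$ distinguishing $h$ from $h'$, then uses greediness to show that for any test $j$, a heavy hypothesis must give the majority answer $k^+_{j,L(v)}$ (otherwise using $j$ at $v$ would witness $p^-(v)\ge p_h$, contradicting the greedy choice of $j_v$); hence $\tau_j(h)=\tau_j(h')$, contradicting the choice of $j$. You instead bootstrap from (i) and (ii): walking down the majority path from $v$ to $u_h^\bot$, you show by induction that $h'$ is forced into the majority child at every step, so $h'\in L(u_h^\bot)=\{h\}$. Your argument is more modular---it reuses (i) and (ii) and the completeness of $\mathcal{T}_G$ rather than invoking greediness again---while the paper's is more local and avoids the walk to a leaf. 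Both are short and valid.
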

\begin{proof}
  Item (i) is true by Lemma~\ref{Z:lem:alg-9}, which says that $p^-(v)$ decreases as one descends the tree.

  For (ii), it suffices to prove, by the first part, that for every $h$-heavy vertex $v$, the first edge on the path from $v$ to $h$ is a majority edge.
  Suppose for contradiction that there exists $h$ and an $h$-heavy vertex $v$ with a minority child $u$ such that $h$ is a descendant of $u$.
  Then $p^-(v) \ge p(u) \ge p_h$, which contradicts the definition of $v$ being $h$ heavy.

  For (iii), suppose for contradiction there exists two hypotheses $h$ and $h'$ such that $v$ is both $h$-heavy and $h'$-heavy.
  Since our \textsc{Decision~Tree} instance is well defined, there exists some test $j$ that distinguishes $h$ and $h'$, 
  i.e.\ $\tau_j(h)\neq \tau_j(h')$.
  As $v$ is $h$-heavy, we have $p_h > p^-(v)$. 
  Hence the answer for hypothesis $h$ under test $j$ is $k^+_{j,L(v)}$, the answer to $\tau_j$ accounting for the maximum weight of hypotheses in $L(v)$: if not choosing test $j$ at vertex $v$ would make the weight of hypotheses consistent with a minority child $v$ to be $\ge p_h$. 
  This is a contradiction as $p_h>p^-(v)$ and the tree $\mathcal{T}_G$ is greedy.
  However, as $v$ is also $h'$-heavy, we have, by the same reasoning, that $\tau_j(h')=k^+_{j,L(v)}$.
  This is a contradiction, as test $j$ was chosen to distinguish $h$ and $h'$.
\end{proof}
We now define some notation for dealing with non-uniform weights $p_h$, which are well-defined by Lemma~\ref{Z:lem:heavy-1}.
\begin{definition}
  For hypothesis $h\in[n]$, let $u_h^\top$ be the maximal ancestor of $h$ that is $h$-heavy.
  For vertex $v$, if there exists an $h$ such that $v$ is $h$-heavy, let $q(v)=p_h$, and otherwise let $q(v)=0$.
\end{definition}

%
\subsubsection{Defining Weighted Min Sum Set Cover}
Recall $I_{j,S}^+ = \tau_j^{-1}(k_{j,S}^+)$ and $I_{j,S}^- = [n] \setminus I_{j,S}^+$.
\begin{definition} 
Let $\WMSSC\ind{P}$ denote the instance of \emph{weighted min sum set cover} that is \emph{induced} by the chain $P=(P_1,\dots,P_{|P|})$. 
This instance is given by
\begin{itemize}
\item universe $S\defeq L(v_1)$ with weights $(p_h)_{h\in S}$,
\item for $j=1,\dots,m$, sets $A_j \defeq I_{j,S}^-\cap S$, and
\item for each $h=1,\dots,n$, a set $A_{m+h} \defeq \{h\}\cap S$ consisting of one element.\footnote{Some of these sets are empty, but we include them for notational convenience.}
\end{itemize}
Note we have a total of $m + n$ sets.
A \emph{solution} to the $\WMSSC$ problem is a permutation $\sigma:[m+n]\to [m+n]$ corresponding to an ordering of the sets, and the \emph{cost} $\WMSSC\ind{P}(\sigma)$ of a solution is the weighted sum of the \emph{cover times} of the elements in the universe $S$.
Formally,
\begin{align}
  \label{Z:eq:wmssc}
  \WMSSC\ind{P}(\sigma)
  \ \defeq \ \sum_{h\in S}^{} p_h\cdot \min\{\ell: h\in A_{\sigma(\ell)}\}
  \ = \ \sum_{\ell=1}^{m+n} p\left( S \setminus (A_{\sigma(1)}\cup\cdots\cup A_{\sigma(\ell-1)}) \right). 
\end{align}
Note that this instance is well defined, as each hypothesis $h\in L(v)$ is in some set $A_j$.
We sometimes refer to a solution $\sigma$ by the sets $A_{\sigma(1)},\dots,A_{\sigma(m+n)}$.
\end{definition}
\begin{remark}
  \label{Z:rem:mssc}
Since the initial \textsc{Decision~Tree} instance is well defined, any two elements can be distinguished by one of the $m$ tests.
Hence, there is at most one element $h\in L(v)$ such that, for all $j=1,\dots,m$, we have $h\notin A_j$. In other words, all but one of the sets $A_{m+h}$ for $h\in[n]$ are unnecessary.
\end{remark}
\begin{definition}
\label{def:greedy2}
We say a solution $\sigma:[m+n]\to[m+n]$ to $\WMSSC\ind{P}$ is \emph{greedy at index $\ell$} if the set $A_{\sigma(\ell)}$ covers the maximum number of elements not covered by sets $A_{\sigma(1)},\dots, A_{\sigma(\ell-1)}$.
We say a solution $\sigma:[m+n]\to[m+n]$ is \emph{greedy} if it is greedy at index $\ell$ for all $\ell\in[m+n]$,
\end{definition}
Note that, in the case of ties, there may be multiple greedy solutions to $\WMSSC\ind{P}$. 
Note also that, for any partial assignment $\sigma(1),\dots,\sigma(\ell)$, one can always complete the solution greedily, so that $\sigma:[m+n]\to[m+n]$ is greedy at indices $\ell+1,\ell+2,\dots,m+n$.
Definition~\ref{def:greedy2} lets us leverage the following theorem, due to Golovin and Krause, which generalizes Theorem~\ref{thm:flt}.\footnote{In fact, \citep{golovin2011adaptive} considers an even more general problems called \emph{Adaptive Stochastic Min-Sum Cover}.}
\begin{theorem}[Theorem 5.10 of \citep{golovin2011adaptive}]
  \label{Z:thm:flt-2}
  The greedy algorithm gives a 4-approximation to the WMSSC problem.
  Formally, let $\sigma$ be any greedy solution to $\WMSSC\ind{P}$, and let $\sigma_{\OPT}$ denote an optimal solution.
  We have
  \begin{align}
    \WMSSC\ind{P}(\sigma) \le 4\cdot \WMSSC\ind{P}(\sigma_{\OPT}).
  \end{align}
\end{theorem}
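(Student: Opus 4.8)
The plan is to recognize that Theorem~\ref{Z:thm:flt-2} is exactly the weighted analogue of Theorem~\ref{thm:flt} and is the content of Theorem~5.10 of \citep{golovin2011adaptive}, so the simplest course is to cite it directly. If one instead wants a proof that only invokes the unweighted Feige--Lov\'asz--Tetali bound already available as Theorem~\ref{thm:flt}, I would reduce $\WMSSC\ind{P}$ to an ordinary \textsc{Min~Sum~Set~Cover} instance by splitting each element into copies proportional to its weight.

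First I would assume, without loss of generality, that all weights $p_h$ are rational (harmless in our setting, and removable at the end by a limiting argument over the finitely many orderings). Writing $p_h = a_h/N$ with a common denominator $N$, build an unweighted \textsc{Min~Sum~Set~Cover} instance $\widehat{\MSSC}$ whose universe replaces each $h\in S$ by $a_h$ identical copies, all of which lie in exactly the sets $A_j$ that originally contain $h$; the sets are otherwise unchanged. Two facts then do all the work. (1) Every ordering $\sigma$ covers each copy of $h$ at the same step at which it covers $h$, so the cost of $\sigma$ on $\widehat{\MSSC}$ equals $\WMSSC\ind{P}(\sigma)$ up to a multiplicative constant that does not depend on $\sigma$ (only on $N$ and on $p(S)$); in particular the greedy-to-optimal ratio is the same for the two instances, and an optimal ordering for one is optimal for the other. (2) A greedy step on $\widehat{\MSSC}$ chooses a set maximizing the number of uncovered copies, i.e.\ $N$ times the uncovered weight of $\WMSSC\ind{P}$; hence any greedy ordering of $\WMSSC\ind{P}$ is realizable as a greedy ordering of $\widehat{\MSSC}$ and conversely (all tie-breakings are permitted in both). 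Applying Theorem~\ref{thm:flt} to $\widehat{\MSSC}$ and translating back through~(1) gives $\WMSSC\ind{P}(\sigma)\le 4\,\WMSSC\ind{P}(\sigma_{\OPT})$ for any greedy $\sigma$. The singleton sets $A_{m+h}=\{h\}\cap S$ need no special treatment: they exist only to ensure every element is eventually covered (Remark~\ref{Z:rem:mssc}), and the reduction carries them along verbatim.

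The reduction itself is pure bookkeeping; the genuine content is packaged inside Theorem~\ref{thm:flt} (equivalently, the histogram / area argument of \citep{FeigeLT04}), and the only real obstacle is the passage from rational to arbitrary real weights: the set of orderings is finite and each ordering's cost is linear in the weights, so the optimal cost varies continuously, but whether a \emph{given} greedy ordering stays greedy under a perturbation of the weights can fail because of ties in the greedy selection. The clean way around this --- and the reason \citep{golovin2011adaptive} is the right thing to cite --- is that the Feige--Lov\'asz--Tetali argument is itself measure-theoretic (it bounds the area under the greedy residual profile by four times the area under the optimal residual profile) and therefore applies to real weights with no discretization at all. So in the write-up I would either cite Theorem~5.10 of \citep{golovin2011adaptive} outright, or remark that its proof is the weighted-verbatim version of the proof of Theorem~\ref{thm:flt}.
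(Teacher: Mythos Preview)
Your proposal is correct and matches the paper exactly: the paper does not prove this theorem at all but simply cites it as Theorem~5.10 of \citep{golovin2011adaptive}. Your additional reduction-to-unweighted-MSSC argument is a bonus not present in the paper; it is sound (with the caveat about ties you already flagged), but since the paper treats this as a black-box citation, the bare citation suffices.
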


\subsubsection{Bounding chain weight above by WMSSC cost}
\begin{lemma}
\label{Z:lem:mssc}
  Let $s$ be a positive integer and let $P=(P_1,\dots,P_{|P|})$ be a level-$s$ chain.
  Then there exists a greedy solution $\sigma:[m+n]\to[m+n]$ to $\WMSSC\ind{P}$, such that
  \begin{align}
    \WMSSC\ind{P}(\sigma) \ge \sum_{\ell=1}^{|P|} \left(p(P_\ell) - q(P_\ell)\right).
  \end{align}
\end{lemma}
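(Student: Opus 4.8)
The plan is to adapt the proof of Lemma~\ref{lem:mssc} by exhibiting an explicit greedy solution $\sigma$ to $\WMSSC\ind{P}$ that reads off the tests $j_1,\dots,j_{|P|}$ used along the chain $P=(P_1,\dots,P_{|P|})$ (with $j_\ell$ the test at $P_\ell$), but \emph{inserts}, at one carefully chosen moment, the singleton set $A_{m+h}$ for the unique ``heavy'' hypothesis $h$ of the chain. Once $\sigma$ is shown to be greedy, I bound $\WMSSC\ind{P}(\sigma)$ from below by \eqref{Z:eq:wmssc}, summing the weights of the uncovered hypotheses over the first $|P|+1$ steps; the subtracted terms $q(P_\ell)$ will account exactly for the one moment where we cover $h$ ``for free'' via a singleton.

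First I would reestablish two structural ingredients. The analogue of property~(i) from Lemma~\ref{lem:mssc}: for every $j\in[m]$ and $1\le\ell\le|P|$ the majority answer of $j$ with respect to $S=L(P_1)$ agrees with its majority answer with respect to $L(P_\ell)$, so $A_j=I_{j,S}^-\cap S=I_{j,L(P_\ell)}^-\cap S$. This follows as before: since $P_\ell$ is level-$s$ imbalanced and $\mathcal{T}_G$ is greedy (for $K$-ary tests, each $j_v$ maximizes $p^-(v)$), the minority part of $j$ with respect to $L(P_\ell)$ has weight at most $p^-(P_\ell)\le\delta^s$ while $p(L(P_\ell))=p(P_\ell)>2\delta^s$, and $p(A_j)\le p^-(P_1)\le\delta^s$; hence the majority answers cannot disagree. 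A consequence is $A_{j_\ell}\cap L(P_\ell)=L^-(P_\ell)$ and $A_{j_\ell}\cap L(P_\ell^+)=\emptyset$. The second ingredient concerns heavy vertices: since the edges of $P$ are majority edges (Lemma~\ref{Z:lem:level-1}(iii)) and an $h$-heavy hypothesis follows majority edges (Lemma~\ref{Z:lem:heavy-1}(ii)), once $P_{\ell_0}$ is $h$-heavy then $h$ stays consistent with every later $P_\ell$; and as $p^-(P_\ell)$ is nonincreasing along descendants (Lemma~\ref{Z:lem:alg-9}), $p_h>p^-(P_{\ell_0})\ge p^-(P_\ell)$ keeps $P_\ell$ $h$-heavy. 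Since each vertex has at most one heavy hypothesis (Lemma~\ref{Z:lem:heavy-1}(iii)), all heavy vertices of $P$ share the same $h$; so, writing $\ell_0$ for the least index with $P_{\ell_0}$ heavy ($\ell_0=|P|+1$ if none), we have $q(P_\ell)=p_h$ for $\ell_0\le\ell\le|P|$ and $q(P_\ell)=0$ for $\ell<\ell_0$. Note $h\in L(P_{\ell_0})\subseteq S$, so $A_{m+h}=\{h\}$ is a genuine set of the instance.

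Then I would set $\sigma$ to begin $j_1,\dots,j_{\ell_0-1},\,m+h,\,j_{\ell_0},j_{\ell_0+1},\dots,j_{|P|}$ (omitting the $m+h$ entry and the $j_\ell$ with $\ell\ge\ell_0$ entries when $\ell_0=|P|+1$) and complete it greedily, and verify greediness position by position. By property~(i), the hypotheses of $S$ uncovered after $A_{j_1},\dots,A_{j_r}$ are exactly $L(P_{r+1})$; so for $t\le\ell_0-1$, $A_{j_t}$ achieves the maximum covered weight in $L(P_t)$ among $A_1,\dots,A_m$ by greediness of $\mathcal{T}_G$, and among the singletons since $t<\ell_0$ forces $p_{h'}\le p^-(P_t)$ for all $h'\in L(P_t)$. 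At step $\ell_0$ the uncovered set is $L(P_{\ell_0})$, and $A_{m+h}$ strictly dominates: $p_h>p^-(P_{\ell_0})$ beats every $A_j$ ($j\in[m]$) by property~(i), and $p_h>p_{h'}$ for $h'\ne h$ by uniqueness of the heavy hypothesis. After deleting $\{h\}$, for $t=\ell_0+1+r$ with $r=0,\dots,|P|-\ell_0$ the uncovered set is $L(P_{\ell_0+r})\setminus\{h\}$, which $A_{j_{\ell_0+r}}$ covers in weight $p^-(P_{\ell_0+r})$ because $h\in L(P_{\ell_0+r}^+)$ lies outside $L^-(P_{\ell_0+r})$; this is again maximal, since every other $A_j$ covers at most $p^-(P_{\ell_0+r})$ and every remaining singleton covers weight $p_{h'}\le p^-(P_{\ell_0+r})$ ($P_{\ell_0+r}$ is $h$-heavy, hence not $h'$-heavy). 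Thus $\sigma$ is greedy, and removing $\{h\}$ at step $\ell_0$ is the only deviation needed.

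Finally, by \eqref{Z:eq:wmssc}, $\WMSSC\ind{P}(\sigma)$ is at least the sum of the uncovered weights over its first $|P|+1$ steps. When some chain vertex is heavy ($\ell_0\le|P|$) these weights are $p(P_1),\dots,p(P_{\ell_0})$ followed by $p(P_{\ell_0})-q(P_{\ell_0}),\dots,p(P_{|P|})-q(P_{|P|})$, summing to $\sum_{\ell=1}^{|P|}\big(p(P_\ell)-q(P_\ell)\big)+p(P_{\ell_0})\ge\sum_{\ell=1}^{|P|}\big(p(P_\ell)-q(P_\ell)\big)$, using $q(P_\ell)=0$ for $\ell<\ell_0$; when no chain vertex is heavy, $q$ vanishes on $P$ and $\WMSSC\ind{P}(\sigma)\ge\sum_{\ell=1}^{|P|}p(P_\ell)$ exactly as in Lemma~\ref{lem:mssc}. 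The main obstacle is the bookkeeping around the transition index $\ell_0$: one must verify that inserting $A_{m+h}$ there (rather than some $A_j$) is \emph{forced} by greediness, which rests on uniqueness of the heavy hypothesis at that vertex, and one must confirm that the single deleted element $h$ never lies in any minority set $L^-(P_{\ell_0+r})$, so that the subsequent steps still track the chain cleanly. Everything else is a routine transcription of the $K=2$, uniform argument, with ``$p^-$'' replacing ``$p(v^-)$'' and Theorem~\ref{Z:thm:flt-2} replacing Theorem~\ref{thm:flt}.
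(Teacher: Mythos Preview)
Your proposal is correct and follows essentially the same approach as the paper: construct $\sigma$ by reading the tests along the chain, insert the singleton $A_{m+h}$ at the first heavy index, verify greediness stepwise using property~(i) and Lemma~\ref{Z:lem:heavy-1}, and then lower-bound $\WMSSC\ind{P}(\sigma)$ via \eqref{Z:eq:wmssc}. The only cosmetic difference is that the paper indexes by the last \emph{non}-heavy vertex (its $\ell_0$ equals your $\ell_0-1$), but the construction and the final inequality are identical.
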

\begin{proof}
  Let $S$ be the universe of the instance $\WMSSC\ind{P}$, and let $A_1,\dots, A_{m+n}$ be the sets.
  For $\ell=1,\dots,|P|$, let $j_\ell$ be the test used at vertex $P_\ell$.
  Let $\ell_0\le|P|$ be the largest index such that $P_{\ell_0}$ is not $h$-heavy for any $h$, or 0 if no such index exists.
  If $\ell_0 < |P|$, let $h_0$ be the hypothesis such that $P_{\ell_0+1}$ is $h_0$-heavy.
  By Lemma~\ref{Z:lem:level-1}, all the edges along the path $P$ are majority edges.
  By Lemma~\ref{Z:lem:heavy-1}, for all $\ell_0 < \ell \le |P|$, vertex $P_\ell$ is $h_0$-heavy.
  Define a solution $\sigma:[m+n]\to[m+n]$ to $\WMSSC\ind{P}$ as follows.
  \begin{itemize}
  \item If $\ell_0 = |P|$, for $\ell=1,\dots,|P|$, let $\sigma(\ell)=j_\ell$ and complete the solution $\sigma$ greedily.
  \item Otherwise, for $1\le \ell\le \ell_0$, let $\sigma(\ell)=j_{\ell}$, let $\sigma(\ell_0+1)=m+h_0$, let $\sigma(\ell+1)=j_\ell$ for $\ell_0 < \ell \le |P|$, and complete the solution $\sigma$ greedily.
  \end{itemize}
  We claim $\sigma$ is a greedy solution.
  To prove this, we show the following. 
  \begin{itemize}
  \item[(i)] For all $j\in[m]$ and $\ell=1,\dots,|P|$, the majority answer for test $j$ with respect to vertex $P_1$ is the same as the majority answer for test $j$ with respect to vertex $P_\ell$.
  Equivalently, for all $j\in[m]$ and $\ell=1,\dots,|P|$, we have $A_j=I_{j,S}^-\cap S = I_{j,L(P_\ell)}^-\cap S$.
  As an immediate consequence, we know $A_{j_\ell}$ contains all the hypotheses in $L^-(P_\ell)$ and none of the hypotheses in $L(P_\ell^+)$.
  \item[(ii)] The set of hypotheses of $S$ not covered by $A_{j_1}, \dots,A_{j_{\ell-1}}$ is exactly $L(P_\ell)$.
  \item[(iii)] For each $1\le \ell\le |P|$, among sets $A_1,\dots,A_m$, set $A_{j_\ell}$ covers the maximum weight of hypotheses in $L(P_\ell)$, i.e.\ we have
  \begin{align}
    p(A_{j_{\ell}}\cap L(P_\ell))=\max_{j\in[m]}p(A_j\cap L(P_\ell)).
  \end{align}
  \item[(iv)] For each $1\le \ell \le\ell_0$, among sets $A_1,\dots,A_{m+n}$, set $A_{j_\ell}$ covers the maximum weight of hypotheses in $L(P_{\ell})$.
  \item[(v)] If $\ell_0<|P|$, then, among sets $A_1,\dots,A_{m+n}$, set $A_{m+h_0}=\{h_0\}$ covers the maximum weight of hypotheses in $L(P_{\ell_0+1})$.
  \item[(vi)] If $\ell_0<|P|$, then, for $\ell_0 < \ell \le|P|$, among sets $A_1,\dots,A_{m+n}$, set $A_{j_\ell}$ covers the maximum weight of hypotheses in $L(P_{\ell})\setminus \{h_0\}$.
  \end{itemize}
  These points suffices for proving that $\sigma$ is greedy.
  If $\ell_0=|P|$, items (ii) and (iv) tell us that $\sigma$ is greedy at indices $1,\dots,|P|$, so by construction $\sigma$ is greedy.
  If $\ell_0<|P|$, then (iv), (v), and (vi) tell us that $\sigma$ is greedy at indices $1,\dots,|P|+1$, so $\sigma$ is greedy.

  To show (i), fix $j\in[m]$ and $\ell\in\{1,\dots,|P|\}$.
  As $P_\ell$ is level-$s$ imbalanced, we also have $p(P_\ell)>2\delta^s$ and $p^-(P_\ell) \le \delta^s$ and $p(P_\ell^+) > \delta^s$, so $k_{j,L(P_\ell)}^+$ is the unique answer in $[K]$ accounting for more than half of the weight of hypotheses in $L(P_\ell)$.
  On the other hand, as vertex $P_1$ is level-$s$ imbalanced, we have $p(I_{j,S}^-\cap L(P_\ell)) \le p(I_{j,S}^-\cap S) \le p^-(P_1) \le \delta^s$, so the majority answer $k_{j,S}^+$ for test $j$ with respect to hypothesis set $S$ is exactly the answer described in the previous sentence. 
  Hence $k_{j,S}^+ = k_{j,L(P_\ell)}^+$.

  Item (ii) follows because $L(P_\ell)$ is the set of hypotheses consistent with $P_\ell$, which was obtained by following the majority edges from $P_1$.
  This means $L(P_\ell)$ contains all the hypotheses of $S$ not a consistent with a minority child of one of $P_1,\dots,P_{\ell-1}$.
  By the last paragraph, this is exactly $S\setminus (A_{j_1}\cup\cdots\cup A_{j_{\ell-1}})$.

  For (iii), at vertex $P_\ell$ in the greedy decision tree, the test index $j=j_\ell$ maximizes the weight $p(I_{j,L(P_\ell)}^-\cap L(P_\ell))$.
  By (i), this index $j$ equivalently maximizes $p(A_j\cap L(P_\ell))$, as desired.

  For (iv), at step $\ell$ for $\ell\le \ell_0$, by (i), the set $A_{j_\ell}$ covers a $p^-(P_\ell)$ weight of hypotheses in $L(P_\ell)$, which is more than $p_{h_0}$ by definition of $\ell_0$. 
  By (iii), $A_{j_\ell}$ covers at least as much weight of hypotheses in $L(P_\ell)$ as any of $A_1,\dots,A_m$, and, by Remark~\ref{Z:rem:mssc} and the previous sentence, at least as much as any of $A_1,\dots,A_{m+n}$.
  
  For (v), by maximality of $\ell_0$, we have $p_{h_0} > p^-(P_{\ell_0+1})$.
  Hence, the singleton $\{h_0\}$ covers more weight of hypotheses in $L(P_{\ell_0+1})$ than any of $A_1,\dots,A_m$, and thus, by Remark~\ref{Z:rem:mssc}, than any of $A_1,\dots,A_{m+n}$.
  
  For (vi), if there exists $h_0$ such that some $P_\ell$ is $h_0$-heavy, then, for any $j=1,\dots,m$, we have $h_0\notin A_j$ and $A_j\cap L(P_\ell) = A_j\cap (L(P_\ell)\setminus\{h_0\})$.
  Hence the $A_j$ among $A_1,\dots,A_m$ that covers the most weight of $L(P_\ell)\setminus \{h_0\}$ is $A_{j_\ell}$ by (iv).
  By Remark~\ref{Z:rem:mssc}, the only set among $A_{m+1},\dots,A_{m+n}$ that could cover a larger weight of $L(P_\ell)\setminus\{h_0\}$ is $\{h_0\}$, but it in fact covers 0 weight of $L(P_\ell)\setminus\{h_0\}$, so among sets $A_1,\dots,A_{m+n}$, set $A_{j_\ell}$ covers the maximum weight of hypotheses in $L(P_\ell)\setminus\{h_0\}$.
  This completes the proof that $\sigma$ is greedy.

  We now return to the proof of Lemma~\ref{Z:lem:mssc}.
  Take the greedy solution $\sigma$ given above.
  If $\ell_0=|P|$, then the set of vertices of $S$ not covered by $A_{\sigma(1)},\dots,A_{\sigma(\ell-1)}$ is exactly $L(P_\ell)$, which has weight $p(P_{\ell})$.
  Hence, by \eqref{eq:wmssc},
  \begin{align}
    \WMSSC\ind{P}(\sigma)
    \ &\ge \ \sum_{\ell=1}^{|P|} p(P_\ell)
    \ = \ \sum_{\ell=1}^{|P|} (p(P_\ell) - q(v)). 
  \end{align}
  Now suppose $\ell_0<|P|$.
  Recall that the definition of $\ell_0$ implies $P_{\ell_0+1},P_{\ell+2},\dots,P_{|P|}$ are all $h_0$-heavy.
  Hence, we have $q(P_{\ell}) = 0$ for $1\le \ell\le \ell_0$, and $q(P_{\ell})=p_{h_0}$ for $\ell_0 < \ell \le |P|$.
  Thus,
  \begin{align}
    \WMSSC\ind{P}(\sigma)
    \ &\ge \ \sum_{\ell=1}^{|P|+1}  p\left( S\setminus(A_{\sigma(1)}\cup\cdots\cup A_{\sigma(\ell-1)}) \right) \nonumber\\ 
    \ &= \ \sum_{\ell=1}^{\ell_0+1} p(P_\ell) + \sum_{\ell=\ell_0+1}^{|P|} p(L(P_\ell)\setminus\{h_0\}) \nonumber\\
    \ &> \ \sum_{\ell=1}^{\ell_0} p(P_\ell) + \sum_{\ell=\ell_0+1}^{|P|} p(L(P_\ell)\setminus\{h_0\}) \nonumber\\
    \ &= \ \sum_{\ell=1}^{|P|} \left(p(P_\ell) - q(P_\ell)\right)
  \end{align}
  as desired.
  In the last equality, we used that $h_0\in L(P_\ell)$ for $\ell=1,\dots,|P|$.
\end{proof}
Let $\sigma_G\ind{P}:[m+n]\to[m+n]$ be the greedy solution to $\WMSSC\ind{P}$ given by Lemma~\ref{Z:lem:mssc}, and let $\sigma_{\OPT}\ind{P}$ be an optimal solution to $\WMSSC\ind{P}$.

\subsubsection{Bounding WMSSC cost above by $C_{\OPT}$}
\begin{lemma}
  \label{Z:lem:sum-2}
  Let $s$ be a positive integer.
  We have
  \begin{align}
    C_{\OPT}\ge -1 + \sum_{P\in \mathcal{P}_s}^{} \WMSSC\ind{P}(\sigma_{\OPT}\ind{P}).
  \end{align}
\end{lemma}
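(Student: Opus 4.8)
The plan is to mimic the proof of Lemma~\ref{lem:sum-2} from the uniform case, the only new subtlety being that the ``lower order'' correction, which was $\tfrac{1}{n}$ per chain in the uniform setting, now becomes a weight $p_{h\sar}$ per chain, and these must be controlled in aggregate across all level-$s$ chains. Fix a level-$s$ chain $P=(P_1,\dots,P_{|P|})\in\mathcal{P}_s$, let $S=S\ind{P}=L(P_1)$ be the universe of $\WMSSC\ind{P}$, and let $A_1,\dots,A_{m+n}$ be its sets, so $A_j = I_{j,S}^-\cap S$ for $j\le m$. I would first build a root-to-leaf path $w_1,\dots,w_{\ell\sar}$ in $\mathcal{T}_{\OPT}$: set $w_1$ to be the root, and if the test at $w_\ell$ has index $j_\ell$, let $w_{\ell+1}$ be the child of $w_\ell$ along the edge labeled $k^+_{j_\ell,S}$, the majority answer of $\tau_{j_\ell}$ with respect to the universe $S$; stop at the leaf $w_{\ell\sar}$, which is identified with some hypothesis $h\sar=h\sar\ind{P}$. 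Since we always follow the majority edge with respect to $S$, for $\ell=1,\dots,\ell\sar-1$ we have $L(w_{\ell+1})\cap S = (L(w_\ell)\cap S)\setminus I_{j_\ell,S}^- = (L(w_\ell)\cap S)\setminus A_{j_\ell}$, so the sets $A_{j_1},\dots,A_{j_{\ell\sar}}, A_{m+h\sar}$ cover $S$ and define a valid solution $\sigma_{\text{TREE}}$ to $\WMSSC\ind{P}$ (with $\sigma_{\text{TREE}}(\ell)=j_\ell$ for $\ell\le\ell\sar$, then $m+h\sar$, then arbitrary).

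The key quantitative step is to compare $\sum_{h\in S}p_h\, d_{\mathcal{T}_{\OPT}}(h)$ to $\WMSSC\ind{P}(\sigma_{\text{TREE}})$. For each $h\in S$, the depth $d_{\mathcal{T}_{\OPT}}(h)$ is at least the number of vertices among $w_1,\dots,w_{\ell\sar}$ lying on the root-to-leaf path of $h$, and this count is exactly $\min\{\ell: h\in A_{j_\ell}\}$ for $h\neq h\sar$ (the first index at which $h$ takes the minority answer relative to $S$), while for $h=h\sar$ it is $\ell\sar-1 = \min\{\ell:h\sar\in A_{\sigma_{\text{TREE}}(\ell)}\}-1$. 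Weighting by $p_h$ and summing, I would obtain
\begin{align}
  \sum_{h\in S} p_h\, d_{\mathcal{T}_{\OPT}}(h)
  \ \ge \ \WMSSC\ind{P}(\sigma_{\text{TREE}}) - p_{h\sar}
  \ \ge \ \WMSSC\ind{P}(\sigma_{\OPT}\ind{P}) - p_{h\sar},
\end{align}
where the second inequality is by optimality of $\sigma_{\OPT}\ind{P}$ for the instance $\WMSSC\ind{P}$ and the identity \eqref{Z:eq:wmssc} relating cover times to residual weights.

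Finally I would sum over $P\in\mathcal{P}_s$. Every leaf of $\mathcal{T}_{\OPT}$, equivalently every hypothesis $h\in[n]$, has at most one maximal level-$s$ imbalanced ancestor in $\mathcal{T}_G$, hence lies in at most one universe $S\ind{P}$; so the sets $\{S\ind{P}\}_{P\in\mathcal{P}_s}$ are pairwise disjoint, and in particular $\sum_{P\in\mathcal{P}_s}p_{h\sar\ind{P}} \le \sum_{h\in[n]}p_h = 1$. Therefore
\begin{align}
  C_{\OPT}
  \ = \ \sum_{h\in[n]} p_h\, d_{\mathcal{T}_{\OPT}}(h)
  \ \ge \ \sum_{P\in\mathcal{P}_s}\sum_{h\in S\ind{P}} p_h\, d_{\mathcal{T}_{\OPT}}(h)
  \ \ge \ \sum_{P\in\mathcal{P}_s}\left(\WMSSC\ind{P}(\sigma_{\OPT}\ind{P}) - p_{h\sar\ind{P}}\right)
  \ \ge \ -1 + \sum_{P\in\mathcal{P}_s}\WMSSC\ind{P}(\sigma_{\OPT}\ind{P}),
\end{align}
which is the claim. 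The routine parts (the path construction, the majority-answer bookkeeping) are essentially verbatim from Lemma~\ref{lem:sum-2}; the one place that needs genuine care is the depth-versus-cover-time comparison with non-uniform weights and the bookkeeping of the $h\sar$ terms, and confirming that dropping the uniform-case ``$+\tfrac1n$'' correction (which came from nontriviality) still leaves a correct inequality — it does, since that term was only ever used to sharpen the bound, not to make it valid. I expect this last point, together with making the ``$d_{\mathcal{T}_{\OPT}}(h)\ge\min\{\ell:\dots\}$'' inequality fully rigorous in the weighted setting, to be the main (though modest) obstacle.
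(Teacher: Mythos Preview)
Your proposal is essentially the paper's proof: construct the majority-answer root-to-leaf path in $\mathcal{T}_{\OPT}$ with respect to $S\ind{P}$, read off a feasible WMSSC solution $\sigma_{\text{TREE}}$, compare depths to cover times hypothesis by hypothesis, and sum over chains using disjointness of the universes $S\ind{P}$.

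There is one small gap. You deduce $\sum_{P\in\mathcal{P}_s} p_{h\sar\ind{P}} \le 1$ from disjointness of the $S\ind{P}$'s, but this inference requires $h\sar\ind{P}\in S\ind{P}$, which need not hold: the majority-answer path in $\mathcal{T}_{\OPT}$ taken with respect to $S\ind{P}$ can terminate at a hypothesis outside $S\ind{P}$, and then the $h\sar\ind{P}$'s can repeat across chains, making the sum potentially larger than $1$. The fix is immediate: when $h\sar\ind{P}\notin S\ind{P}$ no correction is needed at all (your displayed inequality is then slack by exactly $p_{h\sar}$), so only chains with $h\sar\ind{P}\in S\ind{P}$ contribute a nonzero term, and those $h\sar\ind{P}$'s are distinct by disjointness. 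The paper sidesteps this by bounding the correction by $\max_{h\in S\ind{P}} p_h$ rather than $p_{h\sar\ind{P}}$; this bound is automatically valid in both cases and sums to at most $\sum_P p(S\ind{P})\le 1$.
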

\begin{proof}
  Let $S\ind{P}$ be the universe of the instance $\WMSSC\ind{P}$, and let $A_1,\dots, A_{m+n}$ be the sets.
  Construct a path $w_1,\dots,w_{\ell\sar}$ in $\mathcal{T}_{\OPT}$ such that $w_1$ is the root, $w_{\ell\sar}$ is a leaf for hypothesis $h\sar$, and, for $\ell=1,\dots,\ell\sar-1$, if the test at vertex $w_\ell$ has index $j_\ell$, the edge to its child $w_{\ell+1}$ corresponds to the answer $k_{j_\ell,S\ind{P}}^+$, the majority answer of test $j_\ell$ with respect to set $S\ind{P}$.
  Suppose the test at vertex $w_\ell$ in the optimal tree has index $j_\ell$.
  Since we follow the edges with label $k^+_{j,S\ind{P}}$, this corresponds to following the path for an hypothesis contained in $I_{j_\ell,S\ind{P}}^+$. 
  In other words, we have, for $\ell=1,\dots,\ell\sar-1$,
  \begin{align}
    L(w_{\ell+1}) 
    \ = \ L(w_\ell)\setminus I_{j_{\ell},S\ind{P}}^-
    \ = \ L(w_\ell)\setminus A_{j_{\ell}}. 
  \end{align}
  Thus the sequence $A_{j_1},\dots,A_{j_{\ell\sar}}, \{h\sar\}$ covers $[n]$, and hence $S\ind{P}$, and thus gives a valid solution $\sigma_{TREE}$ to the instance $\WMSSC\ind{P}$, where $\sigma_{TREE}(1) = j_1, \sigma_{TREE}(2)=j_2,\dots,\sigma_{TREE}(\ell\sar)=j_{\ell\sar}, \sigma_{TREE}(\ell\sar+1)=m+h\sar$, and $\sigma_{TREE}$ on larger indices is arbitrarily chosen.
  Note that the depth of a hypothesis $h$ in the tree is at least the number of vertices of $w_1,\dots,w_{\ell\sar}$ that are on the root-to-leaf path of $h$, and this number is $\min\{\ell:h\in A_{j_\ell}\}$, except for $h^*$, in which case it is 1 smaller.
  Then we have
  \begin{align}
    \sum_{h\in S\ind{P}}^{} p_h\cdot d_{\OPT}(h) 
    \ \ge  \ -p_{h\sar} +\sum_{h\in S\ind{P}}^{} p_h\cdot \min\{\ell:h\in A_{j_\ell}\} 
    \ &\ge \ \WMSSC\ind{P}(\sigma_{TREE}) - p_{h\sar}  \nonumber\\
    \ &\ge \ \WMSSC\ind{P}(\sigma\ind{P}_{\OPT}) - \max_{h\in S\ind{P}}p_h.
  \end{align}
  Summing over $P\in\mathcal{P}_s$ gives
  \begin{align}
    C_{\OPT}
    \ \ge \ \sum_{P\in\mathcal{P}_s}^{} \left( \WMSSC\ind{P}(\sigma\ind{P}_{\OPT}) - \max_{h\in S\ind{P}}p_h \right) \ge -1 + \sum_{P\in\mathcal{P}_s}^{} \left( \WMSSC\ind{P}(\sigma\ind{P}_{\OPT}) \right). 
  \end{align}
\end{proof}

\begin{lemma}
\label{Z:lem:sum-3}
  We have 
  \begin{align}
    \sum_{v\text{ imbalanced}}^{} p(v)
    \ \le \ 4s_{\max}(C_{\OPT}+1) + \sum_{v\in\mathcal{T}_G^{\mathrm{o}}}^{} q(v).
  \end{align}
\end{lemma}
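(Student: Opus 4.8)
The plan is to mirror the proof of Lemma~\ref{lem:sum-3} from the uniform case, now carrying along the correction term coming from heavy vertices. First I would observe that every imbalanced vertex is level-$s$ imbalanced for at least one positive integer $s\le s_{\max}$, hence lies in at least one level-$s$ chain $P\in\mathcal{P}_s$. Since $q(v)\le p(v)$ for every vertex (if $v$ is $h$-heavy then $h\in L(v)$, so $p_h\le p(v)$), each summand $p(v)-q(v)$ is nonnegative, and therefore
\begin{align}
  \sum_{v\text{ imbalanced}}^{}\left(p(v)-q(v)\right)
  \ \le \ \sum_{s=1}^{s_{\max}}\sum_{P\in\mathcal{P}_s}^{}\sum_{v\in P}^{}\left(p(v)-q(v)\right),
\end{align}
where the inequality, rather than equality, accounts for vertices that are level-$s$ imbalanced for more than one $s$ and hence appear in more than one chain.

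Next I would bound the inner double sum chain by chain using the machinery already in place. By Lemma~\ref{Z:lem:mssc}, for each level-$s$ chain $P$ there is a greedy solution $\sigma_G\ind{P}$ to $\WMSSC\ind{P}$ with $\sum_{v\in P}(p(v)-q(v))\le \WMSSC\ind{P}(\sigma_G\ind{P})$; by Theorem~\ref{Z:thm:flt-2} this is at most $4\,\WMSSC\ind{P}(\sigma_{\OPT}\ind{P})$; and by Lemma~\ref{Z:lem:sum-2}, for each fixed $s$ we have $\sum_{P\in\mathcal{P}_s}\WMSSC\ind{P}(\sigma_{\OPT}\ind{P})\le C_{\OPT}+1$. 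Chaining these three bounds and summing over the $s_{\max}$ levels gives
\begin{align}
  \sum_{v\text{ imbalanced}}^{}\left(p(v)-q(v)\right)
  \ \le \ \sum_{s=1}^{s_{\max}} 4\left(C_{\OPT}+1\right)
  \ = \ 4 s_{\max}\left(C_{\OPT}+1\right).
\end{align}

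Finally I would add the heavy-vertex weights back in. Since imbalanced vertices are interior vertices and $q(v)\ge 0$,
\begin{align}
  \sum_{v\text{ imbalanced}}^{} p(v)
  \ = \ \sum_{v\text{ imbalanced}}^{}\left(p(v)-q(v)\right) + \sum_{v\text{ imbalanced}}^{} q(v)
  \ \le \ 4 s_{\max}\left(C_{\OPT}+1\right) + \sum_{v\in\mathcal{T}_G^{\mathrm{o}}}^{} q(v),
\end{align}
which is the claimed bound. I do not expect a genuine obstacle here: the substantive work has been front-loaded into Lemmas~\ref{Z:lem:mssc} and \ref{Z:lem:sum-2}, and the only points needing a moment's care are verifying $p(v)-q(v)\ge 0$ (so that overcounting overlapping chains is harmless) and tracking that the additive $+1$ per level in Lemma~\ref{Z:lem:sum-2} is what produces the $4s_{\max}(C_{\OPT}+1)$ overhead in place of the cleaner $4 s_{\max} C_{\OPT}$ of the uniform case.
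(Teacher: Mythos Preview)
Your proposal is correct and matches the paper's proof essentially step for step: both use $p(v)-q(v)\ge 0$ to overcount via the chain decomposition, then invoke Lemma~\ref{Z:lem:mssc}, Theorem~\ref{Z:thm:flt-2}, and Lemma~\ref{Z:lem:sum-2} in turn, and finally absorb the $q$-terms into $\sum_{v\in\mathcal{T}_G^{\mathrm{o}}}q(v)$. The only cosmetic difference is that the paper starts from $\sum_{v\text{ imbalanced}}p(v)-\sum_{v\in\mathcal{T}_G^{\mathrm{o}}}q(v)$ and bounds it above, whereas you bound $\sum_{v\text{ imbalanced}}(p(v)-q(v))$ first and add the $q$-sum back at the end.
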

\begin{proof}
  Each imbalanced vertex $v$ is level-$s$ imbalanced for some positive integer $s$, so it is part of some level-$s$ chain, $P$.
  Note that $p(v) - q(v) \ge 0$ for all vertices $v$.
  Hence,
  \begin{align}
    \label{Z:eq:sum-5}
    \sum_{v\text{ imbalanced}}^{} p(v) - \sum_{v\in\mathcal{T}_G^{\mathrm{o}}}^{} q(v)
    \ &\le \  \sum_{v\text{ imbalanced}}^{} (p(v) - q(v)) \nonumber\\
    \ &\le \ \sum_{s=1}^{s_{\max}} \sum_{\substack{P\in\mathcal{P}_s \\ P=(P_1,\dots,P_{|P|})}}^{} \sum_{\ell=1}^{|P|} \left(p(P_\ell) - q(P_\ell)\right) \nonumber\\
    \ &\le \ \sum_{s=1}^{s_{\max}} \sum_{\substack{P\in\mathcal{P}_s \\ P=(P_1,\dots,P_{|P|})}}^{} \WMSSC\ind{P}( \sigma_G\ind{P} ) \nonumber\\
    \ &\le \ \sum_{s=1}^{s_{\max}} \sum_{P\in\mathcal{P}_s}^{} 4\WMSSC\ind{P}(\sigma\ind{P}_{\OPT}) \nonumber\\ 
    \ &\le \  4s_{\max}(C_{\OPT}+1) 
  \end{align}
  The first inequality is because $q(v)\ge 0$ for all $v$.
  The second inequality is because $p(v)-q(v) \ge 0$ for all $v$, and that every imbalanced $v$ is in some chain.
  The third inequality is by Lemma~\ref{Z:lem:mssc}.
  The fourth inequality is by Theorem~\ref{Z:thm:flt-2}.
  The fifth inequality is by Lemma~\ref{Z:lem:sum-2}.
  Rearranging gives the desired result.
\end{proof}

\subsection{Bounding the cost contribution of heavy vertices}

We bound the cost contribution of the heavy vertices via a connection to SET-COVER.
A theorem due to Lovasz \citep{lovasz1975ratio}, Johnson \citep{johnson1974approximation}, Chvatal \citep{chvatal1979greedy}, and Stein \citep{Stein74} states that, in any instance of SET-COVER where no set covers more than $h$ elements, the greedy algorithm gives a $1+\ln h$ approximation.
We show a generalization of this result, based on the following the definition.
\begin{definition}
  Let $\Phi$ be an instance of SET-COVER with a universe $S$ and sets $A_1,\dots,A_m$.
  Let $\bfp=(p_h)_{h\in S}$ be a sequence of weights assigned to the elements of $S$.
  A \emph{$\bfp$-weighted greedy algorithm} for SET-COVER is repeatedly chooses the set $A_j$ that minimizes $\sum_{h\in A_j\cap S'}^{} p_h$, where $S'$ is the set of uncovered elements.
\end{definition}
\begin{theorem}
  Let $\Phi$ be an instance of SET-COVER with a universe $S$ and sets $A_1,\dots,A_m$.
  Let $\bfp=(p_h)_{h\in S}$ be a sequence of weights assigned to the elements of $S$.
  Then, if the optimal solution to $\Phi$ has uses at most $\ell_{\OPT}$ sets, then the $\bfp$-weighted greedy algorithm uses at most $(1 + \ln\frac{p^*}{\min_h p_h})\ell_{\OPT}$ sets, where $p^*\defeq\max_{j\in[m]}\sum_{h\in A_j}^{} p_h$.
  \label{Z:thm:sc}
\end{theorem}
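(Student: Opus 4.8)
\textbf{Proof proposal for Theorem~\ref{Z:thm:sc}.}
The plan is to adapt the classical dual-fitting (``pricing'') proof of the $(1+\ln d)$ guarantee of greedy for unweighted \textsc{Set~Cover}, replacing set cardinalities by $\bfp$-weights everywhere. Throughout, the $\bfp$-weighted greedy algorithm is the one that at each step selects the set covering the largest total $\bfp$-weight of the currently uncovered elements. Assume without loss of generality that every $p_h>0$: elements of weight $0$ can be discarded, as they affect neither the algorithm nor either side of the inequality. Whenever greedy selects a set and thereby newly covers a collection $N\subseteq S$ of total weight $w\defeq\sum_{h\in N}p_h>0$, I would assign to each $h\in N$ the \emph{price} $\pi(h)\defeq p_h/w$. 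Each element of $S$ receives a price exactly once (at the moment it is first covered), and each selected set contributes exactly $1$ to the total price, so the number of sets used by greedy equals $\sum_{h\in S}\pi(h)$.

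First I would bound, for a single set $O$ in a fixed optimal cover, the quantity $\sum_{h\in O}\pi(h)$. Order the elements of $O$ as $h_1,\dots,h_q$ by the step at which greedy first covers them, breaking ties arbitrarily, and let $W_i\defeq\sum_{k\ge i}p_{h_k}$, so $W_1=\sum_{h\in O}p_h$ and $W_q=p_{h_q}$ (and $W_{q+1}\defeq 0$). Consider the greedy step that first covers $h_i$: just before it, all of $h_i,\dots,h_q$ are still uncovered, so the uncovered part of $O$ has weight at least $W_i$; since $O$ is available to greedy at that step, the set greedy actually picks covers at least weight $W_i$ of uncovered elements, i.e.\ the denominator $w$ in $\pi(h_i)$ is at least $W_i$. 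Hence $\pi(h_i)\le p_{h_i}/W_i=(W_i-W_{i+1})/W_i$. Summing, using that $1/x$ is decreasing to bound $(W_i-W_{i+1})/W_i\le\int_{W_{i+1}}^{W_i}\frac{dx}{x}$ for $i<q$ while the $i=q$ term equals $1$ exactly, the sum telescopes:
\begin{align}
  \sum_{h\in O}\pi(h)\ \le\ 1+\sum_{i=1}^{q-1}\ln\frac{W_i}{W_{i+1}}\ =\ 1+\ln\frac{W_1}{W_q}\ \le\ 1+\ln\frac{p^*}{\min_h p_h},
\end{align}
where the last inequality uses $W_1=\sum_{h\in O}p_h\le p^*$ (as $O$ is one of the sets $A_j$) and $W_q=p_{h_q}\ge\min_h p_h$.

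Finally, let $O_1,\dots,O_{\ell_{\OPT}}$ be an optimal cover. Since these sets cover $S$, every $h\in S$ lies in some $O_t$, so $\sum_{h\in S}\pi(h)\le\sum_{t=1}^{\ell_{\OPT}}\sum_{h\in O_t}\pi(h)\le\ell_{\OPT}\bigl(1+\ln\frac{p^*}{\min_h p_h}\bigr)$, which is exactly the claimed bound on the number of sets used by the $\bfp$-weighted greedy algorithm. Specializing to $p_h\equiv 1$ recovers the classical theorem of \citep{lovasz1975ratio,johnson1974approximation,chvatal1979greedy,Stein74}, since then $p^*$ is the maximum set size and $\min_h p_h=1$.

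The argument is essentially routine once the right weighting is chosen; the one place needing care is the suffix-weight bookkeeping --- in particular verifying that at the greedy step first covering $h_i$ the elements $h_i,\dots,h_q$ are genuinely all still uncovered (which is why we order by first-covered step and why ties may be broken arbitrarily), and isolating the $i=q$ term so the remaining telescoping sum is clean. I do not expect a substantial obstacle beyond this: notably, the pigeonhole step of the textbook proof (``some set of the optimal cover catches a $1/\ell_{\OPT}$ fraction of what is left'') is not needed here, having been absorbed into the per-optimal-set accounting in the display above.
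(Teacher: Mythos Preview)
Your proof is correct and follows essentially the same Chv\'atal-style pricing/dual-fitting argument as the paper: your price $\pi(h)=p_h/w$ is exactly the paper's $p_h y_h$, and your per-optimal-set suffix-weight telescoping is the element-by-element version of the paper's step-by-step bound $\sum_r(\rho_j^{(r)}-\rho_j^{(r+1)})/\rho_j^{(r)}\le 1+\ln(\rho_j^{(1)}/\rho_j^{(\ell_j)})$. The only cosmetic difference is that the paper groups elements of a set by the greedy step at which they are covered, whereas you enumerate them individually; both yield the same telescoping integral and final bound.
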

While this argument may be known, we are not aware of a known reference, so we provide a proof for completeness in Appendix~\ref{app:sc}.

We now bound $\sum_{v\in\mathcal{T}_G^{\mathrm{o}}}^{} q(v)$.
\begin{lemma}
  \label{Z:lem:alg-29}
  For all $h \in[n]$, we have
  \begin{align}
    d_{\mathcal{T}_G}(u_h^\bot) - d_{\mathcal{T}_G}(u_h^\top) \le \left(1 + \ln\left( \frac{p_{\max}}{p_{\min}} \right)\right)\cdot d_{\OPT}(h).
  \end{align}
\end{lemma}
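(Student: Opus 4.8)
The plan is to view the portion of the greedy tree lying between $u_h^\top$ and $u_h^\bot$ as a run of a $\bfp$-weighted greedy \textsc{Set~Cover} algorithm, and then invoke Theorem~\ref{Z:thm:sc}. Fix $h$; if no interior vertex is $h$-heavy then $d_{\mathcal{T}_G}(u_h^\top)=d_{\mathcal{T}_G}(u_h^\bot)$ and there is nothing to prove, so assume $L(u_h^\top)\neq\{h\}$. Let $v_1=u_h^\top,v_2,\dots,v_t$ be the interior $h$-heavy vertices; by Lemma~\ref{Z:lem:heavy-1}(i)--(ii) they form a path of majority edges with $v_{i+1}=v_i^+$ and $v_t^+=u_h^\bot$, so $t=d_{\mathcal{T}_G}(u_h^\bot)-d_{\mathcal{T}_G}(u_h^\top)$ and it suffices to bound $t$. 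Write $j_i$ for the test used at $v_i$.

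I would introduce the \textsc{Set~Cover} instance $\Phi$ with universe $S\defeq L(u_h^\top)\setminus\{h\}$ weighted by $(p_g)_{g\in S}$ and sets $A_j\defeq\{g\in S:\tau_j(g)\neq\tau_j(h)\}$ for $j\in[m]$; since the \textsc{Decision~Tree} instance is well defined, every $g\in S$ lies in some $A_j$, so $\Phi$ is coverable. The crucial observation is that $h$-heaviness of $v_i$ forces $h$ to give the \emph{majority} answer $k^+_{j,L(v_i)}$ of \emph{every} test $j$ with respect to $L(v_i)$: if instead $\tau_j(h)$ were a minority answer, then $h\in I^-_{j,L(v_i)}\cap L(v_i)$ would give $p(I^-_{j,L(v_i)}\cap L(v_i))\ge p_h>p^-(v_i)$, contradicting that the greedy tree chose at $v_i$ a test maximizing this weight. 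Given this, $A_j\cap(L(v_i)\setminus\{h\})=I^-_{j,L(v_i)}\cap L(v_i)$, whose weight is exactly the quantity the greedy tree maximizes when it picks $j_i$; and the elements of $S$ uncovered by $A_{j_1},\dots,A_{j_{i-1}}$ are exactly $L(v_i)\setminus\{h\}$, since a hypothesis branches off the path $v_1,\dots,v_t,u_h^\bot$ precisely when the corresponding test separates it from $h$. Hence $A_{j_1},\dots,A_{j_t}$ is a legal run of the $\bfp$-weighted greedy \textsc{Set~Cover} algorithm on $\Phi$ using $t$ sets: it covers $S$ by the same branching argument, and it cannot stop earlier since each interior $v_i$ has $|L(v_i)|\ge 2$, so the uncovered set $L(v_i)\setminus\{h\}$ is nonempty until step $t$.

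To finish I would bound the two quantities fed into Theorem~\ref{Z:thm:sc}. The tests along the root-to-$h$ path of $\mathcal{T}_{\OPT}$ form a cover of $S$ of size $d_{\OPT}(h)$ (each $g\in S$ is separated from $h$ somewhere along that path), so $\ell_{\OPT}\le d_{\OPT}(h)$. And by the majority observation applied at $v_1=u_h^\top$, for every $j$ we have $\sum_{g\in A_j}p_g=p(I^-_{j,L(u_h^\top)}\cap L(u_h^\top))\le p^-(u_h^\top)<p_h\le p_{\max}$, so $p^\ast\defeq\max_{j\in[m]}\sum_{g\in A_j}p_g<p_{\max}$, while $\min_{g\in S}p_g\ge p_{\min}$. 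Plugging into Theorem~\ref{Z:thm:sc} yields $t\le\left(1+\ln\!\left(p^\ast/\min_{g\in S}p_g\right)\right)\ell_{\OPT}<\left(1+\ln\!\left(p_{\max}/p_{\min}\right)\right)d_{\OPT}(h)$, which is the claim.

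The main obstacle is the middle step: establishing that the greedy \emph{decision tree} really is executing the weighted greedy \textsc{Set~Cover} procedure on $\Phi$. This rests entirely on the majority-answer observation --- only because $v_i$ is $h$-heavy does ``minimize the largest part of the partition of $L(v_i)$'' coincide with ``cover the most remaining weight of $S$'' --- and one must check that deleting the single hypothesis $h$ from the universe does not disturb this identification, which is exactly where the fact that $h$ always lies on the majority side gets used. The generalization to $K>2$ is painless, since $I^-_{j,S}$ already bundles together all minority parts of the test.
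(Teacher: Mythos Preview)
Your proposal is correct and follows essentially the same approach as the paper: both reduce the path from $u_h^\top$ to $u_h^\bot$ to a $\bfp$-weighted greedy run of \textsc{Set~Cover} on the universe $L(u_h^\top)\setminus\{h\}$, use $h$-heaviness to show that $h$ always sits on the majority side (so that the greedy decision-tree step coincides with the greedy \textsc{Set~Cover} step), bound $\ell_{\OPT}$ by $d_{\OPT}(h)$ via the root-to-$h$ path in $\mathcal{T}_{\OPT}$, and invoke Theorem~\ref{Z:thm:sc}. Your definition $A_j=\{g\in S:\tau_j(g)\neq\tau_j(h)\}$ is slightly more direct than the paper's $A_j=I_{j,S}^-\cap S$, but the two coincide (precisely because of the majority observation), and the structure of the argument is identical.
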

\begin{proof}
  Fix $h$.
  Let $v_1=u_h^\top,\dots,v_{\ell_G}=u_h^\bot$ denote the path from vertex $u_h^\top$ to leaf $u_h^\bot$ in the greedy tree. 
  Let $\SC\ind{h}$ denote the \emph{SET-COVER} instance with the following parameters:
  \begin{itemize}
  \item Universe $S=L(u_h^\top)\setminus\{h\}$
  \item For $j=1,\dots,m$, sets $A_j=I_{j,S}^-\cap S$.
  \end{itemize}
  Let $\ell_{\OPT}$ denote the cost of the optimal solution to $\SC\ind{h}$, and for $\ell=1,\dots,\ell_G$, let $j_\ell$ denote the test at vertex $v_\ell$.

  We make the following observations.
  First, by definition of $u_h^\top$, for all $j\in[m]$, the set $I_{j,S}^-$ does not contain hypothesis $h$: vertex $u_h^\top$ satisfies $p^-(u_h^\top) < p_h$ so the answer $k^+_{j,S}$ of a test $j$ that accounts for the largest weight of hypotheses in $S$ always contains the hypothesis $h$, as any other answer, by the definition of the greedy algorithm, has weight at most $p^-(u_h^\top)$.
  
  Second, the sets $A_{j_1}, A_{j_2},\dots, A_{j_{\ell_G}}$ form a $\bfp$-weighted greedy solution for this SET-COVER instance $\SC\ind{h}$, where $\bfp=(p_h)_{h\in S}$.
  For $\ell \le \ell_G$, the first $\ell-1$ sets in the above sequence cover all of $S$ except the elements of $L(v_{\ell})$.
  In the greedy decision tree, the index $j=j_\ell$ that maximizes $p(I_{j,L(v_{\ell})}^-\cap L(v_{\ell}))$.
  Note that, for any $j$, the set $I_{j,L(v_{\ell})}^-$ are the hypotheses for the $K-1$ answers of $\tau_j$ that exclude hypothesis $h$.
  However, the set $I_{j,S}^-$ also contains exactly the hypotheses for the $K-1$ answers of $\tau_j$ that exclude hypothesis $h$.
  Hence $I_{j,L(v_{\ell})}^-\cap S = I_{j,S}^-\cap S = A_j$, so $j=j_\ell$ maximizes $p(A_j\cap L(v_\ell))$.
  Thus, sets $A_{j_1}, A_{j_2},\dots, A_{j_{\ell_G}}$ form a $\bfp$-weighted greedy solution for this SET-COVER instance $\SC\ind{h}$.
  Hence, we may apply Theorem~\ref{Z:thm:sc}.
  Since $\max_{j\in[m]}p(A_j)=p^-(u_h^\top) < p_h\le p_{\max}$ and $\min_{h\in S}p_h \ge p_{\min}$, we have, by Theorem~\ref{Z:thm:sc},
  \begin{align}
    d_{\mathcal{T}_G}(u_h^\bot) - d_{\mathcal{T}_G}(u_h^\top)
    \ = \  \ell_G-1
    \ \le \ \left(1+\ln\frac{p_{\max}}{p_{\min}}\right)\ell_{\OPT}.
  \end{align}
  
  It remains to prove $\ell_{\OPT}\le d_{\OPT}(h)$.
  Let $w_1,\dots,w_{\ell_T}=h$ be the root-to-leaf path for hypothesis $h$ in the optimal tree $\mathcal{T}_{\OPT}$.
  For $\ell=1,\dots,\ell_T-1$, set $j_\ell'$ to be the test chosen at vertex $w_\ell$ in the optimal tree.
  By the first point, the set $I_{j,S}^-$ contains the hypotheses for the $K-1$ answers of test $\tau_j$ that exclude hypothesis $h$.
  As the optimal tree is a decision tree, any test is distinguished from $h$ by one of tests $j_1,\dots,j_{\ell_T-1}$, so $A_{j_1'}, A_{j_2'},\dots,A_{j_{\ell_T-1}'}$ cover all of $[n]\setminus\{h\}$, and thus covers $S$. 
  Furthermore, $\ell_T-1=d_{\OPT}(h)$, so there is a solution to $\SC\ind{h}$ of size $d_{\OPT}(h)$.
  Hence, $d_{\OPT}(h)\ge \ell_{\OPT}$, as desired.
\end{proof}
As a corollary, we have
\begin{lemma}
\label{Z:lem:sum-4}
  \begin{align}
    \sum_{v\in\mathcal{T}_G^\mathrm{o}}^{} q(v) 
    \ &\le \ \left( 1 + \ln \frac{p_{\max}}{p_{\min}} \right)C_{\OPT}. 
  \end{align}
\end{lemma}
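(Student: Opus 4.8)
The plan is to reduce Lemma~\ref{Z:lem:sum-4} to a termwise application of Lemma~\ref{Z:lem:alg-29}, by rewriting $\sum_{v\in\mathcal{T}_G^\mathrm{o}} q(v)$ as a sum over hypotheses via a double-counting argument, in the same spirit as the proof of Lemma~\ref{lem:alg-10}.

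First I would unpack the definition of $q$ using the structural facts in Lemma~\ref{Z:lem:heavy-1}. Fix a hypothesis $h$. By parts (i) and (ii) of that lemma, the $h$-heavy vertices are exactly the vertices lying on the downward path from $u_h^\top$ (the maximal $h$-heavy ancestor) to the leaf $u_h^\bot$; note that $u_h^\bot$ is always $h$-heavy, since a leaf has no minority children and hence $p^-(u_h^\bot)=0<p_h$, so $u_h^\top$ is well-defined. Of the vertices on this path, all but the leaf $u_h^\bot$ itself are interior, so the number of interior $h$-heavy vertices is exactly $d_{\mathcal{T}_G}(u_h^\bot)-d_{\mathcal{T}_G}(u_h^\top)$ (this holds also when $u_h^\top=u_h^\bot$, where both sides are $0$). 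By part (iii), each interior vertex $v$ is $h$-heavy for at most one $h$, and $q(v)=p_h$ precisely for that $h$ (and $q(v)=0$ if there is none). Switching the order of summation then gives
\begin{align}
  \sum_{v\in\mathcal{T}_G^\mathrm{o}}^{} q(v)
  \ = \ \sum_{h\in[n]}^{} p_h\cdot\bigl|\{v\in\mathcal{T}_G^\mathrm{o} : v \text{ is } h\text{-heavy}\}\bigr|
  \ = \ \sum_{h\in[n]}^{} p_h\left( d_{\mathcal{T}_G}(u_h^\bot) - d_{\mathcal{T}_G}(u_h^\top) \right).
\end{align}

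Applying Lemma~\ref{Z:lem:alg-29} to each summand and then pulling the constant out yields
\begin{align}
  \sum_{v\in\mathcal{T}_G^\mathrm{o}}^{} q(v)
  \ \le \ \sum_{h\in[n]}^{} p_h\left(1+\ln\frac{p_{\max}}{p_{\min}}\right) d_{\OPT}(h)
  \ = \ \left(1+\ln\frac{p_{\max}}{p_{\min}}\right)\sum_{h\in[n]}^{} p_h\, d_{\OPT}(h)
  \ = \ \left(1+\ln\frac{p_{\max}}{p_{\min}}\right) C_{\OPT},
\end{align}
which is the claim. There is essentially no remaining obstacle once Lemmas~\ref{Z:lem:heavy-1} and \ref{Z:lem:alg-29} are available; the only point requiring care is the bookkeeping of which vertices on the heavy path from $u_h^\top$ to $u_h^\bot$ are interior versus the terminal leaf, so that the count matches the depth difference in Lemma~\ref{Z:lem:alg-29} exactly rather than off by one. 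The substantive content of this subsection lies in Lemma~\ref{Z:lem:alg-29}, which routes through the $\bfp$-weighted greedy \textsc{Set~Cover} guarantee of Theorem~\ref{Z:thm:sc}; this final lemma is just the aggregation step.
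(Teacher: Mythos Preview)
Your proposal is correct and follows essentially the same route as the paper: rewrite $\sum_{v}q(v)$ as $\sum_h p_h\bigl(d_{\mathcal{T}_G}(u_h^\bot)-d_{\mathcal{T}_G}(u_h^\top)\bigr)$ via Lemma~\ref{Z:lem:heavy-1}, then apply Lemma~\ref{Z:lem:alg-29} termwise. If anything, your bookkeeping (noting $u_h^\bot$ is always $h$-heavy and carefully excluding the leaf from the count) is slightly more explicit than the paper's.
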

\begin{proof}
  We have
  \begin{align}
    \sum_{v\in\mathcal{T}_G^\mathrm{o}}^{} q(v)
    \ &= \  \sum_{h\in[n]}^{} p_h\cdot (d_{\mathcal{T}_G}(u_h^\bot) - d_{\mathcal{T}_G}(u_h^\top)) \nonumber\\
    \ &\le \ \sum_{h\in[n]}^{} \left( 1 + \ln\frac{p_{\max}}{p_{\min}} \right)\cdot d_{\OPT}(h) \nonumber\\
    \ &= \ \left( 1 + \ln\frac{p_{\max}}{p_{\min}} \right)\cdot C_{\OPT}.
  \end{align}
  The first equality uses Lemma~\ref{Z:lem:heavy-1}, which tells us that every vertex between $u_h^\top$ and $h$ is $h$-heavy, and no such vertex is $h'$-heavy for $h'\neq h$.
  Furthermore, these are the only $h$-heavy vertices as $u_h^\top$ is the maximal $h$-heavy vertex.
  The inequality is by Lemma~\ref{Z:lem:alg-29}.
\end{proof}

\subsection{Finishing the proof}
\begin{proof}[Proof of Theorem~\ref{thm:alg-main}]
  We have
  \begin{align}
    C_G
    \ &= \  \sum_{v\in\mathcal{T}_G^\mathrm{o}}^{} p(v) \nonumber\\
    \ &= \  \sum_{v\text{ balanced}}^{} p(v) + \sum_{v\text{ imbalanced}}^{} p(v) \nonumber\\
    \ &\le \  \frac{\log n}{\log \frac{2}{\delta}}\cdot \frac{2}{\delta} + \sum_{v\text{ imbalanced}}^{} p(v)  
        & (\text{Lemma~\ref{Z:lem:level-3}})\nonumber\\
    \ &\le \  \frac{\log n}{\log 2C_{\OPT}}\cdot 2C_{\OPT} + 4s_{\max}(C_{\OPT}+1) + \sum_{v\in\mathcal{T}_G^\mathrm{o}}^{} q(v)
        & (\text{Lemma~\ref{Z:lem:sum-3}})\nonumber\\
    \ &\le \  \frac{\log n}{\log 2C_{\OPT}}\cdot 2C_{\OPT} + 4\cdot \frac{\log\frac{1}{p_{\min}}}{\log \frac{1}{\delta}} \cdot (C_{\OPT}+1) + \left( 1 + \ln\frac{p_{\max}}{p_{\min}} \right)\cdot C_{\OPT}
        & (\text{Lemma~\ref{Z:lem:sum-4}})\nonumber\\
    \ &< \   \left(\frac{12\log\frac{1}{p_{\min}}}{\log C_{\OPT}} + \ln\frac{p_{\max}}{p_{\min}}\right)\cdot C_{\OPT},
  \end{align}
  as desired.
  In the last inequality, we used that (i) $\log 2C_{\OPT} > \log C_{\OPT}$, (ii) $\log n \le \log \frac{1}{p_{\min}}$, (iii) $1+C_{\OPT} < 2C_{\OPT}$, and $1 \le \frac{\log \frac{1}{p_{\min}}}{\log C_{\OPT}}$.
\end{proof}

\section{Proof of Theorem~\ref{Z:thm:sc}}
\label{app:sc}
We closely follow the argument of Chvatal~\citep{chvatal1979greedy}.
Suppose the $\vec p$-weighted greedy algorithm uses $\ell_G$ sets.
By re-indexing the sets, we may assume without loss of generality that the $\bfp$-weighted greedy algorithm chooses sets $A_1,\dots,A_{\ell_G}$ in that order.
For $r=1,\dots,\ell_G$ and $j=1,\dots,m$, let $A_j\ind{r}$ denote the elements of set $A_j$ not covered by the first $r-1$ chosen sets.
For $r=1,\dots,\ell_G$ and $j=1,\dots,m$, let $\rho_j\ind{r} = \sum_{h\in A_j\ind{r}}^{} p_h$ denote the sum of the weights of the elements of $A_j\ind{r}$.
For $h=1,\dots,n$, let $y_h=\frac{1}{\rho_r\ind{r}}$, where $r$ is the index at which element $h$ is first covered.
Equivalently, $r$ is the unique index such that $h\in A_r\ind{r}$.
In this way, we have
\begin{align}
  \sum_{h\in [n]}^{} p_h y_h = \sum_{r=1}^{\ell_G} \rho_r\ind{r}\cdot\frac{1}{\rho_r\ind{r}} = \ell_G,
\label{eq:sc-1}
\end{align}
and, for all $j=1,\dots,m$,
\begin{align}
  \sum_{h\in A_j}^{} p_hy_h 
  \ &= \  \sum_{r=1}^{\ell_j} \frac{\rho_j\ind{r}-\rho_j\ind{r+1}}{\rho_r\ind{r}},
\label{eq:sc-2}
\end{align}
where $\ell_j$ is the largest index such that $\rho_j\ind{\ell_j} > 0$.
Hence using that $\rho_j\ind{1},\rho_j\ind{2},\dots,\rho_j\ind{\ell_j}$ is a non-increasing sequence, we have
\begin{align}
  \sum_{h\in A_j}^{} p_hy_h 
  \ &= \  \sum_{r=1}^{\ell_j} \frac{\rho_j\ind{r}-\rho_j\ind{r+1}}{\rho_j\ind{r}}
  \ = \ 1 + \sum_{r=1}^{\ell_j-1}  \frac{\rho_j\ind{r}-\rho_j\ind{r+1}}{\rho_j\ind{r}} \nonumber\\
  \ &\le \ 1 + \sum_{r=1}^{\ell_j-1} \int_{\rho_j\ind{r+1}}^{\rho_j\ind{r}} \frac{dx}{x} 
  \ = \ 1+\int_{\rho_j\ind{\ell_j}}^{\rho_j\ind{1}} \frac{dx}{x} 
  \ = \ 1+\ln\frac{\rho_j\ind{1}}{\rho_j\ind{\ell_j}}
  \ \le \ 1+\ln\frac{p_{\max}}{p_{\min}}. 
\label{eq:sc-3}
\end{align}
Let $J\subset [m]$ denote the indices of the optimal cover for $\Phi$.
Applying \eqref{eq:sc-2} and summing \eqref{eq:sc-3} for $j\in J$, we have
\begin{align}
  \ell_G 
  \ &= \ \sum_{h\in [n]}^{} p_hy_h
  \ \le \  \sum_{j\in J}^{} \sum_{h\in A_j}^{} p_hy_h
  \ \le \ |J|\cdot \left( 1+\ln\frac{p_{\max}}{p_{\min}} \right)
  \ = \ \ell_{\OPT}\cdot  \left( 1+\ln\frac{p_{\max}}{p_{\min}} \right)
\label{}
\end{align}
as desired.

\section{Tightness of Theorem~\ref{thm:alg-main}}
\label{app:tight}
In this section, we prove Propositions~\ref{thm:avg-lower} and \ref{thm:lower_bound_poly_ratio}, which show two ways that Theorem~\ref{thm:alg-main} is tight.
\subsection{Proof of Proposition~\ref{thm:avg-lower}}
\label{app:B}
\begin{proof}
We prove Proposition~\ref{thm:avg-lower} with the stronger guarantee that $C_\OPT \le 4C^*$ when $C^*$ is an integer.
Then, taking $(C^*)' = \ceil{C^*} \le 2C^*$ gives the desired result.

When $n$ is sufficiently large, for $C^*\ge n^{1/4}$, the statement is trivial, as any instance for which $C_{OPT}\in[\frac{1}{4} C^*,4C^*]$, satisfies the requirements.
Number the hypotheses $1,\dots,n^*$.
Let $n^*\in(n-C^*,n]$ be such that $C^*|n^*$. 
Place the hypotheses $1,\dots,n^*$ in a grid with $C^*$ columns and $r\defeq \frac{n^*}{C^*}$ rows, numbered $1,\dots,r$, so that each grid square contains at most 1 hypothesis.
Recursively identify a family of \emph{good} sets of rows as follows: $[r]$ is good, and for every good set $A'$ containing $r'>1$ rows, create a partition $A'=A'_1\cup A'_2$ such that $|A'_1| = 1+\floor{r'/C^*}$ and $|A'_2| = r' - |A_1'|$, and identify $A'_1$ and $A'_2$ as good.
Define three types of tests: 
\begin{enumerate}
\item For each of $h=n^*+1,\dots,n$, a test that outputs 1 if the hypothesis is $h$ and 2 otherwise.
\item For each column $c$, a tests that outputs 1 if the hypothesis is in column $c$ and 2 otherwise.
\item For each column $c$ and $t\in[0,\log r]$, a test that outputs 1 if the hypothesis is in column $c$ and the $t$th digit of the row number's binary expansion is a one, and 2 otherwise.
\item for each good set $A'$, a test that outputs 1 if the row of $h$ is in $A'$ and 2 otherwise.
\end{enumerate}

Let $h$ be the unknown hypothesis.
There is a strategy that first checks whether $h$ is one of $n^*+1,\dots,n$, for a total of at most $n-n^* < C^*$ queries.
If not, the strategy identifies the column containing $h$ in at most $C^*$ queries using tests of type 2 and then identifies the corresponding row using tests of type 3, which takes $\log r < \log n\le 2C^*$ queries. 
We thus need at most $4C^*$ queries for each hypothesis, so here
\begin{align}
  C_{OPT}\le 4C^*.
\end{align}

The greedy strategy uses tests of type 4, trying to first find the row containing $h$.
This is because, for tests 1, 2, and 3, one answer accounts for at least $1-\frac{1}{C^*}$ fraction of the remaining hypotheses (it accounts at least $1-\frac{1}{C^*}$ fraction of the columns in the grid), and if the candidate set of rows containing $h$ is a good set $A'$, under the membership test for the good set $A_1'$, all answers account for less than $1-\frac{1}{C^*}$ fraction of the remaining hypotheses.

When $h$ is chosen uniformly at random from $1,\dots,n^*$, the row containing $h$ is a uniformly random row.
While there are at least $C^*$ candidate rows containing $h$, each test gives at most $\mathbb{H}(2/C^*)$ bits of information about the row containing $h$ in expectation (over the randomness of $h$).
Since the row containing $h$ has at least $\log r  = \log (n^*/C^*) > \frac{1}{2}\log n$ bits of information, and the row has at most $\log C^*$ bits of information when there are at most $C^*$ candidate rows remaining, we have, by an analysis similar to Lemma~\ref{lem:type-3}, the greedy algorithm takes at least $\frac{\frac{1}{2}\log n - \log C^*}{\mathbb{H}(2/C^*)}$ queries to identify the row containing $h$ on average.
Hence,
\begin{align}
  C_G 
  \ge \frac{n^*}{n}\cdot \frac{\frac{1}{2}\log n - \log C^*}{\mathbb{H}(2/C^*)} 
  \ge \frac{1}{2}\cdot \frac{\frac{1}{4}\log n}{\mathbb{H}(2/C^*)} 
  \ge \frac{C^*\log n}{16\log C^*} 
\end{align}
as desired.
In the last inequality, we used that $\mathbb{H}(2x) \le -4x\log(x)$ for $x\in[0,1]$.
\end{proof}

\subsection{Proof of Proposition~\ref{thm:lower_bound_poly_ratio}}
\label{app:E}

In this appendix, we use that it is NP-hard to approximate \textsc{Set~Cover} to within a factor of $\frac{1}{2} \log n_0$ \cite{moshkovitz2012projection}.
\begin{theorem}
Let $r\in(0,1)$. 
Then, for $n$ sufficiently large, approximating $\DT(2 n^r \log n)$ to a factor of $\frac{1}{12} \log(n^r)$ is NP-hard.
\end{theorem}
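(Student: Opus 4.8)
The plan is to reduce from \textsc{Set~Cover}, using the quoted fact that it is NP-hard to approximate \textsc{Set~Cover} on a universe $[n_0]$ within a factor $\tfrac12\log n_0$: a polynomial reduction from 3SAT produces a set system together with an integer $k$ such that in the YES case some $k$ sets cover $[n_0]$, while in the NO case no collection of fewer than $\tfrac12 k\log n_0$ sets covers $[n_0]$; we may take the system to contain all singletons (this cannot push the hardness below $\log n_0$) and, by choosing the reduction's parameters, take $k=\Theta(n_0/\log n_0)$, so in particular $\tfrac12 k\log n_0\le n_0$. The gadget is the classical one where a special ``heavy'' hypothesis can be isolated only after a set cover has been used up. The new twist needed for a \emph{bounded} weight ratio is to instantiate \emph{many disjoint copies} of this gadget: only then is the number of hypotheses $n$ large enough that the permitted ratio $2n^r\log n$ is big enough to make the heavy hypotheses genuinely heavy.

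Concretely I would take $N$ copies $U_1,\dots,U_N$ of the universe, $U_i=\{u_{i,1},\dots,u_{i,n_0}\}$, plus one special hypothesis $z_i$ per copy, so $n=N(n_0+1)$. For each copy $i$ and each set $S$ there is a test $\tau_{i,S}$ that, within copy $i$, answers $1$ on $S$ and $2$ elsewhere (so $\tau_{i,S}(z_i)=2$), and answers $2$ on every hypothesis of every other copy; for $t=1,\dots,\lceil\log N\rceil$ there is a ``copy-index'' test $\chi_t$ whose answer on any hypothesis of copy $i$ (including $z_i$) is the $t$-th bit of $i$. Each $z_i$ gets weight $w_h$, each $u_{i,l}$ weight $w_l$, with $w_h/w_l=2n^r\log n$ (so the instance is a legal $\DT(2n^r\log n)$ instance). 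Two structural facts drive the analysis: (a) since $\tau_{i,S}(z_i)=2$ for every $S$, the only tests along $z_i$'s root-to-leaf path that can peel off a member of $U_i$ are copy-$i$ set tests (which include singletons), so the number of such tests on that path is a set cover of $U_i$; and (b) the $\chi_t$ agree on $z_i$ and on all of $U_i$, so they never peel off elements of $U_i$ from $z_i$'s path --- this is exactly what prevents the copy-index tests from short-circuiting the set-cover lower bound.

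Write $W$ for the total normalized weight of the $z_i$'s and $\varepsilon=1-W$; with the ratio set to $2n^r\log n$ one computes $\varepsilon=O(n_0/(n^r\log n))$. In the NO case, fact (a) forces $d_{\mathcal T}(z_i)>\tfrac12 k\log n_0$ in \emph{every} decision tree $\mathcal T$, hence $C_{\OPT}\ge W\cdot\tfrac12 k\log n_0$. In the YES case I exhibit a cheap tree: the $\chi_t$'s form a balanced tree on the $N$ copies (depth $\lceil\log N\rceil$), then inside copy $i$ a size-$k$ cover isolates $z_i$ down an all-``$2$'' path while singleton tests sort each residual minority block; thus $z_i$ has depth $\lceil\log N\rceil+k$ and each $u_{i,l}$ has depth $\le\lceil\log N\rceil+k+n_0$, giving $C_{\OPT}\le W(\lceil\log N\rceil+k)+\varepsilon(\lceil\log N\rceil+k+n_0)$ by the double-counting identity $C(\mathcal T)=\sum_v p(v)$. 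Choosing $N$ (equivalently $n$) so that $n^r$ is a large enough multiple of $n_0^2/(k\log n_0)$ makes $\varepsilon n_0=O(k)$ while keeping $\log n=\Theta(r^{-1}\log n_0)$; then the YES cost is $\Theta(k)$, the NO cost is $\Omega(k\log n_0)$, and the ratio is $\Omega(\log n_0)=\Omega(r\log n)$. Pushing the absolute constants through --- $\tfrac12$ from the \textsc{Set~Cover} gap, losses from $W<1$, from the $\lceil\log N\rceil$ and $\varepsilon n_0$ terms in the YES cost, and from the factor $\le2$ relating $\log n$ to $r^{-1}\log n_0$ --- gives a gap of at least $\tfrac r{12}\log n$, which is the claim.

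The main obstacle is precisely this three-way balancing act. The cap $w_h/w_l\le 2n^r\log n$ limits how heavy the $z_i$'s can be, which forces the universe elements' total weight $\varepsilon$ to be only polynomially small; the universe elements may nevertheless need depth $\sim n_0$ in the YES tree, and one genuinely cannot add ``balanced-search'' tests for them without simultaneously trivializing the set-cover lower bound on the $z_i$'s, so their YES-cost contribution $\sim\varepsilon n_0$ must be absorbed into the $\sim W k\log n_0$ NO-cost; and enlarging $n$ to buy a larger permitted ratio also inflates the target $\tfrac r{12}\log n$. Making all three constraints close at once is what pins down $k=\Theta(n_0/\log n_0)$, $N=\tilde\Theta(n_0^{1/r-1})$, and forces the careful constant bookkeeping above. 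The remaining ingredients --- legality of the constructed instance, correctness of the $\chi$-test routing in the YES tree, and the cost accounting --- are routine.
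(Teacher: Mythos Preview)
Your overall strategy --- reduce from \textsc{Set~Cover}, plant in each of many copies a heavy hypothesis whose isolation forces a cover, and use the copies so that the permitted weight ratio $2n^r\log n$ is large enough --- is the paper's. But your argument rests on an unjustified assumption: that the hard \textsc{Set~Cover} instances can be arranged to have YES-cover size $k=\Theta(n_0/\log n_0)$. In the standard hardness reductions $k$ is far below this (it is polynomial in the SAT size while $n_0$ is blown up by repetition), and there is no evident amplification that raises $k/n_0$ without destroying the $\Omega(\log n_0)$ gap. And you genuinely need a large $k$: with $n\approx n_0^{1/r}$ your YES cost $\lceil\log N\rceil+k+\varepsilon n_0$ already contains $\lceil\log N\rceil\approx(\tfrac1r-1)\log n_0$, so if $k=o(\tfrac1r\log n_0)$ the copy-routing term dominates and the YES/NO ratio is $O(1)$; making $\varepsilon n_0=O(k)$ while also keeping $n\le n_0^{1/r}$ forces the still stronger bound $k=\Omega(n_0/\log n_0)$ that you assert.

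The paper sidesteps this with two devices you are missing. First, it replicates the set-cover gadget $q\approx\tfrac1r\log n_0$ times \emph{within} each copy (an extra index $i_2\in[q]$), so that isolating the heavy hypothesis $\perp$ requires $q$ independent covers and hence depth at least $qB_{\OPT}$; now the heavy depth dominates the copy-routing cost $\log\ell$ for \emph{every} value of $B_{\OPT}$. Second, rather than sorting light elements with singleton tests (depth $\sim n_0$), it adds bit-tests that are \emph{gated on a set}: the type-2 tests with parameters $(i_1,i_2,j,t)$ fire only when $h_3\in S_j$, so each light element is finished in $O(\log n_0)$ further queries, yet for the purpose of isolating $\perp$ such a test is no more useful than the corresponding type-1 set test and the cover lower bound survives. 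With these two ideas the reduction becomes approximation-preserving, $\tfrac12 qB_{\OPT}<C_{\OPT}<3qB_{\OPT}$, with no assumption on $B_{\OPT}$ whatsoever, and the factor $\tfrac{r}{12}\log n$ drops out directly from $n^r\le n_0$ and the $\tfrac12\log n_0$ hardness of \textsc{Set~Cover}.
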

\begin{proof}
We design a reduction from \textsc{Set~Cover} to $\DT(2n^r \log n)$.
Suppose we are given a \textsc{Set~Cover} instance with $n_0$ elements, $M$ sets $\{S_i\}$ where $S_i \subseteq [n_0]$, and an optimal cover of size $B_{OPT}$.
In polynomial time, we construct a $\DT(2n^r\log n)$ instance on $n\le n_0^{1/r}$ hypotheses such that, if $C_{OPT}$ is the optimal decision tree cost, then, for some $q$,
\begin{align}
\frac{1}{2} qB_{OPT}< C_{OPT}< 3q B_{OPT}.
\end{align}
The theorem follows as a $\frac{1}{2}\log n_0$ approximation to \textsc{Set~Cover} is NP-hard, and here $n^r\le n_0$.

Let $q = \floor{\frac{1}{r}\log n_0}$. 
Let $\ell = \floor{n_0^{1/r} / (n_0 q + 1)}$.
Identify the hypotheses by elements of $[\ell] \times ([q] \times [n_0] \cup \{\perp\})$.
In this way, there are $n \sim n_0^{1/r}$ hypotheses.
Let the elements of $[\ell]\times \{\perp\}$ have weight $\frac{1}{2n} + \frac{1}{2\ell}$, and let $p_h=\frac{1}{2n}$ for all other hypotheses $h\in[\ell]\times [q]\times [n_0]$.
In this way, for $n$ sufficiently large, we have $p_{max}/p_{min} = 1 + \frac{n}{\ell} \le 2n^r \log n$.
Create tests of the following forms:
\begin{enumerate}
    \item For each $i_1 \in[\ell], i_2 \in [q], j \in [M]$, define a test that outputs $1$ on hypotheses $(h_1,h_2,h_3)$ if and only if $i_1=h_1,i_2=h_2,h_3 \in S_{j}$.
    \item For each $i_1 \in[\ell], i_2 \in [q], j \in [M]$, and $t\in[0,\log n_0]$, define a binary test that outputs 1 if and only if $i_1=h_1,i_2=h_2,h_3\in S_{j}$, and the $t$th bit of $h_3$'s binary representation is 1.
    \item For each $t\in[0, \log \ell]$, define a binary test on $(h_1,*)$ that outputs 1 if and only if the $t$th bit of $h_1$'s binary representation is $1$.
\end{enumerate}

Consider a cover of $[n_0]$ using $B_{\text{OPT}}$ of the $M$ sets.
We can define a tree that, given a hypothesis $h=(h_1,*)$, first determines $h_1$ using tests of type 3.
Then, in each subtree, the set of consistent hypotheses is exactly $\{h_3\}\times ([q]\times [n_0]\cup\{\perp\})$.
In each subtree, one can isolate the $\perp$ hypothesis in $qB_{OPT}$ queries, using tests of type 1, and use tests of type 2 to identify the remaining hypotheses in $1+\log n_0$ tests each.
In each subtree, all of the hypotheses of the form $(h_1,\perp)$ have weight at most $\frac{1}{\ell}$ and depth at most $qB_{OPT}$, and all other hypotheses have weight $\frac{1}{2n}$ and depth at most $qB_{OPT} + 1+\log n_0$, so their total contribution to the cost of the tree is at most $qB_{OPT} + 1 + \log n_0$.
Assuming $n_0$ is sufficiently large,
\begin{align}
C_{OPT}
\ \leq \ 1+\log \ell + 2q B_{OPT} + \log n_0
\ < \ 3q B_{OPT}.
\end{align}

Now suppose we are given a solution to the $\DT(2n^r\log n)$ with cost $C_{OPT}$.
In the optimal tree, for all hypotheses $(h_1,\perp)$, at least $qB_{OPT}$ tests of type-1 or type-2 must appear on the root-to-leaf path of $(h_1,\perp)$: if not, there exists $h_2\in[q]$, such that at most $B_{OPT}-1$ tests of type-1 with parameters $(h_1,h_2,j)$ or type-2 with parameters $(h_1,h_2,j,t)$ were used. By taking the indices $j$ used in these tests, there are at most $B_{OPT}-1$ sets covering $[n_0]$, which is a contradiction.
Thus, each hypothesis $(h_1,\perp)$ has depth at least $qB_{OPT}$.
Since the hypotheses $(h_1,\perp)$ account for at least half of the weight of the hypotheses, we have
\begin{align}
  C_{OPT}\ge \frac{1}{2}qB_{OPT}.
\end{align}
This completes the proof.
\end{proof}

\section{Rounding weights}
\label{app:H}

\begin{proposition}
Suppose a \textsc{Decision~Tree} instance has weights $p_1,\dots,p_n$ and a cost function $C(\cdot)$. Then, there exist weights $p_1',\dots,p_n'$ such that $\min_i p_i' \geq \frac{1}{n(n-1)}$ and the cost function $C'(\cdot)$ of the associated instance satisfies $|C'(\mathcal{T}) - C(\mathcal{T})| \le 1$ for all decision trees $\mathcal{T}$.
\end{proposition}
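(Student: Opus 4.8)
The plan is to take $p'$ to be a convex combination of the given weights $p$ with the uniform distribution. Assuming $n\ge 2$ (the case $n=1$ being vacuous), set $\varepsilon\defeq\frac{1}{n-1}$ and define $p_i'\defeq(1-\varepsilon)p_i+\frac{\varepsilon}{n}$ for each $i\in[n]$. The easy verifications: $\sum_i p_i'=(1-\varepsilon)\cdot1+\varepsilon\cdot1=1$; the coefficients $1-\varepsilon=\frac{n-2}{n-1}$ and $\frac{\varepsilon}{n}$ are nonnegative, so $p'$ is a valid weight vector; and $p_i'\ge\frac{\varepsilon}{n}=\frac{1}{n(n-1)}$, giving the required lower bound on $\min_i p_i'$.

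For the cost comparison, I would expand $C'(\mathcal{T})=\sum_i p_i' d_\mathcal{T}(i)=(1-\varepsilon)\sum_i p_i d_\mathcal{T}(i)+\varepsilon\cdot\frac1n\sum_i d_\mathcal{T}(i)=(1-\varepsilon)C(\mathcal{T})+\varepsilon\,U(\mathcal{T})$, where $U(\mathcal{T})\defeq\frac1n\sum_i d_\mathcal{T}(i)$ is the cost of $\mathcal{T}$ under uniform weights. Hence $C'(\mathcal{T})-C(\mathcal{T})=\varepsilon\bigl(U(\mathcal{T})-C(\mathcal{T})\bigr)$, so it suffices to show $|U(\mathcal{T})-C(\mathcal{T})|\le n-1$. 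The key observation is that $d_\mathcal{T}(h)\le n-1$ for every hypothesis $h$: along the root-to-leaf path of $h$, each interior vertex splits its set of consistent hypotheses into at least two nonempty parts (a vertex whose test leaves $L(v)$ undivided is redundant and can be suppressed, which only decreases both $C$ and $C'$, so it suffices to treat such reduced trees), and therefore the consistent set strictly shrinks at each step from $[n]$ at the root down to a singleton. Consequently $C(\mathcal{T})$ and $U(\mathcal{T})$ are both weighted averages of the numbers $\{d_\mathcal{T}(h)\}_h\subseteq[0,n-1]$, hence each lies in $[0,n-1]$, so $|U(\mathcal{T})-C(\mathcal{T})|\le n-1$. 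Combining, $|C'(\mathcal{T})-C(\mathcal{T})|=\varepsilon\,|U(\mathcal{T})-C(\mathcal{T})|\le\frac{n-1}{n-1}=1$, as desired.

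The only delicate point I anticipate is the uniform depth bound $d_\mathcal{T}(h)\le n-1$, which fails for degenerate decision trees that contain chains of tests separating nothing; I expect the cleanest resolution is precisely the parenthetical remark above, restricting attention to decision trees in which every interior vertex genuinely splits its consistent set (this loses nothing for the optimization, and in particular the greedy and optimal trees to which this proposition is applied are of this form). Everything else in the argument is a one-line verification, so I would not expect further obstacles.
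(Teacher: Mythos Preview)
Your proof is correct and takes a slightly different route from the paper. The paper rounds by flooring each weight at $\frac{1}{(n-1)^2}$ and then renormalizing, i.e.\ $w_i'=\max\bigl(p_i,\tfrac{1}{(n-1)^2}\bigr)$ and $p_i'=w_i'/\sum_j w_j'$; it then bounds $C'(\mathcal{T})$ above and below by $C(\mathcal{T})\pm 1$ using $W'\in[1,\tfrac{n}{n-1}]$ together with the depth bound $d_\mathcal{T}(h)\le n$. Your convex-combination construction $p_i'=(1-\varepsilon)p_i+\varepsilon/n$ is arguably cleaner, since the weights automatically sum to $1$ and the cost difference collapses to the single expression $\varepsilon\bigl(U(\mathcal{T})-C(\mathcal{T})\bigr)$. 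Both arguments ultimately rest on the same depth bound $d_\mathcal{T}(h)\le n-1$ (or $n$); you are explicit about the caveat regarding degenerate trees with non-splitting interior vertices, whereas the paper's proof uses the bound tacitly. Neither approach dominates the other---they are two natural ways to perturb toward uniform---but yours has the minor expository advantage of avoiding the normalization bookkeeping.
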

\begin{proof}
Let $w_i'=\max(p_i, \frac{1}{(n-1)^2})$ and define $W' = \sum_i w_i'$. 
We have $W' \geq 1$ and $W' \leq 1 + \frac{n-1}{(n-1)^2} = \frac{n}{n-1}$.
Let $p_i' = \frac{w_i'}{W'}$, so that $p_i' \ge \frac{1}{n(n-1)}$ for all $i$.
Hence, for all decision trees $\mathcal{T}$,
\begin{align}
C'(\mathcal{T}) \ &= \   \frac{1}{W'} \sum_i w_i' d_\mathcal{T}(i) \geq \frac{1}{W'} \sum_i p_i d_\mathcal{T}(i)  \geq  \frac{n-1}{n} C(\mathcal{T}) \ge C(\mathcal{T})-1 \nonumber\\
C'(\mathcal{T}) \ &= \   \frac{1}{W'} \sum_i w_i' d_\mathcal{T}(i) \leq  \sum_i \left(p_i + \frac{1}{n^2}\right) d_\mathcal{T}(i)  \leq C(\mathcal{T}) + \sum_i \frac{1}{n^2} n \leq C(\mathcal{T}) + 1.\qedhere
\end{align}
\end{proof}

\end{document}